\documentclass[preprint,11pt,3p]{elsarticle}
\usepackage{graphicx} 
\usepackage{amsmath}
\usepackage{amsthm}
\usepackage{amssymb}
\newcommand{\R}{\mathbb{R}}

\usepackage{changepage}
\usepackage[dvipsnames]{xcolor}
\usepackage{enumerate}
\usepackage{url}
\usepackage{comment}
\usepackage{tikz}
\usetikzlibrary{positioning,arrows.meta,fit,backgrounds}

\definecolor{softblue}{RGB}{220,235,255}
\definecolor{softorange}{RGB}{255,235,205}

\tikzset{
  vertex/.style={
    circle,
    draw=black,
    fill=white,
    minimum size=8mm,
    inner sep=0pt,
    font=\sffamily\small
    },
  	bluevertex/.style={
		vertex,
		fill=blue!25,
		draw=blue!75!black,
		very thick
	},
	redvertex/.style={
		vertex,
		fill=red!25,
		draw=red!75!black,
		very thick
	},
	orangevertex/.style={
		vertex,
		fill=orange!25,
		draw=orange!75!black,
		very thick
	},
    normaledge/.style={
    draw=black,
    line width=0.7pt
   },
   rededge/.style={
    draw=red,
    line width=1.4pt
   },
   arc/.style={
    -{Stealth[length=2.4mm,width=1.8mm]},
    draw=black,
    line width=0.75pt
   },
   cutarc/.style={
    -{Stealth[length=2.4mm,width=1.8mm]},
    draw=red!75!black,
    dashed,
    line width=1pt
   },
  setboxS/.style={
    draw=blue!65!black,
    fill=softblue,
    fill opacity=0.45,
    rounded corners=8pt,
    line width=1pt,
    inner sep=9pt
   },
  setboxT/.style={
    draw=orange!80!black,
    fill=softorange,
    fill opacity=0.55,
    rounded corners=8pt,
    line width=1pt,
    inner sep=9pt
  },
  setlabel/.style={
    font=\sffamily\bfseries,
    text opacity=1
  },
  title/.style={
    font=\sffamily\bfseries
  }
}

\newtheorem{prop}{Proposition}

\newtheorem{lemma}[prop]{Lemma}

\newtheorem{cor}[prop]{Corollary}
\newtheorem{thm}[prop]{Theorem}

\newtheorem{claim}[prop]{Claim}

\newenvironment{claimproof}[1][Proof of Claim]{%
  \begin{proof}[#1]
}{%
  \end{proof}%
}

\newcommand{\fixb}[1]{\textcolor{blue}{#1}}

\newcommand{\calC}{\mathcal{C}}

\usepackage[numbers]{natbib}

\begin{document}

\begin{frontmatter}
\title{The Strong Total Cover Problem}

\author[label1]{Manuel Aprile}
\ead{manuel.aprile@unipd.it}

\author[label2]{Luca Ferrarini}
\ead{ferrarini@lipn.univ-paris13.fr}

\address[label1]{Università degli Studi di Padova,
Mathematics Department, Padova, 35121, Italy}
\address[label2]{Université Sorbonne Paris Nord, CNRS, Laboratoire d’Informatique de Paris Nord, Villetaneuse, 93430, France}

\begin{abstract}
The Strong Total Cover Problem (STCP) is a novel covering problem defined on a graph $G=(V,E)$, where the objective is to find a minimum-cardinality subset $T \subseteq V \cup E$ such that every element of $V \cup E$ is covered. In this setting, a vertex covers itself and all its incident edges, while an edge covers itself together with its two endpoints. The STCP resembles classical covering problems such as Vertex Cover and Edge Cover problems, and naturally arises as the dual of the total matching problem. In this paper, we investigate the STCP from algorithmic, complexity, and polyhedral perspectives.
We first establish close connections between STCP, the Dominating Set Problem, and the Maximum Induced Matching Problem.
These relationships provide structural insights, imply the NP-hardness of STCP, and identify graph classes for which the problem is polynomial-time solvable. We then investigate the weighted version of STCP, showing that it remains NP-hard even on complete graphs with binary weights. On the other hand, we present a 2-approximation algorithm, and a polynomial-time algorithm for complete bipartite graphs.
On the polyhedral side, we introduce the Strong Total Cover Polytope and derive several families of valid inequalities, some of which are shown to define facets. Finally, we provide a complete linear description of the associated polytope for triangular cactus graphs, revealing an interesting contrast with the considerably more intricate facial structure of the dominating set polytope on cactus graphs.
\end{abstract}

\begin{keyword}
Strong Total Cover \sep Total Matching \sep Polyhedral Combinatorics
\sep Approximation Algorithms
\end{keyword}
\end{frontmatter}


\section{Introduction}
Min--max phenomena are a recurrent theme in combinatorial optimization: often,
they arise from the tension between packing pairwise incompatible objects
(e.g., finding a maximum matching in a graph) and covering a set of elements with
given objects (e.g., finding a minimum vertex cover). A standard way to
formalize such relations is through linear programming duality. 
This paper studies a new covering problem that naturally arises from the dual linear program of the total matching problem. The latter generalizes two of the most fundamental packing problems in graphs, that is, the maximum independent set problem and the maximum matching problem. 
Given a simple graph $G=(V,E)$, let $D = V \cup E$ be the set of its \emph{elements}, that is, the set of vertices and edges.
Elements $a, b \in D$ are said to be {\it adjacent} if $a$ and $b$ are adjacent vertices, or incident edges, or $a$ is an edge incident to vertex $b$, or vice versa. A \textit{total matching} is a subset of $D$ whose elements are pairwise non-adjacent, and the \emph{total matching problem} (TMP) asks to find a maximum cardinality total matching in a graph. The TMP is NP-hard \cite{ TotalMatching, Manlove} and it has received some recent attention (see \cite{Polyhedra, TotalMatchingComplete}), also for its relationship with total coloring \cite{behzad1965graphs,vizing1968some}. Indeed, a total coloring is a partition of the elements of a graph into total matchings.

 While the concepts of total matching and total coloring appear in the literature, the natural dual problem, to the best of our knowledge, has not been previously investigated. A \textit{strong total cover} of a graph is a subset of its elements that covers every element, where we say that a vertex covers itself and all incident edges, and an edge covers itself and its two endpoints. The \emph{strong total cover problem} (STCP) asks for a strong total cover of minimum size\footnote{We stress that a different problem called Total Covering Problem, or Mixed Dominating Set Problem, has been studied in recent years \cite{ALAVI1992229, TotalMatching, ErdosTotal,Manlove}. The covering relationship in this problem is more permissive than in STCP, since it allows vertices and edges to cover any adjacent elements, hence we use the word \lq\lq strong'' for our problem, and we omit \lq\lq minimum'' for brevity.}.



One way to arrive at the STCP is through the following natural formulation of
the total matching problem. We recall that, for a graph $G=(V,E)$ and $v\in V$, $\delta(v)$ denotes the set of edges incident to $v$ and $\delta[v]=\delta(v) \cup \{v\}$. 

\begin{align}\label{lp_tm}
\max & \sum_{v\in V} x_v + \sum_{e\in E} y_e & \tag{TM}\\
& x_v + \sum_{e\in \delta(v)} y_e \leq 1  & \forall v \in V \notag\\
& x_v + x_u + y_e \leq 1  & \forall e=(u,v) \in E \notag\\
& 0 \leq x_v\leq 1,\; 0\leq y_e\leq 1& \forall v \in V, \forall e \in E. \notag  
\end{align}

Clearly, imposing integrality constraints on \eqref{lp_tm} one obtains exactly the TMP.
Notice that the constraint matrix of \eqref{lp_tm}  is symmetric. Taking the dual of \eqref{lp_tm} one obtains the linear relaxation of the STCP:

\begin{align}\label{lp_tc}
\min & \sum_{v\in V} x_v + \sum_{e\in E} y_e & \tag{TC} \\
& x_v + \sum_{e\in \delta(v)} y_e \geq 1  & \forall v \in V \notag \\
& x_v + x_u + y_e \geq 1  & \forall e=(u,v) \in E \notag \\
& 0 \leq x_v\leq 1,\; 0\leq y_e\leq 1& \forall v \in V, \forall e \in E. \notag
\end{align}

Let us define \[\alpha_{T}(G):= \max \{|T| : T \mbox{ is a total matching of} G\};\] \[\beta^{*}_T(G) := \min \{ |C| : C \mbox{ is a strong total cover of }G\},\]  or equivalently, the optimal integer values of \eqref{lp_tm}, \eqref{lp_tc} respectively.
Weak duality applied to \eqref{lp_tm} and \eqref{lp_tc} immediately gives, for any graph $G$,
\begin{align}\label{ineq_duality}
\alpha_T(G)\le \beta_T^*(G).\tag{$\star$}
\end{align}


This duality relationship is the starting point of our work. It suggests two
natural questions: when do the packing and covering values coincide, and when
is \eqref{lp_tc} integral? While answering these questions, we noticed that the STCP turns out to have several independent combinatorial
interpretations. We develop these connections first: on the one hand, we show that the STCP is equivalent to the dominating set problem in the class of incidence graphs; on the other hand, we give a Gallai-type identity, showing equivalence to the maximum induced matching problem.  We then investigate the computational complexity of the weighted version of STCP, and finally we initiate a polyhedral study of the
associated covering polytope, giving compact linear descriptions for non-trivial graph classes.
\smallskip 

\noindent {\bf Contributions and organization of the paper.}

In Section \ref{sec:first} we describe the relationship of the STCP with other concepts in graph theory, in particular we relate strong total covers to: \begin{itemize}
    \item dominating sets of incidence graphs: this equivalence implies that the relaxation \eqref{lp_tc} is integral for trees and forests, as is also the case for total matching;
    \item induced matchings: this equivalence shows NP-hardness of STCP;
    \item clique covers of total graphs: this adds to inequality \eqref{ineq_duality} and points to a special class of graphs (triangular cacti) which we investigate later.
\end{itemize}

In Section \ref{sec:complexity}, we give further results on the computational complexity of the weighted version of STCP, where weights are nonnegative. Our results are threefold: we show that the problem is NP-hard, even when restricted to complete graphs with 0/1 weights; we complement this hardness result by giving a 2-approximation algorithm, which is best possible under the Unique Games Conjecture (and $P\neq NP$), also settling the integrality gap of \eqref{lp_tc}; finally, we give a polynomial-time algorithm when the input graph is complete bipartite, a class that has already been object of interest in the context of TMP \cite{TotalMatchingComplete}.

In Section \ref{sec:polyhedral}, we initiate a polyhedral study of the associated Strong Total Cover Polytope, by proposing facet-defining inequalities and by giving complete linear descriptions when the input graph is a tree, a cycle or a class of graphs called triangular cactus graphs\footnote{We recall that a graph is a cactus if its blocks are only cycles and single edges.}. The proof of the latter result is the most technical contribution in the paper, as it requires showing that the proposed
formulation is preserved under $1$-sums, even though this operation may
create facets that are not required for either summand separately. Whether operations like 1-sum or clique-sum preserve formulations or create new complex facets has
been investigated in several contexts, see \cite{Chvatal, aprile2022extended, aprile2024slack}. Our interest in this subclass of cactus graphs is motivated by the
connection with domination in incidence graphs. Indeed, the dominating set polytope of general cactus graphs has a rich and poorly understood facial
structure: in \cite{DominatingSetPolytope}, facets with large coefficients are built through repeated 1-sums.
 Our result on the Strong Total Cover Polytope yields a description of the dominating set polytope of incidence graphs of triangular cactus graphs (which are cactus themselves), adding to the very few classes for which a complete description in the original variable space is known. We remark that the weighted dominating set problem is polynomial-time solvable on cactus graphs, and the polytope admits a compact extended formulation \cite{baiou2014dominating}; since incidence graphs of cactus graphs are cactus, both results carry to STCP. 



\section{First observations and relations with other problems}\label{sec:first}
Throughout the paper, let $G=(V,E)$ be a graph and denote by $D=V\cup E$ its elements.
We start with a simple observation. 
\begin{prop}\label{prop:bounds}
Let $G=(V,E)$ be a graph with $E\neq \emptyset$. Then, the following hold.
\begin{enumerate}
    \item $\beta^*_T(G)\leq 
    |V|-1$;
    \item $\alpha_{T}(G)\leq \beta_{T}^{*}(G)\leq 2\alpha_{T}(G)-1.$
\end{enumerate}
\end{prop}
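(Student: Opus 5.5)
The first inequality will come from an explicit cover. The upper bound in the second will follow by combining it with a lower bound on $\alpha_T(G)$ obtained from a maximal matching. The lower bound $\alpha_T(G)\le\beta_T^*(G)$ is just weak duality \eqref{ineq_duality}, so nothing more is needed there.

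\emph{Item 1.} Since $E\neq\emptyset$, fix an edge $e=uv$ and set $C=(V\setminus\{u,v\})\cup\{e\}$, so $|C|=|V|-1$. We check that $C$ is a strong total cover.
\begin{itemize}
\item Every vertex other than $u,v$ lies in $C$ and covers itself.
\item The vertices $u$ and $v$ are covered by $e$.
\item Every edge $f\neq e$ has an endpoint outside $\{u,v\}$, because $G$ is simple. That endpoint lies in $C$ and covers $f$.
\item The edge $e$ is in $C$ and covers itself.
\end{itemize}
Hence $\beta_T^*(G)\le|V|-1$.

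\emph{Item 2, upper bound.} By item 1, it suffices to prove $\alpha_T(G)\ge |V|/2$, because then $\beta_T^*(G)\le |V|-1\le 2\alpha_T(G)-1$.

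Take a maximal matching $M$ of $G$ with $|M|=m$, and let $V(M)$ be its set of matched vertices. By maximality, $I_0=V\setminus V(M)$ is an independent set. We claim $I_0\cup M$ is a total matching.
\begin{itemize}
\item Vertices of $I_0$ are pairwise non-adjacent, since $I_0$ is independent.
\item Edges of $M$ are pairwise non-incident, since $M$ is a matching.
\item No edge of $M$ is incident to a vertex of $I_0$, since the endpoints of edges in $M$ are exactly the vertices of $V(M)$.
\end{itemize}
Therefore $\alpha_T(G)\ge |I_0|+|M|=|V|-2m+m=|V|-m$. Since $m\le |V|/2$, this gives $\alpha_T(G)\ge |V|/2$, which completes the argument.

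The only point needing care is getting the right lower bound on $\alpha_T(G)$. One might first try to build a cover directly from a maximum total matching, but the direct cover $V$ costs one element too many. Using item 1 together with the total matching $I_0\cup M$ avoids this and removes the need for any case analysis.
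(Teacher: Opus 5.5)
Your proof is correct and follows essentially the same route as the paper. Both use the cover $(V\setminus\{u,v\})\cup\{e\}$ for item 1, weak duality for the lower bound, and the total matching formed by a maximal matching plus its exposed vertices to get $\alpha_T(G)\ge |V|/2$.
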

\begin{proof}
The first part follows as a strong total cover of $G$ can be constructed by choosing any edge $e$ of $G$ and all the vertices that are not endpoints of $e$. 

For the second part, the first inequality is \eqref{ineq_duality} and follows by LP duality as already remarked. For the second inequality, consider first a \lq\lq perfect'' total matching of $G$: this is obtained by taking a maximal matching $M$ of $G$, and the vertices exposed by the matching. This has cardinality $|V|-|M|\geq \frac{|V|}{2}$, implying $\alpha_{T}(G)\geq \frac{|V|}{2}$. This, together with the first part, implies the second inequality.
\end{proof}

Notice that the union of a vertex cover and an edge cover is a strong total cover. Hence, assuming that $G$ has no isolated vertex so that edge covers exist, and denoting by $\beta(G)$ (resp. $\beta'(G)$) the minimum cardinality of a vertex cover (resp. edge cover) of $G$, we have \[\beta^*_T(G) \leq \beta(G)+\beta'(G).\] 
However, a consequence of Gallai's Theorem \cite{Gallai1959} is that $\beta(G)+\beta'(G) \geq |V|$, hence the above bound is strictly weaker than that given by the first part of Proposition \ref{prop:bounds}. On the other hand, it is not hard to see that all the bounds in Proposition \ref{prop:bounds} can be achieved with equality: for instance, a complete graph $K_n$ has $\beta^*_T(K_n)=n-1$ and $\alpha_T(K_n)=\frac{n}2$ for even $n$. We will show two infinite families of graphs satisfying inequality \eqref{ineq_duality} with equality.
\medskip

The questions outlined in the introduction about the relationship between the TMP and the STCP are related to more general questions of packing and covering in hypergraphs and set systems: consider the hypergraph $\mathcal{H}_G$ whose vertices are the elements of a graph $G$, the vertices and the edges of $G$, and whose hyper-edge set is: \[
\mathcal{E}:=\bigl\{\{v\}\cup\delta(v):v\in V\bigr\}
\cup
\bigl\{\{u,v,(u,v)\}:(u,v)\in E\bigr\}.\]
Then, a total matching is a packing of $\mathcal{H}_G$, and a strong total covering is a transversal of $\mathcal{H}_G$. Recall that the incidence graph \(I(G)\) is the bipartite graph with vertex set
\(D=V\cup E\), where \(v\in V\) is adjacent to \(e\in E\) whenever \(v\) is an
endpoint of \(e\). Equivalently, \(I(G)\) is obtained from \(G\) by subdividing
every edge once. We observe that $\mathcal{H}_G$ is the closed neighborhood hypergraph of $I(G)$: indeed, its hyperedges are in bijective correspondence with $\{N[a]: a\in D\}$, where $N[a]$ denotes the closed neighborhood of element $a$ in $I(G)$.  Closed-neighborhood hypergraphs have been studied in several contexts,
including domination, balanced matrices, and algebraic properties of
neighborhood ideals; see, e.g., \cite{Farber1983,NasernejadQureshi2022}.

This observation allows us to derive integrality of the basic linear formulation for the class of forests.

\begin{prop}
    Let $G$ be a forest, then \eqref{lp_tm} and \eqref{lp_tc} are integral. Hence, $\alpha_{T}(G)= \beta_{T}^{*}(G)$.
\end{prop}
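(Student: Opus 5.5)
The plan is to use the observation just made: the constraint matrix of \eqref{lp_tc} (which equals that of \eqref{lp_tm}) is the closed-neighborhood matrix $N$ of the incidence graph $I(G)$. Its rows and columns are indexed by $D$, with $N_{a,b}=1$ iff $b\in N[a]$. If $G$ is a forest, then $I(G)$, being obtained by subdividing every edge once, is also a forest, and in particular it is chordal and strongly chordal.

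The key step is to invoke Farber's theorem \cite{Farber1983}: a graph is strongly chordal if and only if its closed-neighborhood matrix is totally balanced. In that case both the domination LP $\min\{\mathbf 1^\top z : Nz\ge \mathbf 1,\ z\ge 0\}$ and the packing LP $\max\{\mathbf 1^\top z: Nz\le \mathbf 1, z\ge 0\}$ have integral optimal solutions, and indeed the corresponding polyhedra are integral. This follows because totally balanced matrices are balanced, and by Fulkerson--Hoffman--Oppenheim and Berge, balanced $0/1$ matrices give integral set covering and set packing polyhedra. The upper bounds $z\le 1$ in \eqref{lp_tm} and \eqref{lp_tc} do not affect integrality: in the packing case they are implied by the rows, and in the covering case adding $z\le \mathbf 1$ preserves integrality for balanced matrices, which is part of the same theorem of Fulkerson, Hoffman and Oppenheim.

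Since \eqref{lp_tm} is exactly the set packing LP of $N$ and \eqref{lp_tc} the set covering LP, both are integral. Strong LP duality then gives equality of their optimal values, and integrality identifies these values with $\alpha_T(G)$ and $\beta^*_T(G)$. Hence $\alpha_T(G)=\beta_T^*(G)$.

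The main obstacle is only to make sure the cited results apply verbatim. One point is that trees are strongly chordal, since every induced subgraph has a simple vertex, namely a leaf. The other is that the box constraints and the possible presence of isolated vertices cause no issue, because an isolated vertex gives a trivial row $x_v\ge 1$. If one prefers to avoid Farber, a self-contained alternative is to show directly that $N$ contains no square submatrix that is the incidence matrix of an odd cycle. Such a submatrix would yield an odd cycle in the hypergraph $\mathcal H_G$, which translates into a cycle in $I(G)$ and is impossible in a forest. This gives balancedness, and the remaining steps are unchanged.
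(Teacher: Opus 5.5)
Your proposal is correct and follows essentially the same route as the paper: you write both LPs with the closed-neighborhood matrix of $I(G)$, note that a forest gives a strongly chordal $I(G)$, use Farber to get a totally balanced (hence balanced) matrix, apply Fulkerson--Hoffman--Oppenheim for integrality, and finish with strong duality using the symmetry of the matrix. Your explicit treatment of the upper bounds is a small improvement over the paper, which silently drops them. The Farber-free alternative in your last paragraph is only sketched, since a cycle submatrix yields a closed walk in $I(G)$ rather than immediately a cycle, but the main argument does not need it.
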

\begin{proof}
Let \(A\) be the closed-neighborhood matrix of
\(I(G)\). Thus, the rows and columns of \(A\) are indexed by the elements
\(D=V(G)\cup E(G)\), and
$
(A)_{ab}=1$ if and only if $b\in N[a]$, where again the neighborhood is on $I(G)$.
With this notation, and with $z=(x,y)\in \R^D$, \eqref{lp_tm} can be written as
\[
\{z\in \R^D_{\ge 0}:A z\le \mathbf 1\},
\]
and  \eqref{lp_tc} can be written as 
\[
\{z\in \R^D_{\ge 0}:Az\ge \mathbf 1\}.
\]

Since \(G\) is a forest, its incidence graph \(H=I(G)\) is also a forest.
In particular, \(H\) is strongly chordal. By Farber's characterization of
strongly chordal graphs \cite{Farber1983}, the closed-neighborhood matrix of a strongly chordal
graph is totally balanced. Hence \(A=N[H]\) is totally balanced, and therefore
balanced\footnote{
A \(0,1\)-matrix is \emph{balanced} if it contains no square submatrix of
odd order with exactly two \(1\)'s in every row and every column. It is
\emph{totally balanced} if it contains no square submatrix, of any order at
least \(3\), with exactly two \(1\)'s in every row and every column.
}. 
By a theorem of Fulkerson, Hoffman and Oppenheim \cite{FulkersonHoffmanOppenheim1974} on balanced matrices, we have that both
\eqref{lp_tm} and \eqref{lp_tc} are integral.

Finally, the two linear relaxations are dual linear programs (notice that $A$ is symmetric), hence their integrality, together with strong duality, implies that \(
\alpha_T(G)=\beta_T^*(G)
\).
\end{proof}
 In fact,  the theorem of Fulkerson, Hoffman and Oppenheim implies integrality for all nonnegative
 objectives, and the linear systems of \eqref{lp_tm}, \eqref{lp_tc} are totally dual integral, when $G$ is a forest. 

There are other graphs for which inequality \eqref{ineq_duality} holds with equality, in particular consider a cycle $G=C_{3k}$ for $k\geq 1$, and label its vertices from $1$ to $3k$: the set $\{1, (2,3), 4,\dots, 3k-2, (3k-1,3k)\}$ obtained by alternating vertices and edges is both a total matching and a strong total cover, meaning that is optimal for both problems. One can check that \eqref{lp_tc} is integral for $k=1$, but is fractional for $k=2$, while \eqref{lp_tm} is fractional already for $k=1$ .

\subsection{Equivalence with the dominating set problem}
 A set $S$ of vertices of a graph is a \emph{dominating set} if every vertex not in $S$ has a neighbor in $S$. An equivalent way to reformulate our observation on closed-neighborhood hypergraphs is the following:

\begin{prop}\label{prop:dominating}
A set $S\subset V\cup E$ is a strong total cover of $G$ if and only if it is a dominating set of $I(G)$.    
\end{prop}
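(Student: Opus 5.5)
The plan is to unfold both definitions element by element and observe that they impose exactly the same condition on each element of $D=V\cup E$. The key fact is the one noted just before the statement: in $I(G)$ the closed neighborhoods are
\[
N[v]=\{v\}\cup\delta(v) \quad (v\in V), \qquad N[e]=\{e,u,w\} \quad (e=(u,w)\in E).
\]
This holds because $I(G)$ is bipartite with sides $V$ and $E$, and $v\sim e$ in $I(G)$ exactly when $v$ is an endpoint of $e$. So I will first record these two formulas.

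Next I will rewrite both notions as hitting conditions. A set $S$ is a dominating set of $I(G)$ if and only if $S\cap N[a]\neq\emptyset$ for every $a\in D$; elements of $S$ dominate themselves, and every other element needs a neighbor in $S$. For strong total covers, I will check which elements of $S$ cover a given $a\in D$.
\begin{itemize}
\item A vertex $v$ is covered by itself or by an edge having $v$ as an endpoint. It is not covered by another vertex, since a vertex covers only itself and its incident edges. Hence $v$ is covered iff $S\cap(\{v\}\cup\delta(v))\neq\emptyset$.
\item An edge $e=(u,w)$ is covered by itself or by one of its endpoints $u,w$. It is not covered by another edge, since an edge covers only itself and its endpoints. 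Hence $e$ is covered iff $S\cap\{e,u,w\}\neq\emptyset$.
\end{itemize}
This is just the hyperedge description of $\mathcal{H}_G$ read off from the covering rule.

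Comparing the two bullets with the formulas for $N[v]$ and $N[e]$, the condition "every element of $D$ is covered by $S$" coincides with "$S$ meets $N[a]$ for all $a\in D$", which gives both directions at once. There is no real obstacle here. The only point needing care is the asymmetry of the covering relation: a vertex does not cover adjacent vertices, and an edge does not cover incident edges. This asymmetry matches the fact that $I(G)$ is bipartite, so vertex–vertex and edge–edge pairs are never adjacent in it.
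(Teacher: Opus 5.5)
Your proof is correct and is essentially the paper's argument. The paper likewise just unfolds the definitions, observing that $S$ is a strong total cover exactly when every vertex not in $S$ has an incident edge in $S$ and every edge not in $S$ has an endpoint in $S$, which is domination in $I(G)$; your formulation in terms of hitting the closed neighborhoods $N[a]$ states the same observation a bit more explicitly.
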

\begin{proof}
    We have that $S\subset V\cup E$ is a strong total cover of $G$ if and only if every vertex not in $S$ has an incident edge in $S$, and every edge not in $S$ has an endpoint in $S$. This is exactly the same as $S$ being a dominating set of $I(G)$.
\end{proof}

This implies that the problem of finding a minimum (weight) strong total cover is equivalent to a dominating set problem in the restricted class of incidence graphs (equivalently, bipartite graphs where vertices on one side have all degree 2).

This has several algorithmic consequences for the STCP: first, the dominating set problem admits an asymptotically tight logarithmic approximation algorithm via set cover: however, in Section \ref{sec:complexity} we will see that STCP admits a 2-approximation. 
Moreover, polynomial-time solvability of the Dominating Set Problem on special classes of graphs can carry over to STCP, as the following proposition shows. 

\begin{prop}
The $STCP$ is polynomial-time solvable for series-parallel graphs, cactus graphs, and graphs of bounded treewidth.
\end{prop}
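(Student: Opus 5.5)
The plan is to use Proposition~\ref{prop:dominating}: a minimum (weight) strong total cover of $G$ is exactly a minimum (weight) dominating set of the incidence graph $I(G)$. The incidence graph can be built from $G$ in linear time, since it has $|V|+|E|$ vertices and $2|E|$ edges. It therefore suffices to show two things for each of the three classes: (i) if $G$ is in the class, then $I(G)$ lies in a class where minimum dominating set is polynomial-time solvable; (ii) a dominating set of $I(G)$ maps back to a strong total cover of $G$ of the same size. Part (ii) is immediate from Proposition~\ref{prop:dominating}, so all the work lies in part (i).

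For part (i), the key observation is that $I(G)$ is obtained from $G$ by subdividing every edge exactly once. I would then check each class in turn.

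\emph{Series-parallel graphs.} Subdividing an edge is a series composition with a path of length two, so series-parallel graphs are closed under subdivision. Equivalently, no $K_4$ minor is created, because subdivision does not change the set of topological minors relevant here.

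\emph{Cactus graphs.} Subdividing edges turns each cycle block into a longer cycle and each bridge into a path of two bridges. Hence every block of $I(G)$ is still a cycle or an edge, so $I(G)$ is a cactus, as already remarked in the introduction.

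\emph{Bounded treewidth.} If $G$ has treewidth $k\ge 1$, then subdividing an edge does not increase treewidth. Starting from a tree decomposition of $G$, for each edge $uv$ pick a bag $B$ containing both $u$ and $v$, and attach a new bag $\{u,v,e\}$ adjacent to $B$. This is a valid tree decomposition of $I(G)$ of width $\max(k,2)$, so $\mathrm{tw}(I(G))\le\max(\mathrm{tw}(G),2)$.

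Once part (i) is established, I would invoke the known polynomial algorithms for (weighted) dominating set on each class:
\begin{itemize}
\item linear-time algorithms for series-parallel graphs (Kikuno--Yoshida--Kakuda style);
\item polynomial-time algorithms for cactus graphs (e.g.\ \cite{baiou2014dominating});
\item Courcelle's theorem, or standard dynamic programming, for bounded treewidth, since domination is MSO-expressible and solvable in $O(3^{k}n)$ time.
\end{itemize}

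The only step needing any care is verifying closure under subdivision for each class, in particular the treewidth bound. The bag construction above handles treewidth. An alternative that I would expect to work, and that avoids the reduction entirely, is to run dynamic programming directly on a tree decomposition of $G$, with per-vertex states \{in $S$, dominated, not yet dominated\} and edges handled at the bag where they are introduced.
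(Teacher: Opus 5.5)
Your proposal is correct and follows essentially the same route as the paper: reduce to dominating set on $I(G)$ via Proposition~\ref{prop:dominating}, check that series-parallel graphs, cacti and bounded-treewidth graphs are closed under edge subdivision (using the same leaf-bag $\{u,e,v\}$ tree-decomposition construction for treewidth), and invoke known dominating set algorithms. The only differences are minor: you state the width bound as $\max(k,2)$ where the paper treats $k=1$ separately, and you cite different references; neither affects the argument.
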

\begin{proof}
Thanks to Proposition~\ref{prop:dominating}, it is enough to
check that the graph classes in the statement are preserved when passing from \(G\) to \(I(G)\), and then use the fact that the dominating set problem is polynomial-time solvable in those classes.

Series-parallel graphs are exactly those graphs that do not have \(K_4\) as a minor, and since edge subdivision
preserves minors, they are closed under edge subdivision. Hence the result
follows from~\cite{SeriesDominating}.

Clearly, if $G$ is a cactus then $I(G)$ is a cactus, and the result follows from \cite{CactusDominating}.

Finally, suppose that \(G\) has treewidth at most \(k\) and let us show that $I(G)$ has treewidth at most $k$. The case $k=1$ is obvious, hence let us consider $k\geq 2$. Let
\((\mathcal T,\{B_t\}_{t\in V(\mathcal T)})\) be a tree-decomposition of
\(G\) of width \(k\). For every edge \(e=(u,v)\in E(G)\), choose a bag \(B_t\)
with \(u,v\in B_t\), and attach to \(t\) a new leaf bag
\(
B_e:=\{u,e,v\},
\)
where \(e\) is now viewed as the subdivision vertex of \(I(G)\). The resulting
collection of bags is a tree-decomposition of \(I(G)\) of width $k$: the edges \((u,e)\) and
\((e,v)\) of \(I(G)\) are covered by \(B_e\), and the bags containing each
original vertex remain connected after adding such leaf bags. Hence
the result follows from the standard dynamic programming algorithm for
dominating set on bounded-treewidth graphs (see, e.g.,
\cite{Bodlaender1988}). 
\end{proof}

Finally, the dominating set problem admits a PTAS on planar graphs (\cite{Baker1994}, see also \cite{KanjPerkovic2004}), and this carries to STCP since, if $G$ is planar, so is any of its subdivisions (and hence $I(G)$).

\subsection{Equivalence with induced matchings}

A matching $M$ of $G$ is \emph{induced} if no edge of $E\setminus M$ has both endpoints covered by $M$.

\begin{prop}\label{prop:induced}
    Let $G=(V,E)$ be a graph and let $\alpha_{ind}(G)$ be the cardinality of the largest induced matching of $G$. Then
    \[
    \beta^*_T(G) + \alpha_{ind}(G) = |V|.
    \]
\end{prop}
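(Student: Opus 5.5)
We prove the two inequalities $\beta^*_T(G)\le |V|-\alpha_{ind}(G)$ and $\beta^*_T(G)\ge |V|-\alpha_{ind}(G)$ separately, Gallai-style, by converting an optimal object of one kind into a feasible object of the other.

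For the first inequality, let $M$ be a maximum induced matching and let $V(M)$ be the set of vertices it covers. Take $C:=M\cup (V\setminus V(M))$, so that $|C|=|V|-2|M|+|M|=|V|-|M|$. We check that $C$ is a strong total cover. Every vertex is covered: it either lies in $C$ or is an endpoint of an edge of $M\subseteq C$. Consider an edge $e=(u,v)$ not in $C$. If one of its endpoints is outside $V(M)$, that endpoint lies in $C$ and covers $e$. Otherwise both endpoints lie in $V(M)$ and $e\notin M$, which contradicts $M$ being induced. Hence $\beta^*_T(G)\le |V|-\alpha_{ind}(G)$.

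For the reverse inequality, take a minimum strong total cover $C$ and put it in a normal form. We aim to show that there is an optimal $C$ of the form $M\cup(V\setminus V(M))$ with $M$ an induced matching; then $|C|=|V|-|M|\ge |V|-\alpha_{ind}(G)$. The normalization proceeds by exchange steps, each of which preserves feasibility and does not increase $|C|$.
\begin{enumerate}
\item Let $M:=C\cap E$. If two edges $e,f\in M$ share a vertex $v$, or if $e\in M$ has an endpoint $v\in C$, we would like to replace an edge by a vertex. A cleaner route is a counting argument instead of local surgery. Let $W:=V\setminus C$ be the set of vertices not in $C$. Every $w\in W$ must have an incident edge in $C\cap E$. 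Every edge with both endpoints in $W$ must itself lie in $C$.
\item Choose a maximal set $M'\subseteq C\cap E$ such that each edge of $M'$ has an endpoint in $W$, and select greedily so that each edge of $M'$ is the chosen ``witness'' edge of some vertex of $W$. Counting gives $|C\cap E|\ge |W|-|M''|$, where $M''$ is the set of edges of $C$ with both endpoints in $W$ that serve two vertices at once.
\item Concretely, I would pick for each $w\in W$ one edge $e_w\in C$ incident to it, and let $F=\{e_w\}$. Then $|C|\ge |V|-|W|+|F|$. The graph $(W,F)$ has every vertex of degree at least one in $F$, so $|F|\ge |W|-\nu$, where $\nu$ is the number of edges of $F$ with both endpoints in $W$ chosen by both of them. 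These doubly-used edges form a matching $N$ on $W$, hence $|C|\ge |V|-|N|$.
\item Show that $N$ is induced. Suppose an edge $g\notin N$ joins two vertices of $V(N)\subseteq W$. Then $g$ has no endpoint in $C$, so $g\in C$. We may assume the witness choices were made to minimize $|N|$; the edge $g$ lies in $C$ but is not counted in $F$ (or could be used as an alternative witness), so it contributes an extra $+1$ to $|C|$. Therefore $N$ minus one of the edges at $g$ still satisfies $|C|\ge |V|-|N'|$. Iterating the removal yields an induced matching $N^*$ with $|C|\ge|V|-|N^*|$.
\end{enumerate}

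The main obstacle is this last step. The count must be made precise, so that each violating edge $g$ is charged to a distinct element of $C\setminus F\setminus V$ when several violations share edges. To handle this, I would define $N^*$ as a maximal induced submatching of $N$ chosen greedily, and then bound $|C\cap E\setminus F|\ge |N|-|N^*|$ through an injection from the removed edges of $N$ to violating edges of $C$. This bound gives $|C|\ge|V|-|N^*|\ge |V|-\alpha_{ind}(G)$.
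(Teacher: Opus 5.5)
Your proof is correct. The bound $\beta^*_T(G)\le |V|-\alpha_{ind}(G)$ is exactly the paper's argument, but for $\beta^*_T(G)\ge |V|-\alpha_{ind}(G)$ you take a different route.

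\textbf{The paper's route.} It performs the local surgery you set aside. In a cover $X\cup Y$, it replaces an edge of $Y$ that shares an endpoint with another edge of $Y$ by its other endpoint, until $Y$ is a matching. Then, whenever some $g\notin Y$ joins endpoints of $e,f\in Y$, an endpoint of $g$ (say the one on $f$) must lie in $X$, and $f$ is replaced by its other endpoint. The result is a cover, no larger, whose edge part is an induced matching and whose vertex part contains every exposed vertex, so its size is at least $|V|-|Y|$.

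\textbf{Your route.} You never modify $C$. The witness map gives $|C|\ge |V|-|W|+|F|$ with $|F|=|W|-|N|$ exactly, and you then prune $N$.

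\textbf{The step you flag as the main obstacle is immediate.} First, state why a violating edge $g$ is not in $F$, instead of assuming $|N|$ is minimized (that assumption is unnecessary). The only possible choosers of $g$ are its two endpoints, and both chose edges of $N$, while $g\notin N$. Hence the violating edges are distinct elements of $(C\cap E)\setminus F$, so $|C|\ge |V|-|N|+k$, where $k$ is their number. Now delete from the current matching one edge at an end of some violating edge. This kills that violating edge and creates no new ones, since an edge violating for a submatching of $N$ is already violating for $N$. So at most $k$ deletions give an induced $N^*$ with $|N^*|\ge |N|-k$. Your greedy/injection variant also works, but it is not needed.

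\textbf{What each route buys.} The paper's argument is shorter and yields a structural normal form: an optimal cover of the form $M\cup U$, with $M$ an induced matching and $U$ its exposed vertices. Yours gives the bound for every cover directly, without rechecking coverage after each exchange.

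Two cleanups: your steps 1--2 are leftover scaffolding and can be removed, and ``the graph $(W,F)$'' is loose, since edges of $F$ may leave $W$.
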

\begin{proof} We show that $\beta^*_T(G)=\min\{|M|+|U|: M$ induced matching, $\, U= \mbox{vertices exposed by $M$}\}$. Since the number of vertices exposed by a matching $M$ is $|V|-2|M|$, the quantity $|M|+|U|$ is equal to $|V|-|M|$ and minimized by a maximum induced matching, concluding the proof.

     Let $X\cup Y$ be a strong total cover of $G$, with $Y\subseteq E$, $X\subseteq V$. If $e,f\in Y$ are incident, with $f=(u,v)$ and $u$ an endpoint of $e$. Then replacing $f$ with $v$ yields another strong total cover: indeed, $f$ covers itself, $u, v$, and all three are covered by $v$ and $e$. Iterating this, we obtain that $Y$ is a matching. Assume now that there are edges $e,f\in Y$, $g=(u,v)\in E\setminus Y$ such that $u$ is an endpoint of $e$ and $v$ of $f$. Then $g$ must be covered by one of its endpoints, say $v$, in which case we can replace $f$ with its other endpoint and still have a strong total cover. Hence we obtain that $Y$ is an induced matching. 
     Now, let $v$ be a vertex not covered by $Y$: then $v$ must be in $X$. 
     This proves the $\geq$ inequality.

     For the reverse inequality, consider any induced matching $M$ and let $U$ be the set of exposed vertices. We observe that $M\cup U$ is a strong total cover: indeed, any edge not in $M$ has at least an endpoint that is exposed by $M$ (since $M$ is induced), hence in $U$, and any vertex not in $U$ is an endpoint of an edge of $M$.
\end{proof}

The formula above brings to mind Gallai's identity: $\alpha(G)+\beta(G)=|V|$. Let $H_G$ be the graph obtained as the square of the line graph of $G$ (where the square graph of $H$ is obtained from $H$ by adding an edge between any pair of vertices at distance at most 2). We remark that maximum induced matchings in $G$ exactly correspond to maximum independent sets in $H_G$. Hence, one has \[
\beta^*_T(G) = |V|-\alpha_{ind}(G) = |V|-\alpha(H_G)=|V|-|E|+\beta(H_G),
\]
where we used Gallai's identity and the fact that $|V(H_G)|=|E|$. Equivalently, solving the  STCP on \(G\) is the same, up to the
additive factor \(|V(G)|-|E(G)|\), as solving the vertex cover problem on \(H_G\). We stress that these equivalences do not carry over to the weighted version of STCP.

We conclude the section with some consequences of Proposition \ref{prop:induced} on the complexity of STCP. The following is the most important, and it follows from known facts on the maximum induced matching problem \cite{maxinducedmatching}:
\begin{cor}
    The STCP is NP-hard even on bipartite graphs, and it is polynomial-time solvable for chordal graphs.
\end{cor}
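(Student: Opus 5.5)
The plan is to derive both claims directly from the identity of Proposition~\ref{prop:induced}, $\beta^*_T(G) = |V| - \alpha_{ind}(G)$, together with known complexity results for maximum induced matching. The identity is a polynomial-time reduction in both directions. $|V|$ is trivially computable, so computing $\beta^*_T(G)$ is equivalent to computing $\alpha_{ind}(G)$ on the same graph $G$. In particular, the decision versions correspond: $\beta^*_T(G)\le k$ if and only if $\alpha_{ind}(G)\ge |V|-k$. Since $G$ itself is unchanged, any restriction on the graph class transfers verbatim.

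For hardness, I would invoke the result (Stockmeyer--Vazirani, Cameron; see \cite{maxinducedmatching}) that deciding whether a graph has an induced matching of size at least $k$ is NP-complete even when restricted to bipartite graphs. Given a bipartite instance $(G,k)$, we output $(G,|V|-k)$ as an STCP instance. By Proposition~\ref{prop:induced}, this is a yes-instance exactly when the original one is, which proves NP-hardness of STCP on bipartite graphs.

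For chordal graphs, I would use Cameron's result that maximum induced matching is polynomial-time solvable on chordal graphs. The reason is that the square of the line graph of a chordal graph is chordal, and maximum independent set is polynomial on chordal graphs; this is consistent with the graph $H_G$ discussed above. Having computed $\alpha_{ind}(G)$, we return $|V|-\alpha_{ind}(G)$. To output an actual optimal strong total cover rather than just its size, I would use the constructive direction of the proof of Proposition~\ref{prop:induced}: given a maximum induced matching $M$, the set $M\cup U$, where $U$ is the set of $M$-exposed vertices, is a strong total cover of size $|V|-|M|$.

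The only real obstacle is making sure the cited results apply to exactly these classes, so that no modification of the graph is needed. This is fine here, because the reduction is the identity on graphs.
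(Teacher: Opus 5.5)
Your proposal is correct and follows the paper's argument. The paper also derives the corollary directly from the identity $\beta^*_T(G)=|V|-\alpha_{ind}(G)$ of Proposition~\ref{prop:induced}, which leaves the graph unchanged, combined with the known NP-hardness of maximum induced matching on bipartite graphs and its polynomial-time solvability on chordal graphs \cite{maxinducedmatching}. Your extra details (the decision-version correspondence, and recovering an optimal cover $M\cup U$ from a maximum induced matching) spell out what the paper leaves implicit.
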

We stress that, while the TMP is NP-hard on bipartite graphs \cite{Manlove}, its complexity status on chordal graphs is an open question \cite{Manlove}. It is not clear whether the polynomial-time solvability of STCP on chordal graphs can help to solve TMP, since, as we now show, the optima of TMP and of STCP can be arbitrarily far apart in chordal graphs.


A graph is a split graph if its vertex set can be partitioned into a clique and a stable set. Split graphs are $2K_2$-free, i.e. they do not contain the graph made by two disjoint edges as an induced subgraph. This implies that $\alpha_{ind}(G)=1$ for any split graph $G$, which, thanks to Proposition \ref{prop:induced}, implies the following.

\begin{prop}\label{prop:threshold}
Let $G$ be a split graph with at least one edge. Then, $\beta^{*}_{T}(G)=|V|-1$.
\end{prop}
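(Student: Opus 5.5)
The plan is to apply Proposition \ref{prop:induced}, so it suffices to show that $\alpha_{ind}(G)=1$ for every split graph $G$ with at least one edge. Substituting into $\beta^*_T(G)+\alpha_{ind}(G)=|V|$ then gives $\beta^*_T(G)=|V|-1$.

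The lower bound $\alpha_{ind}(G)\ge 1$ is immediate. Since $E\neq\emptyset$, any single edge forms a matching, and it is trivially induced: no edge outside it can have both endpoints covered by it, because $G$ is simple and has no parallel edges.

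For the upper bound, I show that a split graph contains no induced matching with two edges. Let $V=K\cup S$ with $K$ a clique and $S$ a stable set, and suppose $e=(a,b)$ and $f=(c,d)$ are disjoint edges of an induced matching. Since $S$ is stable, each of $e$ and $f$ has at least one endpoint in $K$; say $a\in K$ and $c\in K$. As $a\neq c$ and $K$ is a clique, $(a,c)$ is an edge of $G$. It is not in the matching, yet both of its endpoints are covered by $\{e,f\}$, which contradicts inducedness. Hence $\alpha_{ind}(G)\le 1$, which is exactly the $2K_2$-freeness already noted in the text.

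No step is a real obstacle. The only care needed is the case analysis showing that each edge meets the clique side, and the observation that $a\neq c$ because the two edges are disjoint. As an independent sanity check on the upper bound, Proposition \ref{prop:bounds} already gives $\beta^*_T(G)\le|V|-1$ directly.
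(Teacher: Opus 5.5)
Your proof is correct and follows the paper's route: combine Proposition~\ref{prop:induced} with $\alpha_{ind}(G)=1$, which the paper obtains from the $2K_2$-freeness of split graphs. Your argument that two disjoint edges each have an endpoint in the clique, and these endpoints are adjacent, just supplies the justification for that $2K_2$-freeness, which the paper takes as known.
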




\begin{cor}\label{prop:total_cover_complete}
    Let $K_n$ be the complete graph with $n$ vertices, $n\geq 2$. Then
    \(
    \beta^*_T(K_n) = n-1.
    \)
\end{cor}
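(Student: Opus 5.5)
The plan is to obtain the corollary directly from Proposition~\ref{prop:threshold}, by checking that $K_n$ is a split graph with at least one edge.

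First, for $n\geq 2$ the graph $K_n$ has at least one edge, namely any pair of its vertices. Second, $K_n$ is a split graph: take the clique part to be all of $V$ and the stable set part to be empty. If one prefers both parts nonempty, take the clique to be $n-1$ vertices and the stable set to be the single remaining vertex; a single vertex is trivially a stable set. Proposition~\ref{prop:threshold} then gives $\beta^*_T(K_n)=|V|-1=n-1$.

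As a sanity check that does not depend on the split-graph framework, one can argue through Proposition~\ref{prop:induced}. Any two disjoint edges of $K_n$ are joined by an edge, so no induced matching has size two. Hence $\alpha_{ind}(K_n)=1$, since $n\geq 2$ guarantees at least one edge. The identity $\beta^*_T(K_n)+\alpha_{ind}(K_n)=n$ then yields the claim.

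There is no real obstacle. The only point needing care is that the degenerate split partition, with an empty stable set, is allowed; otherwise one uses the partition into $n-1$ and $1$ vertices.
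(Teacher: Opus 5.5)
Your proposal is correct and follows the paper's route. The paper states this corollary directly after Proposition~\ref{prop:threshold} as an immediate consequence of it, and that proposition rests on $\alpha_{ind}=1$ together with Proposition~\ref{prop:induced}, exactly as in your sanity check.
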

We remark that $\alpha_T(K_n)=\lceil \frac{n}{2}\rceil$, since a total matching of the complete graph consists of a matching and possibly one vertex.  Thus, even on chordal graphs, the additive gap between \(\alpha_T\) and
\(\beta_T^*\) can be arbitrarily large, and the ratio approaches \(2\).

\subsection{$\beta^*_T$ and the clique cover number of the total graph}

Given a graph $G$, a \emph{clique cover} is a collection of cliques covering all the vertices of $G$, and the \emph{clique cover number} $\bar\chi(G)$ is the smallest size of a clique cover of $G$. It holds that $\bar\chi(G) = \chi(\bar G)$. The \emph{total graph} of $G$, denoted by $T(G)$, is the graph whose vertices are the elements $D$ of $G$ and two elements are adjacent in $T(G)$ if they are adjacent or incident in $G$. Notice that $\alpha_T(G)=\alpha(T(G))=\omega(\overline{ T(G)})\leq \bar\chi(T(G))$.

\begin{prop}\label{prop:cliquecov}
    Let $G$ be a graph. Then $\bar\chi(T(G))\leq \beta^*_T(G)$. Moreover, if $G$ is triangle-free, then equality holds.
\end{prop}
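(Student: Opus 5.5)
The inequality $\bar\chi(T(G))\leq \beta^*_T(G)$ will follow from a minimum strong total cover $C$: I will turn $C$ into a clique cover of $T(G)$ with exactly $|C|$ cliques. For each chosen element $a\in C$, let $K_a$ be the set of elements covered by $a$. If $a=v$ is a vertex, then $K_v=\delta[v]=\{v\}\cup\delta(v)$. If $a=e=(u,v)$ is an edge, then $K_e=\{u,v,e\}$. Each $K_a$ is a clique in $T(G)$:
\begin{itemize}
\item in $\delta[v]$, any two edges share the endpoint $v$, and $v$ is incident to each of them;
\item in $\{u,v,e\}$, the vertices $u,v$ are adjacent, and $e$ is incident to both.
\end{itemize}
Since $C$ is a strong total cover, $\bigcup_{a\in C}K_a=D$, so $\{K_a\}_{a\in C}$ is a clique cover of size $|C|=\beta^*_T(G)$.

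For the reverse inequality when $G$ is triangle-free, I will first classify the maximal cliques of $T(G)$, showing that each is of the form $\delta[v]$ or $\{u,v,e\}$. The clique will then be replaced by the element $v$ or $e$, which covers (in the strong sense) exactly those elements. Let $K$ be a clique of $T(G)$, and split on the number of vertices of $G$ in $K$.
\begin{itemize}
\item \emph{At least two vertices.} The vertices of $K$ are pairwise adjacent in $G$, so by triangle-freeness there are exactly two, $u$ and $v$, joined by an edge $e$. Every edge in $K$ must be incident to both $u$ and $v$, so it equals $e$. Hence $K\subseteq\{u,v,e\}$.
\item \emph{Exactly one vertex $v$.} Every edge of $K$ is incident to $v$, so $K\subseteq\delta[v]$.
\item \emph{No vertices.} $K$ is a set of pairwise incident edges, i.e.\ a star or a triangle. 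A triangle is excluded, so all edges share a common vertex $v$ and $K\subseteq\delta[v]$. The degenerate cases of a single edge or the empty set are trivially contained in such a set.
\end{itemize}

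Given a minimum clique cover $\mathcal K$ of $T(G)$, I map each $K\in\mathcal K$ to an element $a_K$ with $K\subseteq K_{a_K}$, namely a vertex $v$ or an edge $e$ as above. The set $\{a_K : K\in\mathcal K\}$ then covers every element of $D$, so it is a strong total cover of size at most $|\mathcal K|$. Thus $\beta^*_T(G)\le\bar\chi(T(G))$.

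The main obstacle is the clique classification, specifically the case of a clique consisting only of edges: one must rule out the triangle of edges, which is exactly where triangle-freeness is used, along with the case of three mutually adjacent vertices. In a triangle $uvw$, the three edges form a clique of $T(G)$ that is not contained in any single $K_a$. This is why equality can fail in general.
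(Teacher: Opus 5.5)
Your proof is correct and follows essentially the same route as the paper. Both turn a strong total cover into a clique cover by stars $\delta[v]$ and edge-cliques $\{u,v,e\}$, and, for triangle-free $G$, show that every clique of $T(G)$ lies inside one of these and replace it by the corresponding element; the only difference is that the paper first lists the four clique types of $T(G)$ for arbitrary $G$, while you classify directly under the triangle-free hypothesis by counting the vertices in the clique.
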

\begin{proof}
    First, we observe that the cliques of $T(G)$ are of four types:
    \begin{enumerate}
        \item  stars of $G$ of the form $u\cup \delta(u)$ for $u\in V$;
        \item cliques of $G$;
        \item a single edge and its end-points (let us call these \emph{edge-cliques});
        \item the three edges in a triangle.
    \end{enumerate}

    Given a strong total cover $X\cup Y$ of $G$, consider the collection of stars centered at $u$ for $u\in X$ and of edge-cliques of $T(G)$ for $e\in Y$. It is easy to check that this is a clique cover of $T(G)$, showing that $\beta^*_T(G)\geq\bar\chi(T(G))$.

    Now, assume that $G$ is triangle-free. Then cliques of type 2 have size at most 2, hence they are contained in cliques of type 3, and there is no clique of type 4. Hence every clique cover can be turned, without increasing in size, into a clique cover made of stars and edge-cliques. It is easy to see that the corresponding vertices and edges form a strong total cover of $G$, proving $\beta^*_T(G)\leq\bar\chi(T(G))$ as desired.

\end{proof}

Figure \ref{fig:net} shows an example where the bound in Proposition \ref{prop:cliquecov} is strict. Interestingly, this \lq net' graph is an example of a graph whose total graph is perfect: indeed, in \cite{RaoRavindra1977PerfectTotalGraphs} it is shown that $T(G)$ is perfect if and only if the blocks of $G$ are either edges or triangles. We call such a graph a \emph{triangular cactus}. While perfection of $T(G)$ gives a clean formulation for the total matching polytope of $G$, it does not say anything on its strong total covering polytope: this will be addressed in Section \ref{sec:polyhedral}.

\begin{figure}
    \centering
    \begin{tikzpicture}[scale=1.2, every node/.style={circle, draw, fill=black, minimum size = 2mm, inner sep=1.5pt}]
    \node (a) at (90:1.2) {};
    \node (b) at (210:1.2) {};
    \node (c) at (330:1.2) {};

    \node (ap) at (90:2.2) {};
    \node (bp) at (210:2.2) {};
    \node (cp) at (330:2.2) {};

    \draw (a) -- (b) -- (c) -- (a);

    \draw (a) -- (ap);
    \draw (b) -- (bp);
    \draw (c) -- (cp);
\end{tikzpicture}
    \caption{If $G$ is the net graph, then $T(G)$ has clique cover number equal to 4: a clique cover is given by the triangle and the three edge-cliques of the leaf edges; on the other hand, one checks that $\beta^*_T(G)=5$, as we need three elements to cover the leaves and two for the triangle.}
    \label{fig:net}
\end{figure}


\section{Computational complexity results}\label{sec:complexity}

\subsection{NP-hardness results}
 Let $K_n$ be a complete graph on $n$ vertices. The STCP on $K_n$ is easily solved, see Corollary \ref{prop:total_cover_complete}. However, we next show that the weighted version of the problem (MWSTCP) is NP-hard.

\begin{thm}\label{thm:hardness_complete} The MWSTCP is NP-hard on complete graphs, even if weights are all in $\{0,1\}$.    
\end{thm}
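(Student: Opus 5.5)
We reduce from Minimum Vertex Cover, which is NP-hard. The idea is to use weight-$0$ elements of $K_n$ to cover for free everything except the edges of a prescribed graph $H$; what remains to be covered then behaves exactly like a vertex cover instance on $H$.

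\emph{Construction.} Let $H=(V,E_H)$ be a graph on $n$ vertices. We may assume that $H$ has no universal vertex, i.e.\ no vertex adjacent to all others. This is harmless: adding an isolated vertex to $H$ removes all universal vertices and does not change the vertex cover number, and vertex cover remains NP-hard on this class. Take $K_n$ on the vertex set $V$ and define weights $w$ as follows.
\begin{itemize}
\item Every vertex gets weight $1$.
\item Every edge of $E_H$ gets weight $1$.
\item Every edge of $K_n$ not in $E_H$, i.e.\ every edge of $\bar H$, gets weight $0$.
\end{itemize}
We claim that the minimum weight of a strong total cover of $K_n$ equals $\beta(H)$.

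\emph{Upper bound.} Let $C$ be a vertex cover of $H$. Then $C\cup E(\bar H)$ has weight $|C|$, and it is a strong total cover:
\begin{itemize}
\item each edge of $\bar H$ covers itself;
\item each edge of $H$ is covered by an endpoint in $C$;
\item each vertex $v$ is covered, because $v$ is not universal in $H$ and hence has an incident edge in $\bar H$, which is in the set.
\end{itemize}

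\emph{Lower bound.} Let $S$ be an optimal strong total cover. Since adding elements preserves the covering property, we may add all weight-$0$ elements, so $E(\bar H)\subseteq S$ and the weight of $S$ is $|S\cap (V\cup E_H)|$. If $S$ contains an edge $e=(u,v)\in E_H$, replace it by $u$. Then $e$ is still covered by $u$. The vertex $v$ is still covered, because it has an incident edge of $\bar H$ in $S$; this is where the assumption of no universal vertices is used. The weight does not increase. After iterating, $S\setminus E(\bar H)$ is a set of vertices covering every edge of $H$, i.e.\ a vertex cover of $H$, of weight at most that of the original $S$. Hence the optimum is at least $\beta(H)$.

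The reduction is clearly polynomial, so MWSTCP is NP-hard on complete graphs with $\{0,1\}$ weights. The only delicate step is the exchange argument in the lower bound: trading a weight-$1$ edge for an endpoint may uncover the other endpoint. Excluding universal vertices in $H$, so that every vertex is already covered by some zero-weight edge of $\bar H$, is what makes this exchange safe.
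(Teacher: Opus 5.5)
Your proof is correct and is essentially the paper's reduction from vertex cover. The paper's extra vertex, joined to all others by weight-$0$ edges, is exactly the isolated vertex you add to $H$ to rule out universal vertices; your upper bound (a vertex cover plus all weight-$0$ edges) and your lower-bound exchange of each weight-$1$ edge for an endpoint coincide with the paper's argument.
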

\begin{proof}
    We reduce from the minimum vertex cover problem. Given any graph $G$, consider the complete, weighted graph $K:=(V(G)\cup \{v\}, E(G)\cup E_0)$ obtained from $G$ by adding an extra vertex $v$ and all the missing edges (i.e. $E_0$ is the set of edges between non-adjacent vertices of $G$, and edges between $v$ and vertices of $G$), and giving weight 1 to all vertices and all edges of $E(G)$, and 0 to edges in $E_0$. We now claim that the weight of a minimum weight strong total cover of $K$ is equal to the size of a minimum vertex cover of $G$. 
    Indeed, given a vertex cover $X\subseteq V(G)$, we have that $X\cup E_0$ is a strong total cover of $K$: every edge of $K$ is either in $E_0$ or is in $E(G)$, in which case it is covered by $X$ (since $X$ is a vertex cover of $G$); and every vertex of $K$ is covered by edges in $E_0$ (as those include edges from $v$ to any other vertex). 
    On the other hand, let $X\cup Y$ be a strong total cover of $K$, with $X\subseteq V(K)$, $Y\subseteq E(G)\cup E_0$. Notice that we can assume $E_0\subseteq Y$ as this does not increase the weight of $X\cup Y$. Now, if $e\in E(G)\cap Y$, let us replace $e$ with one of its endpoints: this is still a strong total cover of the same weight, since $e$ is covered by the endpoint and all the vertices are covered by $E_0$. Finally, if the extra vertex $v$ is in $X$, we can safely delete it, since again it is covered by $E_0$. But then we obtain a total cover of form $X\cup E_0$, with $X\subseteq V(G)$, meaning that $X$ is a vertex cover of $G$.
\end{proof}  

It is not hard to check that the reduction above is approximation preserving, which, thanks to well known results \cite{dinur2005hardness, khot2008vertex}, implies the following: 
\begin{cor}\label{cor:hardness_apx} The MWSTCP is NP-hard to approximate within any factor less than 
1.3606. Moreover, assuming the Unique Games Conjecture, it is NP-hard to approximate within any constant factor less than 2.  
Both statements remain true even when the problem is restricted to complete graphs with weights in $\{0,1\}$.  
\end{cor}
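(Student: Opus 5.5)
We obtain Corollary \ref{cor:hardness_apx} by checking that the reduction in the proof of Theorem \ref{thm:hardness_complete} preserves approximation ratios exactly. We then import the known inapproximability bounds for minimum vertex cover: the factor $1.3606$ under $P\neq NP$ from \cite{dinur2005hardness}, and the factor $2-\varepsilon$ under the Unique Games Conjecture from \cite{khot2008vertex}.

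The reduction has three parts.

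\emph{Instance mapping.} Given a graph $G$, we build the weighted complete graph $K$ exactly as in the proof of Theorem \ref{thm:hardness_complete}. This takes polynomial time and uses only weights in $\{0,1\}$. That proof already shows that the optimal weight of a strong total cover of $K$ equals $\beta(G)$.

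\emph{Solution mapping.} We need a polynomial-time map from any strong total cover $X\cup Y$ of $K$ to a vertex cover of $G$ whose size is at most $w(X\cup Y)$. We follow the normalization steps of that proof. Adding $E_0$ to $Y$ costs nothing. Each edge $e\in E(G)\cap Y$ is replaced by one of its endpoints, which keeps the weight unchanged. The extra vertex $v$ is removed if it is present, which does not increase the weight. Each step preserves the strong total cover property, as argued there. The result has the form $X'\cup E_0$ with $X'\subseteq V(G)$, $X'$ a vertex cover of $G$, and $|X'| = w(X'\cup E_0)\le w(X\cup Y)$.

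\emph{Conclusion.} Suppose an algorithm finds a $\rho$-approximate strong total cover on such instances. Applying the solution mapping gives a vertex cover of size at most $\rho\cdot\mathrm{OPT}(K)=\rho\,\beta(G)$. So a $\rho$-approximation for MWSTCP restricted to complete graphs with $\{0,1\}$ weights yields a $\rho$-approximation for vertex cover. This is a strict, ratio-preserving reduction, and both hardness thresholds transfer directly.

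The only point needing care is that the normalization never increases the weight. This holds because each replaced edge of $E(G)$ and each vertex has weight $1$, and each added element of $E_0$ has weight $0$. There is also a degenerate case: if $G$ has no edges, the vertex cover optimum is $0$, and the ratio statement holds trivially, since any $\rho$-approximation must then return weight $0$, which maps to the empty vertex cover.
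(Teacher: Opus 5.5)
Your proposal is correct and follows the paper's route: the paper only asserts that the reduction in the proof of Theorem~\ref{thm:hardness_complete} is approximation preserving and then invokes \cite{dinur2005hardness, khot2008vertex}. Your explicit solution mapping, which normalizes an arbitrary cover to the form $X'\cup E_0$ without increasing its weight, is exactly the check the paper leaves to the reader. One small correction: replacing an edge by an endpoint that is already in $X$ lowers the weight rather than keeping it equal, which only helps.
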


\subsection{Polynomial-time Solvability Results}
In contrast to what we have shown before, in the following we prove that the MWSTC is polynomial-time solvable for complete bipartite graphs. Let $K_{m,n}$ be the complete bipartite graph, with $V=A\cup B$ with $|A|=m$ and $|B|=n$. First, notice that the largest induced matching of $K_{m,n}$ consists of only one edge, hence by Proposition \ref{prop:induced} we obtain the following result:
\begin{prop}\label{prop:total_cover_bipartite_complete}
    \[
    \beta^*_T(K_{m,n}) = n+m-1.
    \]
\end{prop}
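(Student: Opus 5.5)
The plan is to apply Proposition \ref{prop:induced}, which gives $\beta^*_T(K_{m,n}) = |V| - \alpha_{ind}(K_{m,n}) = m+n-\alpha_{ind}(K_{m,n})$. The whole statement then reduces to showing $\alpha_{ind}(K_{m,n}) = 1$, assuming $m,n\geq 1$ so that there is at least one edge.

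For the lower bound, any single edge is trivially an induced matching, so $\alpha_{ind}(K_{m,n})\geq 1$. For the upper bound, take two distinct edges $(a_1,b_1)$ and $(a_2,b_2)$ of a matching, with $a_i\in A$ and $b_i\in B$. Because they form a matching, $a_1\neq a_2$ and $b_1\neq b_2$. By completeness the edge $(a_1,b_2)$ exists, it is not in the matching, and both its endpoints are covered. So no induced matching contains two edges, and $\alpha_{ind}(K_{m,n})=1$.

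Substituting into Proposition \ref{prop:induced} gives $\beta^*_T(K_{m,n})=n+m-1$. A direct sanity check matches Proposition \ref{prop:bounds}: one edge together with all the other $m+n-2$ vertices is a strong total cover of size $m+n-1$.

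There is no real obstacle here. The only care needed is the degenerate case with no edges, where the identity fails; the statement implicitly assumes $m,n\geq1$.
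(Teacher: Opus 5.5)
Your proof is correct and follows the same route as the paper: it invokes Proposition~\ref{prop:induced} after observing that a maximum induced matching of $K_{m,n}$ has exactly one edge. You also supply the short justification of $\alpha_{ind}(K_{m,n})=1$ via the edge $(a_1,b_2)$, which the paper only asserts, and you correctly note the implicit assumption $m,n\geq 1$.
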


We now move to the weighted version of STCP. The idea is to reduce the problem to a sequence of minimum-weight cut problems in an auxiliary directed graph, where the goal is to find a bipartition of the vertices into two non-empty sets $(S, V \setminus S)$ such that the total weight of the edges from $S$ to $V \setminus S$ is minimized. 
We solve all these minimum-weight cut problems, together with a greedy algorithm, and return the best solution found.
We initially restrict our attention to the setting in which all weights are nonnegative, and then generalize the analysis to arbitrary weights by means of an appropriate preprocessing phase.

Hence, let us consider a weight function $w: D\rightarrow \R_{\geq 0}$, where $D=A\cup B\cup E$. We need some further definitions.
Let $X\cup Y$ be a strong total cover of $K_{m,n}$, where $X \subseteq A \cup B, Y \subseteq E$, and define $U_A:=A \setminus X$ and $U_B:=B \setminus X$. We say that $X\cup Y$ is \textit{canonical} if $Y$ is exactly the biclique induced by the vertices not included in $X$, that is, if $Y=U_A \times U_B$. Notice that such a set is a minimal strong total cover by construction.
Moreover, we say that $X\cup Y$ is \emph{non-degenerate} if $U_A \neq \emptyset$ and $U_B \neq \emptyset$, that is, $X \neq A$ and $X \neq B$. Otherwise, $X \cup Y$ is said to be \emph{degenerate}.

The following observation ensures that, when looking for a minimum-weight strong total cover, one can focus on canonical ones and degenerate ones. The latter case will be treated separately via a greedy approach.

\begin{claim}\label{cl:bip_canonical}
    Let $X\cup Y$ be an inclusion-wise minimal strong total cover of $K_{m,n}$. Then exactly one of the following holds:\begin{enumerate}
    \item $X\cup Y$ is degenerate and not canonical;
        \item $X\cup Y$ is canonical and non-degenerate;
        \item  $X=A\cup B$, $Y=\emptyset$.
    \end{enumerate} 
\end{claim}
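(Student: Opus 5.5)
The plan is to argue directly from the covering conditions and minimality, splitting on whether $X=A\cup B$ and, if not, on whether $X\cup Y$ is degenerate. Throughout, write $U_A=A\setminus X$ and $U_B=B\setminus X$, and assume $m,n\ge 1$. Every vertex of $U_A\cup U_B$ is not in $X$, so it must be covered by an incident edge of $Y$. Every edge $(u,v)$ with $u\in U_A$, $v\in U_B$ has no endpoint in $X$, so it must itself lie in $Y$. Hence $U_A\times U_B\subseteq Y$ always.

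\emph{Case $X=A\cup B$.} Every edge is covered by its endpoints and every vertex covers itself. Any edge in $Y$ could therefore be removed, so minimality forces $Y=\emptyset$, which is case 3.

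\emph{Case $X\cup Y$ degenerate and $X\neq A\cup B$.} By symmetry say $U_A=\emptyset$; then $U_B\neq\emptyset$. If $X\cup Y$ were canonical we would have $Y=U_A\times U_B=\emptyset$. A vertex $b\in U_B$ would then be covered neither by itself nor by an incident edge, a contradiction. So $X\cup Y$ is not canonical, which is case 1.

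\emph{Case $X\cup Y$ non-degenerate.} We already know $U_A\times U_B\subseteq Y$. Suppose some $e=(a,b)\in Y$ has an endpoint in $X$, say $a\in X$. We check that $(X\cup Y)\setminus\{e\}$ is still a strong total cover:
\begin{itemize}
\item the edge $e$ is covered by its endpoint $a\in X$;
\item the vertex $a$ covers itself;
\item the vertex $b$ is either in $X$, or $b\in U_B$; in the latter case $b$ is covered by an edge $(a',b)$ with $a'\in U_A$, which lies in $U_A\times U_B\subseteq Y$ (using $U_A\neq\emptyset$) and differs from $e$ because $a\in X$.
\end{itemize}
The case $b\in X$ is symmetric. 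This contradicts minimality, so $Y=U_A\times U_B$. Thus $X\cup Y$ is canonical, which is case 2. This step, where non-degeneracy supplies the alternative covering edge, is the only delicate point.

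\emph{Exclusivity.} Cases 1 and 2 are clearly disjoint, since one is degenerate and the other is not. Case 3 has $U_A=U_B=\emptyset$, so it is degenerate. It is also canonical, because $Y=\emptyset=U_A\times U_B$. Hence case 3 overlaps neither case 1 (not canonical) nor case 2 (non-degenerate), and exactly one of the three alternatives holds.
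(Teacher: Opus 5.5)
Your proof is correct and follows essentially the paper's argument: in both, the key step is that under non-degeneracy any edge of $Y$ with an endpoint in $X$ can be dropped, because its other endpoint is either in $X$ or already covered by an edge of $U_A\times U_B\subseteq Y$. The differences are only organizational. You reach case 3 from $X=A\cup B$ via minimality rather than from ``canonical and degenerate,'' and you check exclusivity explicitly, which the paper leaves implicit.
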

\begin{claimproof}
First, notice that if $X\cup Y$ is canonical and degenerate, then $Y=\emptyset$, leading to case 3. 
We can then assume that $X\cup Y$ is non-degenerate, hence $U_A\neq \emptyset$, $U_B\neq \emptyset$: to conclude the proof, it suffices to show that $X\cup Y$ is canonical. 
Assume by contradiction that $Y\neq U_A\times U_B$. Notice that $Y\supseteq U_A\times U_B$ since all edges must be covered. Hence there is $a\in X$, say without loss of generality $a\in A$, such that $(a,b)\in Y$ for some $b\in B$. By minimality we have $b\not\in X$, hence, for every $a'\in U_A$, $(a',b)\in Y$. But then $b$ is already covered by those edges and $X\cup Y\setminus \{(a,b)\}$ is still a strong total cover, contradicting the minimality of $X\cup Y$.
\end{claimproof}

Thanks to Claim \ref{cl:bip_canonical}, to find a minimum weight strong total cover it suffices to i) find a degenerate strong total cover of minimum weight; ii) find a canonical, non-degenerate strong total cover of minimum weight; iii) check the weight of $A\cup B$. 

To solve problem i), a simple greedy approach suffices: suppose for instance that we want to find a minimum weight cover containing $A$, the other case being symmetrical. Notice that the inclusion of $A$ automatically covers $A$ itself and all the edges. Each $b\in B$ can be covered either by itself or by an incident edge, and this choice does not cover any other vertex in $B$: hence, the minimum weight is achieved simply by including, 
for every $b \in B$, the element of minimum weight in $\{b\}\cup \delta(b)$.

It now remains to solve problem ii).
We construct the directed network $\vec{G}=(N,\vec{E})$ as follows:
\begin{itemize}
    \item $N=\{s,t\} \cup A \cup B$
    \item $\vec{E} =\{(s,a)\, \forall a \in A\} \cup \{(b,t) \,\forall b \in B\} \cup \{(a,b) \,\forall a\in A, b\in B\}.$
\end{itemize}

Fix a pair of vertices $(a_i,b_j) \in A \times B$. We define a capacity function $c = c(i,j):\vec{E}\rightarrow \R_+$ as follows: 
\begin{itemize}
\item $c(s,a_i)= c(b_j,t)=\infty$,
\item $c(s,a)=w(a), \forall a \in A \setminus\{a_i\},$
\item $c(b,t)=w(b), \forall b \in B \setminus \{b_j\}$
\item $c(a,b)=w(a,b), \forall e=(a,b) \in E.$
\end{itemize}

Let $(S,T)$ be a cut of $\vec{G}$ of minimum capacity, with $T=N\setminus S$, $s \in S, t \in T$. Clearly, $a_i\in S$, $b_j\in T$, as otherwise the cut would have infinite capacity. Define $X=(A\cap T)\cup (B\cap S)$, and  $Y = U_A \times U_B$, where $U_A = A \cap S=A\setminus X$, $ U_B = B \cap T=B\setminus X$ as before. See Figure \ref{fig:s,t-cut} for the details of this construction.

\begin{figure}[h]
\begin{center}
	\begin{tikzpicture}[node distance=1.5cm and 2.4cm, scale = 0.9]
		
		
		\node[redvertex] (a1) at (0,0) {$a_1$};
		\node[redvertex, below=of a1] (a2) {$a_2$};
		\node[vertex, below=of a2] (a3) {$a_3$};
		
		\node[vertex, right=of a1] (b1) {$b_1$};
		\node[redvertex, below=of b1] (b2) {$b_2$};
		\node[redvertex, below=of b2] (b3) {$b_3$};
		
		\foreach \a in {a1,a2,a3} {
			\foreach \b in {b1,b2,b3} {
				\draw[normaledge] (\a) -- (\b);
			}
		}
		
		\draw[rededge] (a3) -- (b1);
		
		\begin{scope}[xshift=8.5cm]
			
			\node[bluevertex] (s) at (-2.2,-1.5) {$s$};
			
			\node[orangevertex] (da1) at (0,0) {$a_1$};
			\node[orangevertex, below=of da1] (da2) {$a_2$};
			\node[bluevertex, below=of da2] (da3) {$a_3$};
			
			\node[orangevertex] (db1) at (2.6,0) {$b_1$};
			\node[bluevertex, below=of db1] (db2) {$b_2$};
			\node[bluevertex, below=of db2] (db3) {$b_3$};
			
			\node[orangevertex] (t) at (4.8,-1.5) {$t$};
			
			
			\draw[arc] (s) -- (da3);
			\draw[arc] (da1) -- (db1);
			\draw[arc] (da2) -- (db1);
			\draw[arc] (da3) -- (db2);
			\draw[arc] (da3) -- (db3);
			\draw[arc] (db1) -- (t);
			
			\draw[cutarc] (s) -- (da1);
			\draw[cutarc] (s) -- (da2);
			\draw[cutarc] (da3) -- (db1);
			\draw[cutarc] (db2) -- (t);
			\draw[cutarc] (db3) -- (t);
			\draw[arc] (da1) -- (db2);
			\draw[arc] (da2) -- (db3);
			\draw[arc] (da2) -- (db2);
			\draw[arc] (da1) -- (db3);
		\end{scope}
		
	\end{tikzpicture}
  \caption{A minimal strong total cover $C=\{a_1,a_2,b_2,b_3,(a_3,b_1)\}$ of $K_{3,3}$ and the corresponding minimum cut with $S=\{s,a_3,b_3,b_2\}$ and $\delta(S)=\{(s,a_1),(s,a_2),(a_3,b_1),(b_2,t),(b_3,t)\}$.}\label{fig:s,t-cut}
\end{center}
\end{figure}

\begin{claim}\label{cl:bip_cut}
   The set $X\cup Y$ defined in this way is a canonical, non-degenerate strong total cover, and $w(X\cup Y)=c(\delta(S))$.
\end{claim}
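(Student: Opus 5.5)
The argument is a direct verification in three steps: read off which arcs of $\vec G$ cross the cut, check that $X\cup Y$ is canonical and non-degenerate, check that it is a strong total cover, and finally match the cut capacity with the weight, term by term.

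\textbf{Cut arcs.} Since $(S,T)$ has minimum capacity and finite-capacity cuts exist (e.g.\ $S=\{s,a_i\}\cup B\setminus\{b_j\}$, all of whose crossing arcs have finite capacity), $c(\delta(S))<\infty$. Hence $a_i\in S$ and $b_j\in T$. Arcs of $\vec G$ are oriented from $s$ to $A$, from $A$ to $B$, and from $B$ to $t$, so the arcs leaving $S$ are exactly of three kinds:
\begin{itemize}
\item $(s,a)$ with $a\in A\cap T$;
\item $(a,b)$ with $a\in A\cap S=U_A$ and $b\in B\cap T=U_B$;
\item $(b,t)$ with $b\in B\cap S$.
\end{itemize}
The first kind is indexed by $A\cap X$, the second by $U_A\times U_B=Y$, and the third by $B\cap X$.

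\textbf{Canonical and non-degenerate.} By definition $Y=U_A\times U_B$, so $X\cup Y$ is canonical. Moreover $a_i\in U_A$ and $b_j\in U_B$, so $U_A\neq\emptyset$ and $U_B\neq\emptyset$, and $X\cup Y$ is non-degenerate.

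\textbf{Strong total cover.} Vertices in $X$ cover themselves. A vertex $a\in U_A$ is covered by the edge $(a,b_j)\in Y$, which exists because $b_j\in U_B$. Symmetrically, $b\in U_B$ is covered by $(a_i,b)$. Now take an edge $(a,b)$. If $a\in X$ or $b\in X$, an endpoint covers it. Otherwise $a\in U_A$ and $b\in U_B$, so $(a,b)\in Y$ covers itself.

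\textbf{Weight.} Since $a_i,b_j\notin X$, none of the infinite-capacity arcs $(s,a_i)$, $(b_j,t)$ crosses the cut. By the classification of cut arcs, every crossing arc therefore has capacity equal to the weight of the corresponding element. Summing over the three kinds gives
\[
c(\delta(S))=w(A\cap X)+w(Y)+w(B\cap X)=w(X\cup Y).
\]

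The only delicate point is the bookkeeping of arc orientations: because arcs go only from $A$ to $B$, the arcs from $U_B$ back to $U_A$ and the arcs between $X$-vertices do not cross the cut. These are precisely the edges outside $Y$, which is what makes the correspondence with canonical covers exact.
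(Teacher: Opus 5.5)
Your proposal is correct and follows the paper's proof: classify the arcs of $\delta(S)$ into the three types $(s,a)$ with $a\in A\cap T$, $(a,b)$ with $a\in U_A$, $b\in U_B$, and $(b,t)$ with $b\in B\cap S$, biject them with $X\cup Y$, and read off the capacities. Your explicit check that $X\cup Y$ is a strong total cover, which relies on $a_i\in U_A$ and $b_j\in U_B$, fills in what the paper dismisses as ``by definition''; your closing aside about arcs ``from $U_B$ back to $U_A$'' is slightly loose, since no such arcs exist, but it is not needed for the argument.
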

\begin{claimproof}
    That $X\cup Y$ is a canonical strong total cover follows by definition, and it is non-degenerate since $a_i\in U_A$ and  $b_j\in U_B$. To prove the last part of the claim, it remains to show that the elements of $X\cup Y$ are in bijective correspondence with arcs in $\delta(S)$. Indeed, the arcs $\delta(S)$ are either of type $(s,a)$ for $a\in T$, in which case $a\in X$, or of type $(b,t)$ for $b\in S$, in which case $b\in X$, or of type $(a,b)$ for $a\in S$, $b\in T$, in which case $(a,b)\in Y$, and by construction every element of $X\cup Y$ corresponds to an arc of $\delta(S)$. The conclusion follows from the definition of $c$.
\end{claimproof}

From the proof it is clear that every canonical, non-degenerate strong total cover $X\cup Y$ with $a_i\in U_A$, $b_j\in U_B$ arises from a cut of $\vec{G}$ of finite capacity. Hence, by solving a minimum $s-t$ cut problem on $\vec{G}$ for each choice of $a_i\in A, b_j\in B$, we find a canonical, non-degenerate strong total cover of minimum weight as desired.

This concludes the algorithm for the case of nonnegative weights. We now show how to deal with negative weights. 

\begin{thm}\label{thm:w_comp_bip}
The MWSTC problem is polynomial-time solvable on complete bipartite graphs.
\end{thm}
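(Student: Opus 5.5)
Let $N:=\{d\in D: w(d)<0\}$ be the set of elements with negative weight. Any optimal solution for $w$ can be assumed to contain $N$: adding a negative-weight element to a strong total cover keeps it a cover and does not increase its weight. So the problem reduces to minimizing $w(C)$ over strong total covers $C\supseteq N$. I will handle this by preprocessing: take all of $N$ into the solution, set $w'(d):=0$ for $d\in N$ and $w'(d):=w(d)$ otherwise, and solve the problem for $w'$ with the extra requirement that elements of $N$ are free. Then the optimum for $w$ equals the optimum for $w'$ plus $w(N)$. Concretely, a set $C$ minimizes $w'$ among covers if and only if $C\cup N$ minimizes $w$, because $w(C\cup N)=w'(C\cup N)+w(N)$ and $w'(C\cup N)=w'(C)$.

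With this, the theorem reduces to the nonnegative case with weights $w'\geq 0$, which the algorithm above solves. By Claim \ref{cl:bip_canonical}, a minimum-weight cover may be taken inclusion-wise minimal, because removing elements never increases a nonnegative weight. Such a minimal cover falls into one of three cases, and I take the best candidate over all of them.
\begin{itemize}
\item Degenerate covers: I find these by the greedy rule, once for covers containing $A$ and once for covers containing $B$.
\item Canonical non-degenerate covers: by Claim \ref{cl:bip_cut} and the remark after it, I find these by solving the $mn$ minimum $s$--$t$ cut problems, one for each pair $(a_i,b_j)$.
\item The cover $A\cup B$.
\end{itemize}
Finally I return the best candidate together with $N$.

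The point that needs care is the correctness argument for the cut step. For each pair $(a_i,b_j)$, the minimum cut capacity must equal the minimum weight of a canonical non-degenerate cover with $a_i\in U_A$ and $b_j\in U_B$. The bijection in Claim \ref{cl:bip_cut} gives this in both directions: every such cover defines a finite-capacity cut of the same capacity, and every finite cut defines such a cover. Since capacities $w'\geq 0$ are nonnegative, max-flow/min-cut algorithms apply.

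The running time is polynomial: $O(mn)$ max-flow computations on a network with $O(m+n)$ nodes and $O(mn)$ arcs, plus linear-time greedy steps. The main obstacle is the justification that forcing $N$ into the solution is harmless. Here the minimality argument of Claim \ref{cl:bip_canonical} is applied only to $w'$, so we never need minimal covers for the original, possibly negative, weights. This avoids any issue with the canonical structure being destroyed by the extra negative-weight elements.
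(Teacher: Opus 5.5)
Your proposal is correct and follows the paper's proof. You zero out the negative weights and run the nonnegative-weight algorithm: the greedy step for degenerate covers, the $mn$ minimum $s$--$t$ cut computations for canonical non-degenerate covers, and the candidate $A\cup B$. You then add back all negative-weight elements, with the same justification that some optimal cover contains every element of negative weight.
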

\begin{proof}
Let $G=(A \cup B,E,w)$ be a complete bipartite graph with arbitrary weights $w: D \longrightarrow \mathbb{R}$. The crucial observation is that there is an optimal solution of MWSTCP containing every element with non-positive weight, and negative weight elements are contained in every optimal solution. Hence, consider a new weight function $\bar{w}$ such that for any element $k \in D$, $\bar{w}(k):=w(k)$ if $w(k) \geq 0$ and $\bar{w}(k):=0$ if $w(k) <  0$.  A strong total cover of $G$ that is optimal with respect to $w$ can be obtained by finding one that is optimal with respect to $\bar w$ and then adding all negative weight elements. Hence, we can run the algorithm described above for $G$ with weights $\bar w$, and then add to the solution obtained all the elements $k\in D$
 with $w(k)<0$.
\end{proof}

\subsection{Approximation algorithms}
We summarize here what is known about approximation algorithms, starting with the unweighted case.

\begin{lemma}\label{lem:trivial_2apx}
The TMP admits a $\frac{1}{2}$-approximation algorithm, and the STCP admits a $2$-approximation algorithm. 
\end{lemma}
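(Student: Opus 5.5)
The plan is to use the \lq\lq perfect'' total matching from the proof of Proposition \ref{prop:bounds}, together with the trivial cover of size $|V|$ (or $|V|-1$), and to compare both against the bounds $\frac{|V|}{2}\le \alpha_T(G)\le\beta^*_T(G)$ and $\alpha_T(G)\le |V|$. All constructions below run in linear time.

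\textbf{TMP.} We first handle isolated vertices. They belong to every maximum total matching and to every strong total cover, so we add them to both solutions and remove them from $G$. We may therefore assume that $G$ has no isolated vertices, and if $E=\emptyset$ there is nothing left to do. Next we greedily compute a maximal matching $M$ and let $U$ be the set of vertices exposed by $M$. The set $M\cup U$ is a total matching: $U$ is independent because $M$ is maximal, and no vertex of $U$ is an endpoint of an edge of $M$. Its size is $|V|-|M|\ge |V|/2$. For the upper bound we need $\alpha_T(G)\le |V|$. This holds because we can charge each element of a total matching to a vertex, namely a vertex to itself and an edge to one of its endpoints, and the charged vertices are pairwise distinct since the elements are pairwise non-adjacent. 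Hence $|M\cup U|\ge |V|/2\ge \alpha_T(G)/2$.

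\textbf{STCP.} We output the cover from the first part of Proposition \ref{prop:bounds}: an arbitrary edge $e$ together with all vertices other than its endpoints. In the presence of isolated vertices, we do this in each component that has an edge and add the isolated vertices. The size of this cover is at most $|V|-1$. On the other side, $\beta^*_T(G)\ge\alpha_T(G)\ge |V|-|M|\ge |V|/2$ by \eqref{ineq_duality} and the matching above. Thus the ratio is at most $2$. If components must be treated separately, we apply the inequality $\beta^*_T\ge |V|/2$ to each component with an edge and sum; isolated vertices contribute equally to both sides.

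\textbf{Main obstacle.} The only delicate point is the upper bound $\alpha_T(G)\le|V|$ via the injective charging, together with the bookkeeping for isolated vertices. Both are routine, so no real obstacle is expected.
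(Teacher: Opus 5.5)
Your proposal is correct and follows the paper's argument: both rest on the chain $\frac{|V|}{2}\le\alpha_T(G)\le\beta^*_T(G)\le|V|-1$ from Proposition~\ref{prop:bounds}, and both output the maximal matching together with its exposed vertices, and an edge together with all vertices other than its endpoints. The differences are cosmetic. You bound $\alpha_T(G)\le|V|$ by a direct charging argument, whereas the paper uses $\alpha_T(G)\le\beta^*_T(G)\le|V|-1$. Your isolated-vertex bookkeeping is also unnecessary: one edge plus all non-endpoint vertices already covers the whole graph whenever $E\neq\emptyset$, and $V$ itself works when $E=\emptyset$.
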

\begin{proof}
    Both parts are a direct consequence of Proposition \ref{prop:bounds}, that gives:
    \[\frac{n}2\leq \alpha_T(G)\leq \beta^*_T(G)\leq n-1.\]
    A total matching satisfying the approximation guarantee is given by any maximal matching together with the exposed vertices; a strong total cover satisfying the approximation guarantee is obtained by picking all the vertices, or, any edge and all the vertices apart from its endpoints.  
\end{proof}

It is natural to ask whether similar approximation guarantees also hold for the weighted version of the problem. We first remark that one cannot hope for such a result for the total matching problem (unless $P=NP$), because, if all the edge weights are zero, then the problem reduces to the maximum independent set problem, which is notoriously hard to approximate \cite{zuckerman2006linear}. A similar issue holds if the vertex weights are much larger than the edge weights. 
On the other hand, we now show that the weighted STCP admits a 2-approximation algorithm.

Let $(G,w)$ be a strong total cover instance, where $w:E\cup V\rightarrow \R_{\geq 0}$ are weights on vertices and edges of $G$. Notice that we restrict to nonnegative weights, as it is customary to do when proving multiplicative approximation guarantees.
Consider the weighted version of \eqref{lp_tc}, and its dual (formally, the dual is taken after dropping the $\leq 1$ constraints):
\begin{align*}
\min\quad & \sum_{v\in V} w(v)\,x_v + \sum_{e\in E} w(e)\,y_e\\
\text{s.t.}\quad &
x_v + \sum_{e\in\delta(v)} y_e  \ge\ 1  & \forall v\in V,\tag{WTC}\\
& y_e + x_u + x_v  \ge\ 1 & \forall e=(u,v)\in E,\\
& 0 \leq x_v\leq 1, & \forall v \in V, \\
& 0 \leq y_e\leq 1, & \forall e \in E.
\end{align*}
\begin{align*}\label{lp_D}
\max\quad & \sum_{v\in V} a_v + \sum_{e\in E} b_e\\
\text{s.t.}\quad &
a_v + \sum_{e\in\delta(v)} b_e \ \le\ w(v) & \forall v\in V,\tag{D}\\
& a_u + a_v + b_e \ \le\ w(e) & \forall e=(u,v)\in E,\\
& a_v,b_e\ge 0.
\end{align*}

Then, consider the modified dual $D_0$ where the variables $a_v$ are set to 0 for any $v$. 
Let $(\mathbf{0}, \bar b)$ be an optimal solution of $D_0$, notice that it is a feasible solution of $D$. Let $X_0\cup Y_0\subseteq V\cup E$ be the set formed in this way: \begin{itemize}
    \item for any $v\in V$, add $v$ to $X_0$ iff $\bar b(\delta(v)) = w(v)$;
    \item for any $e\in E$, add $e$ to $Y_0$ if i) $\bar b_e=w(e)$; ii) at least one endpoint of $e$ is \emph{not} in $X_0$.
\end{itemize}

\begin{claim}\label{cl:apx_cover}
    $X_0\cup Y_0$ covers every edge: for $e\in E$, either $e\in Y_0$ or $e$ has an endpoint in $X_0$.
\end{claim}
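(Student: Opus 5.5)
Fix an edge $e=(u,v)\in E$ and suppose $e\notin Y_0$. By the definition of $Y_0$, condition i) or condition ii) must fail. If ii) fails, both endpoints of $e$ lie in $X_0$, so $e$ has an endpoint in $X_0$ and we are done. So the only case requiring work is when i) fails, i.e.\ $\bar b_e<w(e)$, while (say) $u\notin X_0$. I will use the optimality of $\bar b$ in $D_0$ to show that in this case $v\in X_0$. In fact I expect to show something stronger, namely that $u,v\in X_0$.

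The problem $D_0$ is
\[
\max \sum_{f\in E} b_f \quad \text{s.t.}\quad \sum_{f\in\delta(x)} b_f\le w(x)\ \forall x\in V,\qquad b_f\le w(f)\ \forall f\in E,\qquad b\ge 0.
\]
The edge constraint for $e$ in $D$ reads $a_u+a_v+b_e\le w(e)$, and it becomes $b_e\le w(e)$ once $a\equiv 0$. Suppose $\bar b_e<w(e)$. If $\bar b(\delta(u))<w(u)$ and also $\bar b(\delta(v))<w(v)$, then increasing $\bar b_e$ by a small $\varepsilon>0$ keeps every constraint satisfied. This is because $b_e$ occurs only in the constraints for $u$, for $v$ and for $e$, all of which have slack. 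The objective would then increase, contradicting optimality. Hence at least one of $u,v$ has a tight vertex constraint, that is, at least one endpoint of $e$ is in $X_0$. This already proves the claim in the case where i) fails.

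The argument is a one-line complementary-slackness-style observation, so I expect no real obstacle. The only point needing care is to verify that no other constraint of $D_0$ involves $b_e$, so that the perturbation is feasible. This holds because $b_e$ appears exactly in the rows indexed by $u$, $v$ and $e$. Nonnegativity is also preserved, since we only increase $b_e$. Combining the two cases, every edge either belongs to $Y_0$ or has an endpoint in $X_0$.
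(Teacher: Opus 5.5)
Your argument is correct and is essentially the paper's: the paper simply notes that, by optimality of $\bar b$ in $D_0$, every $b_e$ lies in some tight edge or vertex constraint, and your case split on the two conditions defining $Y_0$ spells this out. Delete the aside that you ``expect to show'' both endpoints lie in $X_0$: it contradicts your own assumption $u\notin X_0$ and is false in general (a single edge with endpoint weights $1$ and $100$ and $w(e)=10$ gives $\bar b_e=1$, so condition i) fails and only one endpoint is in $X_0$), but your actual argument never uses it.
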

\begin{claimproof}
    By optimality of $\bar b$, every variable $b_e$ appears in at least one tight edge constraint or vertex constraint.
\end{claimproof}


However, the set $X_0\cup Y_0$ does not necessarily cover all vertices. Let $S \subseteq V$ be the set of vertices not covered by $X_0 \cup Y_0$.
By definition of $X_0 \cup Y_0$, a vertex $v \in V$ belongs to $S$ if and only if $\bar b(\delta(v))<w(v)$, and $\bar b(e)<w(e)$ for $e\in \delta(v)$.
Notice that, thanks to Claim \ref{cl:apx_cover}, $S$ is an independent set. We now construct a new feasible solution $(\bar{a},\bar{b})$ of $D$ by raising the $a$ variables, while leaving $\bar b$ unchanged (and keeping $\bar{a}_u = 0$ for all $u \in V \setminus S$).
Specifically, since $S$ is an independent set, each edge $e \in E$ has at most one endpoint in $S$ , so we can independently set 
\[
\bar a_v = \min \{w(v)-\bar b(\delta(v)),  (w(e)-\bar b_e): e\in \delta(v)\},
\]
We then construct new sets $X_1 \subseteq S$ and $Y_1 \subseteq E$ by selecting for each $v \in S$, exactly one constraint that becomes tight:
\begin{enumerate}
    \item if $\bar{a}_v = w(v)- \bar{b}(\delta(v))$, we add $v$ to $X_1$;
    \item otherwise, we pick a single edge $e=(u,v) \in \delta(v)$ such that $\bar{a}_v+\bar{a}_u + \bar{b}_e = w(e)$ (notice that $e \notin Y_0$ and $\bar{a}_u = 0$ since $ u \notin S$), and add $e$ to $Y_1$.
\end{enumerate}
Let $X = X_0 \cup X_1$ and $Y = Y_0 \cup Y_1$. By construction, this process adds for each $v \in S$, either $v$ itself to $X_1$ or a single edge incident to $v$ to $Y_1$, implying that $X \cup Y$ is a strong total cover. Moreover, for every edge $e=(u,v) \in E$, if $u, v \in X$, then $e \notin Y$: indeed, if $e \in Y_0$ then at least one endpoint is outside $X_0$ (and hence outside $X$, as vertices covered by $Y_0$ are not in $S$), while if $e \in Y_1$, then its endpoint in $S$ was not added to $X_1$.
\begin{claim}
    $(\bar a, \bar b)$ is a feasible solution of $D$.
\end{claim}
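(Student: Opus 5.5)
We verify the two families of constraints of \eqref{lp_D} directly, together with nonnegativity. Since $(\mathbf{0},\bar b)$ is an optimal, hence feasible, solution of $D_0$, we know $\bar b\ge 0$, $\bar b(\delta(v))\le w(v)$ for every $v$, and $\bar b_e\le w(e)$ for every edge. Moreover $\bar a_u=0$ for $u\notin S$.

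\emph{Nonnegativity.} For $v\in S$, the definition of $S$ gives $\bar b(\delta(v))<w(v)$ and $\bar b_e<w(e)$ for all $e\in\delta(v)$. So $\bar a_v$ is a minimum of strictly positive quantities and is positive. If $v$ is isolated, the minimum is just $w(v)-\bar b(\delta(v))=w(v)\ge 0$.

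\emph{Vertex constraints.} For $v\notin S$ the constraint reads $\bar b(\delta(v))\le w(v)$, which holds by feasibility in $D_0$. For $v\in S$, we have $\bar a_v\le w(v)-\bar b(\delta(v))$ by definition of the minimum, so $\bar a_v+\bar b(\delta(v))\le w(v)$.

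\emph{Edge constraints.} This step uses that $S$ is independent, which follows from Claim \ref{cl:apx_cover}. Let $e=(u,v)$. Because $S$ is independent, at most one endpoint of $e$ lies in $S$, so we have two cases.
\begin{itemize}
\item If neither endpoint is in $S$, the constraint is $\bar b_e\le w(e)$, which holds by feasibility in $D_0$.
\item If, say, $v\in S$ and $u\notin S$, then $\bar a_u=0$. Since $\bar a_v\le w(e)-\bar b_e$ by definition, we get $\bar a_u+\bar a_v+\bar b_e\le w(e)$.
\end{itemize}

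The only point needing care is the edge constraint: an edge with both endpoints raised could be violated. This is exactly what independence of $S$ rules out. Everything else is immediate from the definitions.
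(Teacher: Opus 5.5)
Your proof is correct and follows the same route as the paper. The only constraints that change are those at vertices of $S$ and at edges incident to $S$; these hold by the definition of $\bar a_v$ as a minimum, together with the independence of $S$, while every other constraint is inherited from feasibility of $(\mathbf{0},\bar b)$ for $D_0$. You also check nonnegativity of $\bar a$ explicitly, which the paper leaves implicit.
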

\begin{claimproof}
   Indeed, we have $\bar a_v+\bar b(\delta(v))\le w(v)$ for every $v\in S$ and
$\bar a_u+\bar a_v+\bar b_e\le w(e)$ for every $e=(u,v)$; all remaining
constraints follow from the feasibility of $\bar b$ for $D_0$.
\end{claimproof}

\begin{claim}
    We have that $w(X\cup Y)\leq 2\cdot OPT$.
\end{claim}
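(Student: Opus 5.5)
Every element of $X\cup Y$ sits at a tight constraint of $D$ for the dual solution $(\bar a,\bar b)$. So the plan is a primal--dual charging argument: write $w(X\cup Y)$ as a sum of dual variables and show each dual variable is counted at most twice. Weak duality then gives $\sum_v \bar a_v+\sum_e \bar b_e\le OPT$, because $(\bar a,\bar b)$ is feasible for $D$ by the previous claim, and $D$ is the dual of the relaxation \eqref{lp_tc} with weights $w$, whose value is at most $OPT$.

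\emph{Tightness of each selected element.} I would check the following cases.
\begin{itemize}
\item For $v\in X_0$, tightness is immediate: $w(v)=\bar b(\delta(v))$ and $\bar a_v=0$, since $v\notin S$.
\item For $v\in X_1$, the definition of $\bar a_v$ gives $w(v)=\bar a_v+\bar b(\delta(v))$.
\item For $e=(u,v)\in Y_0$, we have $w(e)=\bar b_e$. Both endpoints are outside $S$, since a vertex incident to an edge of $Y_0$ is covered. Hence $\bar a_u=\bar a_v=0$ and $w(e)=\bar a_u+\bar a_v+\bar b_e$.
\item For $e\in Y_1$, tightness holds by the selection rule.
\end{itemize}
Therefore
\[
w(X\cup Y)=\sum_{v\in X}\Bigl(\bar a_v+\bar b(\delta(v))\Bigr)+\sum_{e=(u,v)\in Y}\bigl(\bar a_u+\bar a_v+\bar b_e\bigr).
\]

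\emph{Counting multiplicities.} Regroup the right-hand side by dual variables.

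A variable $\bar a_v$ is nonzero only for $v\in S$. For each such $v$ exactly one tight constraint was selected: either $v\in X_1$ or a single edge of $Y_1$ incident to $v$. The vertex $v$ is not in $X_0$, and no edge of $Y_0$ is incident to it. Could $v$ also be the $S$-endpoint of an edge chosen for another vertex? That edge would then have both endpoints in $S$, which is impossible because $S$ is independent. So $\bar a_v$ is counted exactly once.

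A variable $\bar b_e$ with $e=(u,v)$ appears once for each element of $\{u,v,e\}\cap(X\cup Y)$. This is at most $3$, and it equals $3$ only when $u,v\in X$ and $e\in Y$. The construction rules this out: as remarked just before the claims, if $u,v\in X$ then $e\notin Y$. Hence each $\bar b_e$ is counted at most twice.

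Combining, $w(X\cup Y)\le \sum_v\bar a_v+2\sum_e\bar b_e\le 2\bigl(\sum_v\bar a_v+\sum_e\bar b_e\bigr)\le 2\,OPT$, using nonnegativity of the dual variables and weak duality.

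The main obstacle is the multiplicity bound for $\bar b_e$. It relies on the delicate rule in the definition of $Y_0$ (add $e$ only if an endpoint lies outside $X_0$) and on the interaction with $X_1$ and $Y_1$. I would verify it carefully by cases on whether $e\in Y_0$ or $e\in Y_1$, recalling that $X_1\subseteq S$ and that $S$ is disjoint from every vertex covered by $X_0\cup Y_0$.
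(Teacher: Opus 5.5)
Your proposal is correct and follows essentially the same primal--dual charging argument as the paper: every selected element is tight for $(\bar a,\bar b)$, so $w(X\cup Y)$ expands into dual variables in which each $\bar a_v$ (nonzero only on the independent set $S$) is counted once and each $\bar b_e$ at most twice, since $u,v\in X$ and $e\in Y$ never occur together. Weak duality for the feasible dual $(\bar a,\bar b)$ then gives $w(X\cup Y)\le \bar a(V)+2\bar b(E)\le 2\,OPT$, exactly as in the paper.
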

\begin{claimproof}

    By construction of $X_0\cup Y_0$, we have \[w(X_0)=\sum_{v\in X_0} \bar b(\delta(v)); \qquad w(Y_0) = \bar b(Y_0).\]
    Moreover, since each $v \in S \setminus X_1$ is incident to exactly one edge $e \in Y_1$, and its other endpoint $u \not\in S$ has $\bar{a}_u = 0$, we have:
    \[w(X_1)=\bar a(X_1) + \sum_{v\in X_1} \bar b(\delta(v)); \qquad w(Y_1) = \sum_{(u,v)\in Y_1} (\bar{a}_u+\bar a_v) +  \bar b(Y_1) = \bar{a}(S \setminus X_1)+\bar{b}(Y_1),\]
    Hence, we have:
    \[w(X_1\cup Y_1) = \bar a(S) + \sum_{v\in X_1} \bar b(\delta(v)) + \bar b(Y_1).\]
    Since $X $ (resp. $Y$) is the disjoint union of $X_0$ and $X_1$ (resp. $Y_0$ and $Y_1$), summing the two identities yields:
    \[
    w(X\cup Y) \leq \bar a(S) + \sum_{v\in X} \bar b(\delta(v)) + \bar b(Y) \leq \bar{a}(V)+2\bar{b}(E).
    \]
  where the inequality holds because for every edge $e = (u,v) \in E$, we cannot simultaneously have $u, v \in X$ and $e \in Y$, so $\bar{b}_e$ appears at most twice in $\sum_{v \in X} \bar{b}(\delta(v)) + \bar{b}(Y)$. 
Finally, since $(\bar{a}, \bar{b})$ is a feasible dual solution, $\bar{a}(V) + 2\bar{b}(E) \le 2\bigl(\bar{a}(V) + \bar{b}(E)\bigr) \le 2 \cdot \text{OPT}$.
This concludes the proof.
\end{claimproof}

Hence we obtain:
\begin{thm}
    \label{thm:2apx}
    The weighted STCP with nonnegative weights admits a 2-approximation algorithm.
\end{thm}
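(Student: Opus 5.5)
The plan is to assemble the primal--dual construction given just above into a single polynomial-time procedure, and then check three things: it runs in polynomial time, its output is a strong total cover, and its weight is at most $2\cdot OPT$.

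\emph{Algorithm and running time.} The algorithm first solves the restricted dual $D_0$. This is a linear program with $|E|$ variables and $|V|+|E|$ constraints, so it can be solved in polynomial time, for instance by the ellipsoid or an interior-point method. It is in fact a fractional $b$-matching-type LP, so a combinatorial method would also do. From an optimal $\bar b$ we read off $X_0$ and $Y_0$ by checking tightness of each constraint, and we compute $S$. For each $v\in S$ we set $\bar a_v$ by the explicit minimum formula, and then choose $X_1$ and $Y_1$. Every step after the LP solve is a linear scan over $V\cup E$.

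\emph{Feasibility of the output.} By Claim \ref{cl:apx_cover}, every edge is covered by $X_0\cup Y_0$. Every vertex not in $S$ is covered by $X_0\cup Y_0$ by the definition of $S$. Every vertex of $S$ receives either itself or an incident edge in the second phase. One step needs care: Claim \ref{cl:apx_cover} uses optimality of $\bar b$. If some $\bar b_e$ appeared in no tight constraint, we could increase it and improve the objective of $D_0$, which is a maximization. The feasible region of $D_0$ is bounded because the weights are nonnegative, so optimality is well defined. This is the point where nonnegativity of $w$ is used.

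\emph{Approximation bound.} The preceding claims give $w(X\cup Y)\le \bar a(V)+2\bar b(E)$, and $(\bar a,\bar b)$ is feasible for $D$. Weak LP duality then gives $\bar a(V)+\bar b(E)\le$ the optimum of the relaxation of (WTC) without the upper bounds. That optimum is at most the optimum with the bounds $x,y\le 1$, since removing constraints from a minimization only lowers its value. That in turn is at most the integer optimum $OPT$, because the incidence vector of any strong total cover is feasible. Chaining these yields $w(X\cup Y)\le 2\cdot OPT$.

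\emph{Expected obstacle.} The delicate step is the double-counting inequality. It needs the disjointness of $X_0$ and $X_1$, and of $Y_0$ and $Y_1$. It also needs the fact that no edge $e$ lies in $Y$ while both its endpoints lie in $X$; this was argued before the claims. I would re-verify it by case analysis on whether $e\in Y_0$ or $e\in Y_1$, using that vertices in $S$ are not incident to edges of $Y_0$. With that established, each $\bar b_e$ is counted at most twice and each $\bar a_v$ at most once, and the theorem follows.
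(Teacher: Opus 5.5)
Your proposal is correct and is essentially the paper's own primal--dual argument: an optimal $\bar b$ for $D_0$, the sets $X_0\cup Y_0$, raising $\bar a$ on the independent set $S$, and the chain $w(X\cup Y)\le \bar a(V)+2\bar b(E)\le 2(\bar a(V)+\bar b(E))\le 2\cdot OPT$ via weak duality, with the running time and duality chain spelled out more explicitly than in the paper. One small imprecision: the region $0\le b_e\le w(e)$ of $D_0$ is bounded regardless of sign, and what nonnegativity of $w$ actually guarantees is that $D_0$ is nonempty (e.g.\ $b=0$ is feasible), so an optimal $\bar b$ exists.
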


We conclude this section by noting that the proof also yields an upper bound of 2 on the integrality gap of (WTC), which can be shown to be tight similarly as done for the integrality gap of the basic relaxation of the vertex cover problem (see also the proof of Theorem \ref{thm:hardness_complete}).

\begin{prop}\label{prop:gap2}
The integrality gap of (WTC) is
equal to \(2\).
\end{prop}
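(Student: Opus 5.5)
The statement has two halves. The upper bound (gap at most $2$) should follow from the primal--dual algorithm of Theorem \ref{thm:2apx}. For the lower bound I will build a family of instances whose gap tends to $2$, using the reduction in the proof of Theorem \ref{thm:hardness_complete} to transfer the classical vertex cover gap.

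\textbf{Upper bound.} For every nonnegative weight function $w$, the algorithm returns a strong total cover $X\cup Y$ and a feasible solution $(\bar a,\bar b)$ of (D) with $w(X\cup Y)\le 2(\bar a(V)+\bar b(E))$. By weak duality, $\bar a(V)+\bar b(E)$ is at most the optimum of the LP obtained from (WTC) by dropping the upper bounds. For nonnegative weights those upper bounds can be dropped without loss: truncating any coordinate at $1$ keeps feasibility and does not increase the cost. Hence $\mathrm{OPT}_{int}\le w(X\cup Y)\le 2\,\mathrm{OPT}_{LP}$, so the gap is at most $2$.

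\textbf{Lower bound.} Take $G=K_n$ and form $K$ as in the proof of Theorem \ref{thm:hardness_complete}, adding a vertex $v$ and giving weight $0$ to the set $E_0$, which is then only the edges from $v$ to the vertices of $K_n$. Vertices and edges of $K_n$ get weight $1$, and $v$ also gets weight $1$. By that proof, the integral optimum equals the vertex cover number of $K_n$, namely $n-1$. For a fractional solution, set $y_e=1$ on $E_0$, $x_u=\tfrac12$ for every $u\in V(K_n)$, $x_v=0$, and $y_e=0$ on $E(K_n)$. Then:
\begin{itemize}
\item every vertex constraint holds because each vertex is incident to an edge of $E_0$ with $y_e=1$, and $v$ itself is incident to all of them;
\item every edge constraint on $E_0$ holds through $y_e=1$;
\item every edge $(u,u')$ of $K_n$ satisfies $x_u+x_{u'}=1$.
\end{itemize}
The cost is $n/2$. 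The ratio is therefore at least $(n-1)/(n/2)=2-2/n$, which tends to $2$. Together with the upper bound, the supremum of the integrality gap equals $2$.

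\textbf{Main obstacle.} The only delicate point is the integral optimum. Here I would simply invoke the exchange argument from Theorem \ref{thm:hardness_complete}, checking that it applies with $E_0$ as above. It does, since $E_0$ still covers every vertex of $K$, which is the property that argument relies on. The remaining care is interpretive: "equal to $2$" is meant as a supremum over instances, and I would state that explicitly.
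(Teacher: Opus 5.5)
Your proof is correct and follows the paper's argument: the upper bound comes from the primal--dual analysis of Theorem~\ref{thm:2apx}, and your instance (the hardness reduction applied to $K_n$) is exactly the paper's $K_{n+1}$ with zero-weight edges at a distinguished vertex, with the same fractional solution of value $n/2$ against an integral optimum of $n-1$. Your truncation remark is harmless but not needed, since dropping the upper bounds from a minimization can only lower its optimum, so the dual bound already lies below the optimum of (WTC).
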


\begin{proof}
The upper bound follows directly from the proof of
Theorem~\ref{thm:2apx}. Indeed, the analysis shows that the weight of the strong total cover returned by the algorithm is at most twice the value of a feasible dual solution.
Hence its
weight is at most twice the optimal value of the (WTC), and therefore the
integer optimum is at most twice the LP optimum.

We now show that the factor $2$ is tight. Let $n\ge 2$, and consider the
complete graph $K_{n+1}$ with a distinguished vertex $v$. Assign weight 1 to every vertex of $K_{n+1}$, weight $0$ to every edge of $\delta(v)$, and weight 1 to every remaining edge.

Notice that, in any integer solution, we can assume without loss of generality that the zero-weight edges are part of the solution, hence all vertices are covered, and it only remains to cover the weight $1$ edges, that is, the edges of $K_{n+1} \setminus v$.
It is immediate to see that the integer optimum is obtained by selecting a vertex cover of $K_{n+1}\setminus v$, hence its value is $n-1$.

On the other hand, the optimum value of the linear relaxation is at most $n/2$, as witnessed by the feasible solution defined by  $x_u=\frac12, \forall u \in V(K_{n+1}) \setminus \{v\}$ and $y_e=1$ for every $ e\in \delta(v)$, and $y_e=0$ for any other edge. 
Hence,
\[
\frac{OPT_{IP}}{OPT_{LP}} \geq \frac{n-1}{n/2}=2-\frac{2}{n}
\]
which tends to $2$ as $n \rightarrow \infty$. This concludes the proof.
\end{proof}

\section{Polyhedral Results}\label{sec:polyhedral}
In this section we tackle the STCP from a polyhedral perspective. 
Given a strong total cover $C \subseteq D$, the corresponding characteristic vector $\chi^{C}\in \{0,1\}^D$ is defined as follows:
\begin{equation*}
  \chi^{C}_a:=\left\{
  \begin{array}{@{}ll@{}}
    1 & \text{if}\ a \in C  \\
    0 & \text{if}\ a \in D\setminus C.
  \end{array}\right.
\end{equation*}
Then, the Strong Total Cover Polytope $P_{TC}(G)$ is defined as follows:
\[
P_{TC}(G):=\hbox{conv}\{\chi^{C}: \text{$C$ is a strong total cover of $G$} \}.
\]
Observe that, if the graph $G$ contains isolated vertices, then the  $P_{TC}(G)$ fails to be full-dimensional. 
\begin{prop}\label{vertex_constr}
Let $G=(V,E)$ be a connected graph with $|V|\geq 2$. Then, $\dim(P_{TC}(G))=|V|+|E|$, that is $P_{TC}(G)$ is full-dimensional.
\end{prop}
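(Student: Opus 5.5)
The plan is the standard one: exhibit $|D|+1$ affinely independent characteristic vectors of strong total covers, where $D=V\cup E$.

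We start from the whole element set $D$, which is trivially a strong total cover. We then look at the sets $D\setminus\{a\}$ for each $a\in D$. If every such set is a strong total cover, then the $|D|+1$ vectors $\chi^D$ and $\chi^{D\setminus\{a\}}$ for $a\in D$ are affinely independent. Indeed, the differences $\chi^D-\chi^{D\setminus\{a\}}=e_a$ are the unit vectors, which are linearly independent. This would give $\dim(P_{TC}(G))=|D|=|V|+|E|$.

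So the key step is to check that removing any single element from $D$ leaves a strong total cover.
\begin{itemize}
\item If we remove an edge $e=(u,v)$, then $e$ is still covered by its endpoint $u$, which is still in the set. Every other element is still in the set and covers itself.
\item If we remove a vertex $v$, we need $v$ to be covered by some element still in the set. Only $v$ itself or an edge of $\delta(v)$ can do this.
\end{itemize}
This is where the hypotheses are used. Since $G$ is connected with $|V|\geq 2$, $v$ has no isolation, so $\delta(v)\neq\emptyset$. Any edge of $\delta(v)$ still lies in $D\setminus\{v\}$ and covers $v$. All other elements cover themselves, so $D\setminus\{v\}$ is again a strong total cover.

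The argument is short, and the only step needing care is the vertex-removal case. There the assumption excluding isolated vertices is exactly what is needed, in line with the remark preceding the statement that isolated vertices destroy full-dimensionality. Indeed, an isolated vertex $v$ satisfies $x_v=1$ for every strong total cover. Connectedness is stronger than necessary; having no isolated vertices suffices. Finally, $\dim(P_{TC}(G))\leq|D|$ holds trivially because $P_{TC}(G)\subseteq\R^D$, so equality follows.
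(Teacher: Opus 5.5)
Your proposal is correct and follows essentially the same approach as the paper, which uses the same $|D|+1$ points $\chi^{D}$ and $\chi^{D\setminus\{a\}}$ for $a\in D$. You also spell out the check the paper leaves as ``easy'': that each $D\setminus\{a\}$ is a strong total cover, using that no vertex is isolated.
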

\begin{proof}
Consider the set of incidence vectors $\chi^{D},\chi^{D \setminus a}, \forall a \in D$. It is easy to check that they form a set of $|D|+1$ affinely independent points. 
\end{proof}
We restate the basic inequalities that are clearly valid for $P_{TC}(G)$.
\begin{align}
\label{eq:vertex_edge} &x_v + \sum\limits_{e \in \delta(v)}y_e\geq 1,\qquad &\forall v \in V \\
\label{eq:edge_vertex}\qquad &x_v+x_w+y_e \geq 1,\qquad &\forall e=(v,w) \in E \\
\label{eq:box1}\qquad &z_a \geq 0,\qquad  &\forall a \in D \\
\label{eq:box2}\qquad &z_a \leq 1,\qquad   &\forall a \in D
\end{align}
We refer to \eqref{eq:vertex_edge} as the \textit{vertex inequalities}, to \eqref{eq:edge_vertex} as the \textit{edge inequalities}, and to \eqref{eq:box1} together with \eqref{eq:box2} as the \textit{box inequalities}. 
First, we establish that \eqref{eq:vertex_edge} and \eqref{eq:box2} are always facet-defining for $P_{TC}(G)$.
\begin{prop}\label{edge_constr}
Let $G=(V,E)$ be a connected graph with $|V|\geq 2$. Then inequalities \eqref{eq:vertex_edge} and \eqref{eq:box2} are facet-defining for $P_{TC}(G)$.
\end{prop}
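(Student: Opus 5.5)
The plan is to use Proposition~\ref{vertex_constr}: $P_{TC}(G)$ is full-dimensional in $\R^D$. So for each of the two inequalities it suffices to exhibit $|D|$ affinely independent characteristic vectors of strong total covers that satisfy it with equality. Two facts are used throughout. First, since $G$ is connected with $|V|\ge 2$, every element of $D$ has at least one adjacent element (every vertex has an incident edge, and every edge has two endpoints). Second, $D$ and $D\setminus\{b\}$ are strong total covers for every $b\in D$. In both constructions, every face point is either a fixed cover $C_0$ or $C_0$ with one extra element removed.

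\emph{Box inequality $z_a\le 1$.} I take the $|D|$ points $\chi^{D}$ and $\chi^{D\setminus\{b\}}$ for $b\in D\setminus\{a\}$. All of them contain $a$, so they lie on the face. Their differences from $\chi^D$ are the unit vectors $-\mathbf e_b$ for $b\neq a$, which are linearly independent. Hence the points are affinely independent, and $z_a\le1$ defines a facet.

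\emph{Vertex inequality at $v$.} The face points must contain exactly one element of $\delta[v]$. I use the following covers.
\begin{itemize}
\item $C_v:=D\setminus\delta(v)$. It is a cover: edges at $v$ are covered by $v$, and every other element belongs to $C_v$.
\item For each $e=(v,u)\in\delta(v)$, the set $C_e:=D\setminus(\delta[v]\setminus\{e\})$. It is a cover: $v$ is covered by $e$, and each other edge $(v,w)$ at $v$ is covered by $w\in C_e$.
\item For each $a\notin\delta[v]$, one further cover obtained by deleting $a$:
\begin{itemize}
\item If $a$ is an edge $(u,w)$ not incident to $v$, I take $C_v\setminus\{a\}$; here $a$ is still covered by $u$ and $w$.
\item If $a$ is a vertex $u\neq v$ with an incident edge $f\neq(u,v)$, I take $C_v\setminus\{u\}$; here $u$ is covered by $f$, and all edges at $u$ are in the set or covered by $v$.
\item If $a=u$ is a leaf whose only edge is $(u,v)$, then $C_v\setminus\{u\}$ fails, since nothing covers $u$. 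In that case I instead take $C_{(u,v)}\setminus\{u\}$, in which $u$ is covered by the edge $(u,v)$.
\end{itemize}
\end{itemize}
Each of these sets contains exactly one element of $\delta[v]$, so all points lie on the face. This gives $1+|\delta(v)|+|D\setminus\delta[v]|=|D|$ points.

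\emph{Independence.} The face does not contain the origin, because it satisfies an equation with right-hand side $1$. So it suffices to prove linear independence. Each point of the third type equals one of $\chi^{C_v},\chi^{C_e}$ minus $\mathbf e_a$. Subtracting the appropriate base point turns these into the vectors $-\mathbf e_a$ for $a\notin\delta[v]$. These span all coordinates outside $\delta[v]$, so we can eliminate those coordinates from the $1+|\delta(v)|$ base points. On the coordinates $\delta[v]$, the base points become exactly the unit vectors $\mathbf e_v$ and $\mathbf e_e$ for $e\in\delta(v)$. The full family is therefore triangular and linearly independent.

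The main obstacle is the cover check for the third type, specifically leaf neighbours of $v$, which forced the switch of base cover. I would double-check each case, and the case $|V|=2$, where $G$ is a single edge, directly.
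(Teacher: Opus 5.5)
Your proposal is correct and matches the paper's proof essentially verbatim. It uses the same covers $D$ and $D\setminus\{b\}$ for $z_a\le 1$, and for the vertex inequality the same covers $C_v$, $C_e$, $C_v\setminus\{a\}$, together with $C_{(u,v)}\setminus\{u\}$ for leaf neighbours $u$ of $v$. The only difference is cosmetic: you prove linear independence of the $|D|$ points directly, whereas the paper subtracts $\chi^{C_v}$ and checks the differences. Linear independence always implies affine independence, so your remark that the face avoids the origin is not needed.
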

\begin{proof}
Let $ a \in D $. Consider the strong total covers $C_0: = D$ and, for every $b \in D \setminus \{a\}$, $C_b: = D \setminus \{b\}$, and the corresponding incidence vectors. 
 The latter are $|D|$ and they lie on the face induced by $z_a \leq 1$. To check affine independence, it suffices to subtract $\chi^{C_0}$ from $\chi^{C_b}$ for $b \in D \setminus \{a\}$, obtaining the independent canonical vectors $-\mathbf{e}_b$. This shows that Inequalities \eqref{eq:box2} are facet-defining.
 Now, let $v \in V$ and consider the face $F$ induced by vertex-inequality associated to $v$. Consider the following strong total covers: define $C_a := (D \setminus \delta[v]) \cup \{a\}, \forall a \in \delta[v]$; for $a=v$ define $C^{v}_b:=C_v \setminus \{b\}, \forall b \in D \setminus\delta[v] $ if $b$ is not a neighbor of $v$ of degree 1; define $C^{b}:=(D \setminus \delta[v]) \setminus \{b\}\cup (v,b)$ for every degree-1 neighbor of $v$.  It is easy to check that the corresponding $|D|$ incidence vectors of the latter set of strong total covers lie on the face $F$. 
  To check affine
independence, subtract $\chi^{C_v}$. For $c\in\delta(v)$ the
resulting vectors are $\mathbf{e}_c-\mathbf{e}_v$; for $b\in D\setminus\delta[v]$ that is not a
degree-$1$ neighbor of $v$, they are $-\mathbf{e}_b$; while for every degree-$1$
neighbor $b$ they are
$
-\mathbf{e}_b+\mathbf{e}_{(v,b)}-\mathbf{e}_v
$.
Subtracting the vector $\mathbf{e}_{(v,b)}-\mathbf{e}_v$ from the latter gives
$-\mathbf{e}_b$. Hence the resulting vectors are linearly independent, and therefore
the vertex inequality defines a facet.

\end{proof}
In the following Proposition we show that under suitable conditions  the edge-inequalities and \eqref{eq:box1} are facet-defining for $P_{TC}(G)$.

\begin{prop}\label{prop:edge_facet}
Let $G$ be a connected graph and let $e=(v,w) \in E$. Then, the edge-inequality \eqref{eq:edge_vertex} and $y_e \geq 0 $ are facet-defining for $P_{TC}(G)$ if and only if $\deg(v),\deg(w) \geq 2$. Moreover, $x_u \geq 0$ for $u \in V$ is facet-defining if and only if $\deg(u) \geq 2$.
\end{prop}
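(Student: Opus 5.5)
We prove the two directions separately: necessity by showing that the inequality is implied by other valid inequalities, and sufficiency by exhibiting $|D|$ affinely independent strong total covers on the face. Throughout we use Proposition~\ref{vertex_constr}, so $P_{TC}(G)$ is full-dimensional and facets have dimension $|D|-1$. We assume $|V|\ge 2$, since otherwise there are no edges and the question is empty.

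\textbf{Necessity.} Suppose $\deg(w)=1$, so $\delta(w)=\{e\}$. The vertex inequality of $w$ reads $x_w+y_e\ge 1$. Adding $x_v\ge 0$ gives the edge inequality of $e$. Adding instead $x_v\le 1$ and $x_w\le 1$ to the identity $y_e = (x_w+y_e)-x_w$ gives $y_e\ge 0$, so $y_e\ge 0$ is redundant; more precisely, every cover in its face has $y_e=0$ and hence $x_w=1$. In both cases the face lies in the face of another valid inequality with a different normal vector. By full-dimensionality, the inequality therefore cannot define a facet.

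For $x_u\ge 0$ with $\deg(u)=1$, let $\delta(u)=\{f\}$. Every cover with $x_u=0$ has $y_f=1$, so the face is contained in the face of $y_f\le 1$ and is not a facet. Here $\deg(u)\ge 1$ holds because $G$ is connected and $|V|\ge 2$.

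\textbf{Sufficiency for $y_e\ge 0$ and $x_u\ge 0$.} Take $C_0=D\setminus\{e\}$ together with $C_0\setminus\{b\}$ for each $b\ne e$ such that the set remains a cover. Removing a single element $b$ from $D\setminus\{e\}$ fails only when $b$ is a vertex whose only incident edge is $e$, or when $b$ is an edge whose endpoint has degree one. Since $\deg(v),\deg(w)\ge 2$, neither endpoint of $e$ causes trouble. Degree-one vertices $\ell$ elsewhere, with unique edge $g$, need a repair: use $D\setminus\{e,\ell\}$, which is still a cover because $g$ covers $\ell$, and $D\setminus\{e,g\}$, which is a cover because $\ell$ covers $g$. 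So the only real issue is to check which deletions are allowed, and then subtract $\chi^{C_0}$ to obtain the vectors $-\mathbf e_b$.

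The argument for $x_u\ge 0$ is identical with $u$ in place of $e$. Deleting an edge incident to $u$ is still allowed, because $u$ has another incident edge.

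\textbf{Sufficiency for the edge inequality.} This is the main obstacle, because tight points need $x_v+x_w+y_e=1$ exactly, which constrains the configuration locally. We use three base covers:
\[
C_e=(D\setminus\{v,w\})\cup\{e\},\qquad C_v=D\setminus\{w,e\},\qquad C_w=D\setminus\{v,e\}.
\]
Each is a cover. In $C_e$, the vertices $v,w$ are covered by $e$. In $C_v$, $w$ is covered by an edge of $\delta(w)\setminus\{e\}$, which exists since $\deg(w)\ge 2$, and $e$ is covered by $v$. The case $C_w$ is symmetric.

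From $C_e$ we then delete each single element $b\notin\{v,w,e\}$ that can be removed, obtaining points on the face whose differences from $\chi^{C_e}$ are $-\mathbf e_b$. Deletion from $C_e$ fails only in these cases:
\begin{itemize}
  \item $b$ is a leaf $\ell$ with unique edge $g$; then delete $g$ instead of $\ell$, or use the repair of the previous step;
  \item $b$ is an edge $g=(v,z)$ or $(w,z)$, since $v$ and $w$ are absent, so $g$ is covered only by itself or by $z$; here delete $g$, since $z\in C_e$ covers it, unless $z$ is a leaf, in which case delete $z$, which is covered by $g$.
\end{itemize}
When a leaf $z$ is adjacent to $v$ with edge $g$, both $g$ and $z$ are needed in $C_e$. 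We handle this by also using the cover $C_v\setminus\{g\}$, whose difference from $\chi^{C_v}$ is $-\mathbf e_g$, and $C_v\setminus\{z\}$, which remains a cover because $g$ covers $z$. These points let us recover both unit vectors $\mathbf e_z$ and $\mathbf e_g$.

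Finally, the differences $\chi^{C_v}-\chi^{C_e}$ and $\chi^{C_w}-\chi^{C_e}$ supply the directions $\mathbf e_v-\mathbf e_e$ and $\mathbf e_w-\mathbf e_e$. Together with the unit vectors $\mathbf e_b$ for $b\notin\{v,w,e\}$, these span the hyperplane $\{z : z_v+z_w+z_e=0\}$. Hence the face has dimension $|D|-1$ and the edge inequality defines a facet.
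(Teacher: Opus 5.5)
Your strategy is the paper's. Necessity comes from the face being contained in the face of a vertex inequality or a box inequality. Sufficiency uses the same covers $D\setminus\{e\}$, $D\setminus\{u\}$, $D\setminus\{v,w\}$, $D\setminus\{e,v\}$, $D\setminus\{e,w\}$, together with single-element deletions from the first three. The conclusion is correct, but your list of which deletions ``fail'' is wrong, and one of your patches does not work as written.

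Under the degree hypothesis, no single deletion fails.
\begin{itemize}
\item Removing an edge from $D\setminus\{e\}$ or $D\setminus\{u\}$ leaves an endpoint of that edge present. So your claim that this fails when the edge has a degree-one endpoint is false.
\item The hypothesis is needed only in three places: deleting $v$ or $w$ from $D\setminus\{e\}$, deleting an edge of $\delta(u)$ from $D\setminus\{u\}$, and making $D\setminus\{e,v\}$ and $D\setminus\{e,w\}$ covers.
\item In $C_e=D\setminus\{v,w\}$ all edges are present and $e$ covers $v$ and $w$. Deleting a vertex $b\notin\{v,w\}$ is fine, because any edge incident to $b$ covers it (for a leaf, its pendant edge). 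Deleting an edge $b\neq e$ is fine, because $b$ has an endpoint outside $\{v,w\}$ (as $G$ is simple), and that endpoint covers $b$.
\end{itemize}
Hence your assertion that $g=(v,z)$ and a leaf $z$ are both needed in $C_e$ is false; only removing both at once fails. The extra points $C_v\setminus\{g\}$ and $C_v\setminus\{z\}$ (with your $C_v=D\setminus\{w,e\}$) are valid but unnecessary.

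Your patch for a leaf $\ell$ in the edge-inequality case does not give the direction $\mathbf e_\ell$. Deleting $g$ instead gives $\mathbf e_g$. The ``repair of the previous step'' uses $D\setminus\{e,\ell\}$, which has $x_v+x_w+y_e=2$ and so is not on the face. The right move is simply $C_e\setminus\{\ell\}$, which is a cover. With all $|D|-3$ deletions from $C_e$ available, your spanning argument coincides with the paper's.
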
 
\begin{proof}
First, suppose that $e$ has the end-point $v$ with $\deg(v)<2$, hence $\deg(v)=1$ since $v$ is not isolated. Then $x_v+x_w+y_e \geq 1$ can be written as the sum of the vertex-inequality $x_v + y_e \geq 1$ and $x_w \geq 0$, hence it cannot define a facet. Similarly, $y_e \geq 0$ can be obtained as the sum of $x_v + y_e \geq 1$ and $-x_v \geq -1$. 
Analogously, if \(u\in V\) has \(\deg(u)=1\), and \(f\) is the unique edge
incident with \(u\), then \(x_u\ge0\) is obtained as the sum of
$
x_u+y_f\ge 1
$
and
$
-y_f\ge -1,
$
which is the upper bound \(y_f\le 1\).

We now prove the converse statements.
First, suppose $\deg(v),\deg(w) \geq 2$. To show that inequality $y_e\geq 0$ is facet-defining,
 let $C_0 := D\setminus \{e\}$ and, for every $a \in D \setminus \{e\}$, let $C_a := D \setminus \{e,a\}$. Notice that the assumption on the degree of $v,w$ ensures that $C_0, C_a$ are strong total covers. The corresponding incidence vectors are $|D|$, and they lie on the face induced by $z_e \geq 0$. To check affine independence, it suffices to subtract $\chi^{C_0}$ from $\chi^{C_a}$ for  $a \in D \setminus \{e\}$, obtaining the independent canonical vectors $-\mathbf{e}_a$. 

To show that inequality \eqref{eq:edge_vertex} is facet-defining,
 let: $C_e := D\setminus \{v,w\}$, $C_v := D\setminus \{e,v\}$, 
 $C_w := D\setminus \{e,w\}$,
 $C_a := D \setminus \{v,w,a\}$ for every $a \in D \setminus \{e,v,w\}$. Again, the assumption on the degree of $v,w$ ensures that all of these are strong total covers. The corresponding incidence vectors are $|D|$, and they lie on the face induced by $x_v+x_w+y_e \geq 1$. To check affine independence, it suffices to subtract $\chi^{C_e}$ from all the others, obtaining the canonical vectors $-\mathbf{e}_a$ for $a \in D \setminus \{e,v,w\}$, and vectors $\mathbf{e}_w-\mathbf{e}_e$, $\mathbf{e}_v-\mathbf{e}_e$, that are linearly independent.

Finally, let \(u\in V\) with \(\deg(u)\ge2\). To show that \(x_u\ge0\) is
facet-defining, let
\(
C_0:=D\setminus\{u\}
\)
and, for every \(a\in D\setminus\{u\}\), let
\(
C_a:=D\setminus\{u,a\}.
\)
Again, it is easy to check that these are all strong total covers. The corresponding incidence vectors are $|D|$ and lie on the face \(x_u=0\). Subtracting
\(\chi^{C_0}\) from the vectors \(\chi^{C_a}\), \(a\in D\setminus\{u\}\), gives
the linearly independent vectors
\(
-\mathbf e_a,a\in D\setminus\{u\}.
\)
Hence \(x_u\ge0\) defines a facet. This completes the proof.
 
\end{proof}
A natural question arises from edge inequalities \eqref{eq:edge_vertex}, as they appear to belong to a particular family of rank-type constraints: $z(D(H))\geq \beta^{*}_T(H)$ where $H$ is an induced subgraph of $G$.  
 Notice that Inequalities \eqref{eq:edge_vertex} correspond to the case where the induced graph is just an edge, i.e. $K_2$. We now give a natural generalization of this. We stress that such rank constraints are not necessarily valid, since the vertices of a subgraph may be covered by edges that do not belong to the subgraph itself.  Take for instance the graph $G$ constructed from the complete bipartite graph $K_{2,2}$ by adding a pendant edge to each vertex. 

\begin{figure}[h]
    \centering
    \begin{tikzpicture}[scale=1.2, every node/.style={circle, draw, fill=black, minimum size = 2mm, inner sep=1.5pt}]
    \node[fill = red] (a) at (135:1.5) {};
    \node (b) at (45:1.5) {};
    \node[fill = red] (c) at (-45:1.5) {};
    \node (d) at (-135:1.5) {};

    \node (ap) at (135:2.75) {};
    \node (bp) at (45:2.75) {};
    \node (cp) at (-45:2.75) {};
    \node (dp) at (-135:2.75) {};

    \draw (a) -- (b) -- (c) -- (d) -- (a);
    \draw[blue, very thick] (a) -- (ap);
    \draw[blue, very thick] (b) -- (bp);
    \draw[blue, very thick] (c) -- (cp);
    \draw[blue, very thick] (d) -- (dp);
\end{tikzpicture}
    \caption{The graph $G$ and the strong total cover composed by the blue edges and red nodes.}
    \label{fig:C4_net}
\end{figure}

The rank-constraint inequality associated with $K_{2,2}$ is 
\[
\sum\limits_{v \in V(K_{2,2})}x_v + \sum\limits_{e \in E(K_{2,2})}y_e \geq4,
\]
see Proposition \ref{prop:total_cover_bipartite_complete}.
 However, this inequality is not valid for $G$, since it is violated by the strong total cover shown in Figure \ref{fig:C4_net}. 

In contrast, we show that rank constraints corresponding to cliques are valid, and under some natural condition they are even facet-defining. 
\begin{thm}\label{clique_ineq}
Consider a connected graph $G$ and let $K_{n}$ be an induced complete subgraph of $G$, with $n\geq 2$. Then, the $n$-clique inequality
\begin{align}\label{ineq:sub_clique}
\sum\limits_{v \in V(K_{n})}x_v + \sum\limits_{e \in E(K_{n})}y_e \geq n-1
\end{align}
is a valid inequality for $P_{TC}(G)$. Moreover, if each $v \in V(K_n)$ has at least one neighbor in $V \setminus V(K_{n})$, then it is facet-defining for $P_{TC}(G)$.
\end{thm}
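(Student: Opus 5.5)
The plan is to treat validity by a counting argument on the elements of $K_n$ that a cover leaves out, and to prove the facet property by exhibiting $|D|$ tight strong total covers, in the same style as Propositions \ref{edge_constr} and \ref{prop:edge_facet}.

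\textbf{Validity.} Let $C$ be a strong total cover of $G$, and let $U$ be the set of vertices of $K_n$ that are not in $C$, with $k:=|U|$. Every edge of $K_n$ with both endpoints in $U$ has no endpoint in $C$, so it can only be covered by itself and must lie in $C$. Hence the left-hand side of \eqref{ineq:sub_clique} evaluated at $\chi^C$ is at least
\[
(n-k)+\binom{k}{2}.
\]
For $k=0$ this equals $n$. For $k\ge 1$ we have $\binom{k}{2}\ge k-1$, so the bound is at least $n-1$. This proves validity. Equality forces $k\in\{1,2\}$, with no further elements of $D(K_n)$ in $C$. This description of the tight covers guides the facet construction.

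\textbf{Facet.} Since $G$ is connected, $P_{TC}(G)$ is full-dimensional by Proposition \ref{vertex_constr}, so it suffices to find $|D|$ affinely independent tight covers; equivalently, a tight base vector and $|D|-1$ linearly independent difference vectors. Let $R:=D\setminus D(K_n)$ be the elements outside the clique. For $u\in V(K_n)$ define
\[
C^u:=R\cup (V(K_n)\setminus\{u\}).
\]
Then $C^u$ is a strong total cover: the vertex $u$ is covered by an edge from $u$ to an outside neighbour, which exists by hypothesis and lies in $R$; every edge of $K_n$ has an endpoint different from $u$; and all outside elements belong to $C^u$. The set $C^u$ is tight. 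We then collect three families of tight covers.
\begin{enumerate}
\item For distinct $u,u'\in V(K_n)$, the covers $C^{u}$ and $C^{u'}$ give the differences $\mathbf e_u-\mathbf e_{u'}$. These span $n-1$ dimensions.
\item For an edge $uv$ of $K_n$, the set $R\cup (V(K_n)\setminus\{u,v\})\cup\{uv\}$ is a tight cover. Relative to $C^u$ it gives $\mathbf e_{uv}-\mathbf e_v$, which yields $\binom n2$ further directions.
\item For each $b\in R$, we want $C^u\setminus\{b\}$ to be a cover for a suitable $u$, which gives $-\mathbf e_b$.
\end{enumerate}

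\textbf{Main obstacle: family 3.} We must check that deleting one outside element keeps a cover.
\begin{itemize}
\item An outside vertex $x$ is still covered, because $G$ is connected, so $x$ has an incident edge, and all such edges lie in $R$.
\item An outside edge $g$ has at least one endpoint outside $K_n$ or in $V(K_n)\setminus\{u\}$. That endpoint is in $C^u$, so $g$ is still covered.
\item The only danger is that $b$ is the unique outside edge covering $u$. In that case we use $C^{u'}$ for some $u'\neq u$ instead, which is possible since $n\ge 2$. Then $u\in C^{u'}$ and $b$ is harmless.
\end{itemize}
All these difference vectors are taken relative to covers whose mutual differences already lie in the span of family 1, so we can translate everything to a single base $C^{u_0}$.

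\textbf{Conclusion.} The span contains $-\mathbf e_b$ for all $b\in R$, all $\mathbf e_u-\mathbf e_{u'}$, and all $\mathbf e_{uv}-\mathbf e_v$. These vectors are linearly independent, because the edge coordinates are triangular with respect to the vertex and outside coordinates. Their number is
\[
|R|+(n-1)+\binom n2=|D|-1,
\]
so the clique inequality \eqref{ineq:sub_clique} defines a facet.
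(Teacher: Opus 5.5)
Your proof is correct and takes essentially the paper's route: the same counting bound $(n-k)+\binom{k}{2}\ge n-1$ for validity, and for the facet the same tight covers ($C^u$, the edge covers $R\cup(V(K_n)\setminus\{u,v\})\cup\{uv\}$, and one-element deletions of outside elements). The differences are minor. You exhibit $|D|$ affinely independent tight points directly, whereas the paper shows that any equation $\lambda^T z=\lambda_0$ holding on the face is a multiple of the clique inequality. The paper also deletes outside elements from the edge covers $C_e$ rather than from $C^u$; there both endpoints of $e$ remain covered by $e$, so your case split for a unique outside edge at $u$ is not needed.
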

\begin{proof} 
Let $C$ be a strong total cover of $G$,
and let $p$ be the number of vertices of $K_n$ not in $C$. We can assume $p>1$ otherwise the inequality is clearly valid for $\chi^C$. Then, since $C$ must contain all the edges of $K_n$ whose endpoints are not in $C$, we have:
\[
|C\cap D(K_n)|\geq n-p+{p \choose 2}= n-1+\frac{(p-1)(p-2)}2\geq n-1. 
\]
This concludes the first part of the proof. Now, with the additional hypothesis we prove that the $n$-clique inequality is facet-defining.
Let $F$ be the face induced by the inequality \eqref{ineq:sub_clique} and suppose by contradiction that $F \subseteq \tilde{F}=\{z \in P_{TC}(G)\mid \lambda^{T}z = \lambda_{0} \}$. We show that there exists $a \in \mathbb{R}$ such that $\lambda_{z} =a,\forall z \in D(K_{n})$ and $\lambda_{z}=0,\forall z \in D \setminus D(K_{n})$. Define the sets $C_u:=(V(K_n) \setminus \{u\}) \cup ( D \setminus D(K_{n}))$ for $u \in V(K_{n})$ and $C_e:=(V(K_{n}) \setminus\{u,w\}) \cup \{e\} \cup (D \setminus D(K_{n}))$ for $e=(u,w) \in E(K_{n}) $. It is easy to check that they form strong total covers (in particular, in $C_u$ the vertex $u$ is covered by an incident edge in $E \setminus E(K_n)$ which exists since $u$ has a neighbor in $V \setminus V(K_{n})$). 
Since $|C_{e} \cap D(K_n)|=|C_{u} \cap D(K_n)|=n-1$, we have $\chi^{C_u},\chi^{C_e} \in F \subseteq \Tilde{F}$, 
in particular for $e=(u,w)\in E(K_n)$, we have $\lambda^{T}\chi^{C_{u}}=\lambda^{T}\chi^{C_e}$. This implies that $\lambda_{w}= \lambda_{e}$, since the two expressions differ by exactly one element. By repeating this procedure for all the elements of $K_n$ we obtain $\lambda_{z} = a \in \mathbb{R}, \forall z \in D(K_{n})$. Now, define $C^{e}_a:= C_e \setminus \{a\}$ 
for $ a\in D \setminus D(K_{n})$ and $e=(v,w) \in E(K_n)$. It turns out that $\chi^{C^{e}_a} \in F$ and in particular $\lambda^{T}\chi^{C^{e}_a} = \lambda^{T}\chi^{C_e}=\lambda_0$, this implies that $\lambda_a = 0$. By applying the same argument we have that $\lambda_a=0,\forall a \in D\setminus D(K_{n})$. Moreover, $\chi^{C_u} \in F$ and hence $\lambda_0 = (n-1)a$. This concludes the proof.
\end{proof}

Notice that the additional hypothesis of the Theorem \eqref{clique_ineq} is necessary.
Indeed, let $G = K_n$ and $v \in V(K_{n})$ and consider the induced subgraph by deleting $v$, denote it as $K_{n-1}$. Then, the $n$-clique inequality associated to $G$ can always be written as the sum of
\[
\sum\limits_{v \in V(K_{n-1})}x_v + \sum\limits_{e \in E(K_{n-1})}y_e \geq n-2
\]
and 
\[
x_v + \sum\limits_{e \in \delta(v)}y_e \geq 1
\]
which gives rise to the $n$-clique inequality.

\paragraph{Separation of $r$-clique inequalities} We now study the complexity of separating $r$-clique inequalities, and show that the separation problem is NP-complete.
Let \textit{$r$-Clique Separation Problem} ($r$-CSP) denote the separation problem of a $r$-clique inequality, where $r \geq 3$ is part of the input. Let $L_{TC}(G)$ be the linear relaxation of $P_{TC}(G)$ given by the vertex, edge and box inequalities.
The \textit{r-CSP} reads as follows:

\begin{itemize}
    \item \textbf{Instance:} A graph $G'=(V',E')$ and a point $(x^{*},y^{*}) \in \mathbb{Q}^{|V'|+|E'|}$, with $r \in \mathbb{N}, r \geq 3$.
    \item \textbf{Decide:} If there exists a violated $r$-clique inequality of $L_{TC}(G')$ of size $r$, that is, if there is a clique $V(K) \subseteq V(G')$ such that $|V(K)|=r$ and 
    \[
    \sum\limits_{v \in V(K)}x^{*}_v + \sum\limits_{e \in E(K)}y^{*}_e <r-1
    \]
\end{itemize}

Consider the $k$-Clique Problem ($k$-CP) which is the following:
\begin{itemize}
    \item \textbf{Instance:} A graph $G=(V,E)$ and an integer $k$.
    \item  \textbf{Decide:} If $G$ contains a complete subgraph $S \subseteq V(G)$ such that $|S| = k$.
\end{itemize}
The latter problem was shown in \cite{karp1975} to be NP-complete. 
We reduce from \textit{$k$-CP}. Membership in NP is immediate, since a clique of size $r$ serves as a polynomial certificate and the corresponding inequality can be checked in polynomial time. Now, let $(H,r)$ be an arbitrary instance of the $k$-CP, where $H=(V_H,E_H)$ is a graph and $r \geq 3$ is an integer. We construct an instance $(G',x^{*},y^{*},r)$ of the \textit{r-CSP} as follows. Let $G'$ be the graph constructed from $H$ by introducing a pendant edge for each vertex of $H$. Specifically, let $V'=V_H \cup \{u_v \mid v \in V_H\}$ and $ E'= E_H \cup \{(v,u_v) \mid \forall v \in V_H\}$. Now, define a feasible point $z^{*}=(x^{*},y^{*})$ as follows. Set $x^{*}_v = 1/2, \forall v \in V_H$ and $y^{*}_e = 0,\forall e \in E_H$, and $x^{*}_{u_v}=y^{*}_ {(v,u_v)}=1/2$ for every pendant vertex and edge. It is easy to check that $z^{*} \in L_{TC}(G')$ since all the inequalities of the linear relaxation are satisfied. Now, suppose $H$ contains a clique $S \subseteq V_H$ of size $r$. Since $H$ is a subgraph of $G'$, $S$ is also a clique of size $r$, thus the corresponding $r$-clique inequality is well-defined. Now, evaluating the associated inequality at the point $(x^{*},y^{*})$ gives
\[
\sum\limits_{v \in V(S)}x^{*}_v+\sum\limits_{e \in E(S)}y^{*}_e = r/2 < r- 1
\]
Since $r \geq 3$, $S$ defines a violated $r$-clique inequality in $G'$. For the other direction, suppose that there exists a violated $r$-clique inequality in $G'$. By definition, $G'$ contains a clique of size $r$. This must be a clique of $H$: indeed, every new vertex \(u_v\) is pendant and cannot be contained in a $r$-clique with \(r\ge3\). Therefore, the \textit{$r$-CSP} is NP-complete when $r$ is part of the input.

\subsection{Complete Polyhedral Characterizations}
In this section, we provide complete linear descriptions of $P_{TC}(G)$ for certain graph classes. Notice that, if a graph $G$ has components $G_1$, $G_2$, then strong total covers of $G$ are exactly the disjoint unions of strong total covers of $G_1$ and $G_2$, hence $P_{TC}(G)$ can be obtained by juxtaposing the descriptions of $P_{TC}(G_1)$ and $P_{TC}(G_2)$ (i.e. taking their cartesian product). Hence, in the following we restrict ourselves to connected graphs. In light of Proposition \ref{dom_cover} and of the fact that the dominating set polytope has a complicated facial structure already for cactus graphs, see \cite{DominatingSetPolytope}, we restrict ourselves to simple classes. On the positive side, we are able to derive the first complete linear characterization for the triangular cactus graphs.
First, we recall our result on forests and trees. 
\begin{thm}
Let $G$ be a tree. Then, $P_{TC}(G)$ is completely described by the inequalities \eqref{eq:vertex_edge},\eqref{eq:edge_vertex} and box-inequalities \eqref{eq:box1}--\eqref{eq:box2}.
\end{thm}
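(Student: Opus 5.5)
The plan is to reuse the balancedness argument from the earlier proposition on forests. By Proposition~\ref{prop:dominating}, the strong total covers of $G$ are exactly the dominating sets of the incidence graph $I(G)$. Writing $A$ for the closed-neighborhood matrix of $I(G)$, we have $z\in\{0,1\}^D$ and $Az\ge \mathbf 1$ exactly when $z=\chi^C$ for a strong total cover $C$. The rows of $Az\ge\mathbf 1$ are precisely the vertex inequalities \eqref{eq:vertex_edge} and the edge inequalities \eqref{eq:edge_vertex}. So it suffices to show that the polyhedron $Q:=\{z\in\R^D: Az\ge \mathbf 1,\ \mathbf 0\le z\le \mathbf 1\}$ is integral. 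Once that holds, $Q=\hbox{conv}\{\chi^C\}=P_{TC}(G)$, because every integral point of $Q$ is the characteristic vector of a strong total cover and, conversely, each such vector lies in $Q$.

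For integrality, the key steps run as follows.
\begin{enumerate}
\item If $G$ is a tree, then $I(G)$, being a subdivision of $G$, is again a tree. In particular $I(G)$ is strongly chordal.
\item By Farber's theorem \cite{Farber1983}, $A$ is then totally balanced, and hence balanced.
\item By the theorem of Fulkerson, Hoffman and Oppenheim \cite{FulkersonHoffmanOppenheim1974}, for a balanced $0/1$ matrix $A$ the system $\{z: Az\ge \mathbf 1, z\ge 0\}$ is totally dual integral, so this polyhedron is integral.
\end{enumerate}

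The one point needing care, and the main obstacle, is that $Q$ also contains the upper bounds $z\le \mathbf 1$. Integrality of $\{Az\ge\mathbf 1, z\ge 0\}$ does not automatically give integrality of its intersection with the unit cube. There are two ways to handle this.
\begin{itemize}
\item \emph{Direct argument.} Suppose $\bar z$ is a vertex of $Q$. Let $\bar z'$ be obtained by replacing each coordinate by $\min(\bar z_a,1)$; it still satisfies $A\bar z'\ge\mathbf 1$, since $A$ is $0/1$. Alternatively, one can invoke the known fact that for balanced $A$ the set covering polytope $\{Az\ge \mathbf 1, \mathbf 0\le z\le\mathbf 1\}$ is integral (Berge; Fulkerson–Hoffman–Oppenheim), because balancedness is inherited by the matrix obtained by appending identity rows.
\item \emph{LP optimization.} Show that every objective $c$ attains its minimum over $Q$ at an integral point. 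For $c\ge 0$, an optimal point of $\{Az\ge\mathbf 1,z\ge0\}$ can be taken integral and $0/1$, so it lies in $Q$. For general $c$, set $z_a=1$ on the coordinates with $c_a<0$. This reduces the problem to a smaller covering system with nonnegative costs, and the resulting system is again defined by a balanced matrix, since deleting rows and columns preserves balancedness.
\end{itemize}
Either route gives that $Q$ is integral, which proves the theorem.
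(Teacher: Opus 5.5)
Your proposal is correct and is essentially the paper's argument: the theorem is the earlier proposition on forests, proved via Farber's theorem for the closed-neighborhood matrix of $I(G)$ followed by Fulkerson--Hoffman--Oppenheim, and the paper writes \eqref{lp_tc} as $\{z\ge 0: Az\ge \mathbf 1\}$ without addressing the upper bounds $z\le\mathbf 1$ that you handle explicitly. Both the mixed-system form of Fulkerson--Hoffman--Oppenheim (balancedness survives appending identity rows) and your LP-optimization route are sound, but the truncation step in your ``direct argument'' does nothing, since a vertex of $Q$ already satisfies $z\le\mathbf 1$.
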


Now, let $C_n $ be a cycle graph with vertex set $V(C_n)=\{v_1, v_2, \dots, v_n\}$ and edge set $E(C_n)=\{e_1,e_2, \dots, e_n\}$ where $e_i=(v_i,v_{i+1})$ and index $n+1$ is identified with $1$. Notice that the incidence graph of $C_n$ is a cycle with $2n$ vertices. In particular, the equivalence between the strong total cover problem and the dominating set problem on the incidence graph yields the following Lemma. Let $P_{D}(G)$ be the dominating set polytope, that is, the convex hull of characteristic vectors of all dominating sets of a graph $G$.
\begin{lemma}\label{dom_cover}
Let $G$ be a graph. Then, $P_{TC}(G)=P_{D}(I(G))$.
\end{lemma}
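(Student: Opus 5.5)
The argument reduces to Proposition \ref{prop:dominating} together with the observation that the two polytopes live in the same space. The vertex set of the incidence graph $I(G)$ is exactly $D=V\cup E$, so both $P_{TC}(G)$ and $P_D(I(G))$ are subsets of $\R^D$. We use the same coordinates for both: the coordinate $z_a$ is indexed by the element $a\in D$, whether $a$ is read as a vertex or edge of $G$ or as a vertex of $I(G)$.

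Next we compare the vertex sets. Each polytope is defined as the convex hull of a finite set of $0/1$ characteristic vectors:
\begin{itemize}
\item for $P_{TC}(G)$, the vectors $\chi^C$ for $C\subseteq D$ a strong total cover of $G$;
\item for $P_D(I(G))$, the vectors $\chi^S$ for $S\subseteq D$ a dominating set of $I(G)$.
\end{itemize}
By Proposition \ref{prop:dominating}, a subset $S\subseteq D$ is a strong total cover of $G$ if and only if it is a dominating set of $I(G)$. The two families of subsets are therefore identical. Since the characteristic vectors are taken in the same coordinate system, the two generating sets of points in $\R^D$ coincide, and so do their convex hulls. This gives $P_{TC}(G)=P_D(I(G))$.

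There is no substantial obstacle here. The only point to state carefully is that the coordinate indexing is consistent, namely that the vertex of $I(G)$ corresponding to an edge $e$ of $G$ carries the same coordinate as the variable $y_e$, and that the subdivision vertex of $e$ has as its neighbours exactly the two endpoints of $e$. With that in place, the equivalence of the combinatorial objects transfers directly to the polytopes.

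As a remark, linear descriptions then translate both ways. In particular, the vertex inequalities \eqref{eq:vertex_edge} and the edge inequalities \eqref{eq:edge_vertex} are exactly the closed-neighbourhood inequalities $z(N[a])\ge 1$ of $I(G)$. This is how the lemma will be used for cycles, since $I(C_n)=C_{2n}$.
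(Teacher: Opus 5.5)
Your proof is correct and is the same argument the paper relies on. The paper gives no separate proof and deduces the lemma directly from Proposition~\ref{prop:dominating}: strong total covers of $G$ and dominating sets of $I(G)$ are the same subsets of $D$, so their characteristic vectors in $\R^D$, and hence their convex hulls, coincide.
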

Hence, we have $P_{TC}(C_n)=P_{D}(C_{2n})$ by Lemma \ref{dom_cover}.  The dominating set polytope of a graph $H$ can be modeled as the set covering polytope $Q(A)=\mbox{conv}\{ x \in \{0,1\}^{|V(H)|}: Ax \geq 1\}$, where $A$ represents the closed neighborhood matrix of $H$. In the specific case of $H=C_{2n}$, $A$ is the circulant matrix having exactly $3$ consecutive ones in each row. Following the notation of \cite{MinorInequalities}, we denote this matrix as $C^{3}_{2n}$, yielding $P_{TC}(C_n)=Q(C^{3}_{2n})$.
It was shown in \cite{DominatingSetPolytope} from the viewpoint of the
dominating set polytope, and independently in \cite{FacialStructure} from the
viewpoint of the set covering polyhedron, that the \textit{rank constraint} $\sum\limits_{i \in V}x_i \geq \lceil \frac{n}{3}\rceil$ is always a valid inequality for $Q(C^{3}_{n})$, and it is a facet if and only if $n \neq 0 \mod 3$. This implies the following consequence:
\begin{prop}
Let $C_n$ be a cycle of $n$ vertices. Then:
\begin{align}\label{rank:constraint}
\sum\limits_{v \in V(C_n)}x_v + \sum\limits_{e \in E(C_n)}y_e \geq \bigg\lceil \frac{2n}{3}\bigg\rceil
\end{align}
is facet-defining for $P_{TC}(C_n)$ if and only if $n \neq 0 \mod 3$.
\end{prop}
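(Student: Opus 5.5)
The plan is to transfer the known result on $Q(C^3_{m})$ through Lemma \ref{dom_cover}, using the identification $I(C_n)=C_{2n}$. I will make this identification explicit. Order the elements of $C_n$ cyclically as $v_1,e_1,v_2,e_2,\dots,v_n,e_n$. Then the closed neighborhood of each element in $I(C_n)$ consists of that element and its two cyclic neighbors in this sequence. For a vertex $v_i$ these neighbors are $e_{i-1}$ and $e_i$, giving the vertex inequality $x_{v_i}+y_{e_{i-1}}+y_{e_i}\ge 1$. For an edge $e_i$ they are $v_i$ and $v_{i+1}$, giving the edge inequality.

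Hence, after this relabeling of coordinates, $P_{TC}(C_n)=Q(C^3_{2n})$. The relabeling is a coordinate permutation, so it preserves both validity and facet-ness. Under it, inequality \eqref{rank:constraint} becomes exactly the rank constraint $\sum_{i=1}^{2n} z_i\ge \lceil 2n/3\rceil$ for $Q(C^3_{2n})$.

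Next I apply the cited theorem from \cite{DominatingSetPolytope,FacialStructure} with $m=2n$. Validity holds for every $m$, and the inequality is facet-defining if and only if $2n\not\equiv 0 \pmod 3$. Since $\gcd(2,3)=1$, this condition is equivalent to $n\not\equiv 0\pmod 3$, which is the claim.

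The main obstacle is simply making sure the hypotheses of the cited theorem match. The possible issues are the requirement $m\ge 3$ (here $2n\ge 6$ since $n\ge3$), and the fact that the theorem concerns the set covering polytope $Q$ rather than the up-monotone polyhedron. Here $P_{TC}$ is the polytope version, which agrees with $Q$.

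As a sanity check of the two directions, I would also argue them directly. For $n\equiv 0\pmod 3$, the rank inequality is the sum of the $2n/3$ closed-neighborhood inequalities at every third position of the $2n$-cycle. These neighborhoods partition the cycle, so the inequality is implied and is not a facet.

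For $n\not\equiv0\pmod3$, one would exhibit $2n$ affinely independent minimum dominating sets of $C_{2n}$ of size $\lceil 2n/3\rceil$. Its cyclic shifts, together with one variant placing a "doubled" cover at a different position, form a circulant family, and its matrix is nonsingular because $\gcd$ conditions hold. I would rely on the citation for this step rather than reprove it.
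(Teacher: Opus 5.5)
Your proposal is correct and follows the paper's route exactly: via Lemma~\ref{dom_cover} and $I(C_n)=C_{2n}$ you identify $P_{TC}(C_n)$ with $Q(C^3_{2n})$, then invoke the cited rank-constraint result with $m=2n$, noting that $3\mid 2n$ if and only if $3\mid n$. Your direct argument for $n\equiv 0 \pmod 3$ (the rank inequality is the sum of $2n/3$ disjoint closed-neighbourhood inequalities) is a correct extra that the paper does not spell out.
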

In \cite{DominatingSetPolytope}, the authors introduced another family of facet-defining inequalities. Subsequently, the authors \cite{MinorInequalities} showed that these inequalities are equivalent to the so-called \emph{minor inequalities}. Let $W=\{a_1,a_2, \dots,a_l\} \subset V \cup E$ such that the number of elements between $a_i$ and $a_{i+1}$ (where indices are taken$\mod l$) is a multiple of 3 that we denote by $3k_i$.
Then, the \textit{minor inequalities} read as follows:
\begin{align}\label{eq:minor_ineq}
2\sum\limits_{i \in W}z_i + \sum\limits_{i \notin W}z_i \geq \sum\limits_{i=1}^{l}k_i+ \big \lceil\frac{l}{2}\big \rceil
\end{align}
The authors in \cite{DominatingSetPolytope} proved that \textit{minor inequalities}, basic-facet-defining inequalities and rank constraint fully describe the $P_{TC}(C_n)$ when the underlying graph is a cycle of length $n$. This allows us to derive the following consequence
\begin{thm}
Let $C_{n}$ be a cycle of $n$ vertices. Then, $P_{TC}(C_n)$ is completely characterized by \eqref{eq:vertex_edge}--\eqref{eq:box2},\eqref{rank:constraint} and \eqref{eq:minor_ineq}.
\end{thm}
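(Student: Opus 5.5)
The plan is to transfer the known description of the dominating set polytope of even cycles through Lemma \ref{dom_cover}. Since the incidence graph $I(C_n)$ is the cycle $C_{2n}$, we have $P_{TC}(C_n)=P_D(C_{2n})=Q(C^3_{2n})$. The coordinates are indexed by the elements $v_1,e_1,v_2,e_2,\dots,v_n,e_n$ in this cyclic order, which is the cyclic order of the vertices of $C_{2n}$. The first step is to fix this identification explicitly and check two correspondences. The closed neighbourhood of $v_i$ in $C_{2n}$ is $\{e_{i-1},v_i,e_i\}$, which gives the vertex inequality \eqref{eq:vertex_edge}. The closed neighbourhood of $e_i$ is $\{v_i,e_i,v_{i+1}\}$, which gives the edge inequality \eqref{eq:edge_vertex}. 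So the rows of $C^3_{2n}$ are exactly \eqref{eq:vertex_edge} and \eqref{eq:edge_vertex}, and the box constraints are literally the same in both settings.

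The second step is to invoke the theorem of \cite{DominatingSetPolytope}, restated in the minor-inequality language of \cite{MinorInequalities}. It says that $P_D(C_m)$ is described by four families: the closed-neighbourhood inequalities, the bounds $0\le z\le 1$, the rank constraint $\sum z_i\ge\lceil m/3\rceil$, and the minor inequalities. We take $m=2n$ and translate each family under the identification. The rank constraint becomes \eqref{rank:constraint}, since $\lceil 2n/3\rceil$ is the rank bound for $2n$ elements. For the minor inequalities, the condition that the number of elements strictly between consecutive $a_i,a_{i+1}\in W$ is $3k_i$ is phrased along the cyclic sequence of elements of $C_n$, which is exactly the vertex sequence of $C_{2n}$. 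Hence \eqref{eq:minor_ineq} is precisely the image of the minor inequalities of $Q(C^3_{2n})$.

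The only delicate point is bookkeeping. We must make sure the cited theorem applies to every even $m=2n$ with $n\ge 3$, including the case $2n\equiv 0 \pmod 3$, where the rank constraint is valid but redundant; including a redundant valid inequality does not harm a complete description. We must also make sure the cited description does not rely on extra families that we have omitted.

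I expect the main obstacle to be this faithful transcription: confirming that the hypotheses and the index conventions of \cite{DominatingSetPolytope}/\cite{MinorInequalities} (the definition of $W$, the spacing $3k_i$, and the right-hand side $\sum k_i+\lceil l/2\rceil$) match \eqref{eq:minor_ineq} word for word. Once that is checked, the theorem follows immediately from Lemma \ref{dom_cover}, because equal polytopes have the same linear descriptions.
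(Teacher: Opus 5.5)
Your proposal is correct and is essentially the paper's argument. Via Lemma~\ref{dom_cover} you identify $P_{TC}(C_n)$ with $P_D(C_{2n})=Q(C^3_{2n})$, note that the closed-neighbourhood rows of $C_{2n}$ are exactly the vertex and edge inequalities, and transfer the description of the dominating set polytope of cycles from \cite{DominatingSetPolytope} (in the minor-inequality form of \cite{MinorInequalities}); the paper does this by direct citation, so your explicit check of the ordering $v_1,e_1,\dots,v_n,e_n$ and of the harmless redundancy of the rank constraint when $n\equiv 0 \pmod 3$ is just a more careful version of the same step.
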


We now turn our attention to triangular cactus graphs. For our purposes, given a graph $G$ denote with $\mathcal{T}(G)$ the set of all triangles of $G$.
We recall that the total graph $T(G)$ of a triangular cactus graph is perfect, as shown in \cite{RaoRavindra1977PerfectTotalGraphs}. Moreover, since stable sets of $T(G)$ are in one-to-one correspondence with total matchings of $G$, we have
$
P_{\mathrm{stable}}(T(G)) = P_{TM}(G)
$,
where $P_{stable}$ denotes the stable set polytope.
By a classical theorem of Chvátal \cite{Chvatal}, the stable set polytope of a perfect graph is completely described by its clique inequalities which now we introduce. Let $\mathcal{K}$ be the set of all maximal cliques of $T(G)$. Then, the clique inequalities read as follows:
\begin{align}\label{ineq:cliq}
\sum_{v\in K} x_v \le 1, \quad \forall K \in \mathcal{K}
\end{align}
Notice that all maximal cliques of $T(G)$ fall into precisely  four categories: star-cliques induced by $v \cup \delta(v),\forall v \in V(G)$; the edge-cliques $\{u,v,e\}$ for each $e=(u,v) \in E(G)$; vertex-triangles cliques $\{u,v,w\}$ induced by the three vertices of each triangle in $G$ and edge-triangle cliques $\{e_1,e_2,e_3\}$ induced by the three edges of each triangle in $G$. 
Hence, we derive a complete description of $P_{TM}(G)$. First we recall the basic inequalities describing the $P_{TM}(G)$:
\begin{align}
\label{basic_vertex} &x_v + \sum\limits_{e \in \delta(v)}y_e \leq 1, \qquad &\forall v \in V,\\
\label{basic_edge}&x_v+x_w+y_e \leq 1, \qquad & \forall e=(v,w) \in E,\\
\label{basic_box}&z_a \geq 0, \qquad & \forall a \in D.
\end{align}
\begin{prop}
Let $G$ be a triangular cactus graph. 
Then, $P_{TM}(G)$ is completely described by \eqref{basic_vertex}--\eqref{basic_box}, $\sum\limits_{v \in T}x_v \leq 1, \forall T \in \mathcal{T}(G)$ and $\sum\limits_{e \in E(T)}y_e \leq 1, \forall T \in \mathcal{T}(G)$.
\end{prop}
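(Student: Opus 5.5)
The plan is to use Chvátal's theorem, which the paper recalls just before the statement. By \cite{RaoRavindra1977PerfectTotalGraphs}, $T(G)$ is perfect when $G$ is a triangular cactus. Stable sets of $T(G)$ are exactly the total matchings of $G$, so $P_{TM}(G)=P_{\mathrm{stable}}(T(G))$. For a perfect graph this polytope is
\[
\{z\ge 0:\ z(K)\le 1\ \ \forall K\in\mathcal{K}\},
\]
where $\mathcal{K}$ is the family of maximal cliques of $T(G)$. It therefore suffices to list the maximal cliques of $T(G)$ and check that their clique inequalities are exactly the stated ones, with nonnegativity giving \eqref{basic_box}.

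First I classify the cliques of $T(G)$, refining the list from the proof of Proposition~\ref{prop:cliquecov} to the case where every block is an edge or a triangle. Suppose a clique $K$ contains a vertex $u$.
\begin{itemize}
\item If $K$ contains another vertex $v$, then $u,v$ are adjacent in $G$, and any further element of $K$ is the edge $(u,v)$ or a vertex adjacent to both. In a triangular cactus such a vertex $w$ forms a triangle $uvw$, and no clique contains both an edge and a third vertex. Hence $K$ lies inside an edge-clique $\{u,v,(u,v)\}$ or a vertex-triangle $\{u,v,w\}$.
\item If $K$ contains no other vertex, then all its edges are incident to $u$, so $K\subseteq\{u\}\cup\delta(u)$.
\end{itemize}
Now suppose $K$ contains only edges. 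Pairwise incident edges form either a star $\subseteq\delta(u)$ or the three edges of a triangle.

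So every maximal clique is a star-clique, an edge-clique, a vertex-triangle, or an edge-triangle, as the paper states. Possible non-maximality (for instance a star at a vertex of degree $1$ lies inside its edge-clique) is harmless, because adding the inequality of a smaller clique does not change the polytope.

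It remains to match the inequalities.
\begin{itemize}
\item Star-cliques give the vertex inequalities \eqref{basic_vertex}.
\item Edge-cliques give the edge inequalities \eqref{basic_edge}.
\item Vertex-triangles give $\sum_{v\in T}x_v\le1$.
\item Edge-triangles give $\sum_{e\in E(T)}y_e\le1$.
\end{itemize}
Together with \eqref{basic_box}, Chvátal's theorem says this system describes $P_{\mathrm{stable}}(T(G))=P_{TM}(G)$.

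The only step needing care is the clique classification. The key point is that every triangle of $G$ is a block, so two adjacent vertices have at most one common neighbour, which lies in their triangle. This rules out larger vertex cliques, such as $K_4$, and rules out an edge incident to a vertex in some other configuration. I expect this step to be short.
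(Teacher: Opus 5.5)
Your proposal is correct and follows the same route as the paper. Both use the perfection of $T(G)$ (Rao--Ravindra), the identity $P_{TM}(G)=P_{\mathrm{stable}}(T(G))$, Chv\'atal's clique description, and the split of the maximal cliques of $T(G)$ into star-, edge-, vertex-triangle and edge-triangle cliques. The paper only asserts that classification; your argument for it, that two adjacent vertices have at most one common neighbour because every triangle is a block, supplies the missing justification.
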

Although the polytope $P_{TM}(G)$ admits a complete description when the underlying graph is a triangular cactus, the same does not hold for the strong total cover polytope. To obtain a complete characterization, we need an additional family of inequalities, which naturally arise as the covering analogue of the clique inequalities.
We show next that, for any such graph $G$, $P_{TC}(G)$ is described by vertex inequalities \eqref{eq:vertex_edge}, edge inequalities \eqref{eq:edge_vertex}, box inequalities \eqref{eq:box1}--\eqref{eq:box2}, and clique inequalities \eqref{ineq:sub_clique} for the triangles of $G$. We adopt a decomposition approach: our cactus $G$, apart from trivial cases, can always be decomposed as a 1-sum of two triangular cactus graphs $G_1$ and $G_2$. We will show that, if our formulation described above is integral for $G_1$ and $G_2$, then it is integral for $G$. This is non-trivial, since the 1-sum operation, simple as it may appear, can introduce new facet-defining inequalities that were not needed for any of the two pieces. For instance, one can check that if $G$ is the net graph in Figure \ref{fig:net} and $G_2$ is a single edge, hence $G_1$ is  a triangle with two leaves, then both $P_{TC}(G_1)$ and $P_{TC}(G_2)$ can be described by \eqref{eq:vertex_edge}-\eqref{eq:box2} only, while the clique inequality is facet-defining for $P_{TC}(G)$.

The proof relies on the following idea. Let $z\in \R^{D(G)}$ satisfying our formulation. Using the integrality of the formulations for $G_1$ and $G_2$, we show that $z\in P_{TC}(G)$. The point $z$ can be projected onto $\R^{D(G_1)}$ and $\R^{D(G_2)}$. However, these projected points might fail to satisfy the formulations for $G_1$, $G_2$, since they may violate the inequality associated with the vertex 
involved in the $1$-sum. To overcome this issue, we suitably modify the projected points so that they belong to $P_{TC}(G_1)$ and $P_{TC}(G_2)$. By integrality, each of these points can then be expressed as a convex combination of strong total covers of the corresponding graph $G_1$ and $G_2$. 
The final step consists of combining these strong total covers into strong total covers of $G$, thereby expressing the original point $z$ as a convex combination.
The latter is the most technical step; it repeatedly uses similar elementary exchange
operations. Suppose that a graph is obtained as a 1-sum \(H=H_1\oplus H_2\).
If two covers behave well at the common vertex, then one may exchange their
\(H_2\)-parts and still obtain two covers. This operation preserves the sum of
the two incidence vectors, and therefore can be used to move from a convex combination to an equivalent one with better properties. We record this observation separately.

\begin{lemma}\label{lem:swap}
    Let $H=H_1\oplus H_2$, such that $h$ is the vertex involved in the 1-sum, and let $A, B$ be strong total covers of $H$. Suppose that one of the following holds:
    \begin{enumerate}
        \item $h\in A\cap B$;
        \item $h\in A\setminus B$, and $A\setminus D(H_2)=A\cap D(H_1)\setminus\{h\}$ is a strong total cover of $H_1$.
    \end{enumerate}
    Then $A'=A\setminus D(H_2)\cup (B\cap D(H_2))$ and $B'=B\setminus D(H_2) \cup (A \cap D(H_2))$ are strong total covers of $H$. Moreover, $\chi^A+\chi^B=\chi^{A'}+\chi^{B'}$.
\end{lemma}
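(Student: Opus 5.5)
The identity $\chi^A+\chi^B=\chi^{A'}+\chi^{B'}$ is immediate. $D(H)$ is the disjoint union of $D(H)\setminus D(H_2)$ and $D(H_2)$, where $D(H_2)$ contains $h$ and $D(H)\setminus D(H_2)=D(H_1)\setminus\{h\}$. Under this partition, $A'$ and $B'$ just redistribute the two parts of $A$ and $B$ between each other.

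So the real content is that $A'$ and $B'$ are strong total covers. The plan is to check, element by element, that every $a\in D(H)$ is covered. The useful observation is that every covering relation either stays inside $D(H_1)$ or stays inside $D(H_2)$, since no edge joins $D(H_1)\setminus\{h\}$ to $D(H_2)\setminus\{h\}$.

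Take an element $a\in D(H_2)\setminus\{h\}$. Its closed neighborhood in $I(H)$ lies entirely in $D(H_2)$. For $A'$, this neighborhood meets $A'$ exactly as it meets $B$, so $a$ is covered because $B$ covers it. Symmetrically, $a$ is covered in $B'$ because $A$ covers it.

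Take an element $a\in D(H_1)\setminus\{h\}$. Its closed neighborhood lies in $D(H_1)$ and may contain $h$. The only membership that can change for $A'$ is that of $h$, which in $A'$ follows $B$. So $a$ is covered in $A'$ unless $a$ was covered in $A$ only via $h$, i.e.\ $a$ is an edge of $H_1$ incident to $h$, and $h\in A\setminus B$. The same reasoning applies to $B'$, where $h$ now follows $A$.

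It remains to treat the vertex $h$ itself and these exceptional edges, using the hypotheses. In case 1, $h\in A\cap B$, hence $h\in A'\cap B'$. Then $h$ is covered, and every element that relied on $h$ still sees it, so both sets are covers.

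In case 2, $h\in A\setminus B$, so $h\in B'$ and $h\notin A'$.
\begin{itemize}
\item[(i)] $B'$ contains $h$. Every element of $D(H_1)$ that $B$ covered through $h$ is still covered, and everything else is as above. Hence $B'$ is a cover.
\item[(ii)] For $A'$, the $H_1$-part is $A\setminus D(H_2)$, which by hypothesis is a strong total cover of $H_1$ without using $h$. Therefore every element of $D(H_1)$, including $h$ and the edges of $H_1$ at $h$, is covered by elements of $D(H_1)\setminus\{h\}$ lying in $A'$. Elements of $D(H_2)\setminus\{h\}$ are covered as shown earlier.
\end{itemize}
So $A'$ is a cover.

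The only delicate point is this bookkeeping at $h$. Specifically, one must note that the hypothesis in case 2 is exactly what is needed to cover $h$ and its $H_1$-edges in $A'$, after $h$ and its $H_2$-edges have been taken over from $B$.
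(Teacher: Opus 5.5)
Your proof is correct and takes the same route as the paper's. Like the paper, you use that covering relations never cross between $D(H_1)\setminus\{h\}$ and $D(H_2)\setminus\{h\}$, check each side separately, and invoke the case-2 hypothesis to cover $h$ and its $H_1$-edges in $A'$ once $h$ has been dropped; you are only more explicit in isolating the elements that could lose coverage, namely the edges of $H_1$ at $h$ when $h$ leaves the set.
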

\begin{proof}
The equation among incidence vectors immediately follows from the definitions, hence to conclude the proof it suffices to check that, in both cases, all elements of $H_1$ and $H_2$ are covered by $A', B'$. In the first case this is immediate, since all elements of $H_i$ are covered by both $A\cap D(H_i)$ and $B\cap D(H_i)$ (with $h$ belonging to both), for $i=1,2$. Let us focus on the second case: the assumption implies that all the elements of $H_1$ (including $h$) are covered by $A\setminus D(H_2)$, and similarly all elements of $H_2$ (apart from $h$) are covered by $B\cap D(H_2)$ since $h\not \in B$, implying that $A'$ is a cover. $B'$ is clearly a cover since $B\setminus D(H_2)$ covers all elements of $H_1$ apart from $h$, and $A\cap D(H_2)$ covers all elements of $H_2$ including $h$.
    \end{proof}

We are now ready to state our result.

\begin{thm}
Let $G$ be a triangular cactus graph. Then, $P_{TC}(G)$ is completely described by the basic inequalities \eqref{eq:vertex_edge}--\eqref{eq:box2} and $3$-clique inequalities $\sum\limits_{v \in V(T)}x_v+\sum\limits_{e \in E(T)}y_e\ge 2$
for every triangle $T \in \mathcal{T}(G)$.
\end{thm}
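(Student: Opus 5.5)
We argue by induction on the number of blocks of $G$. Let $Q(G)$ denote the polyhedron defined by \eqref{eq:vertex_edge}--\eqref{eq:box2} together with the $3$-clique inequalities of all triangles of $G$. Validity of every inequality is already known (Theorem \ref{clique_ineq} for triangles), so only $Q(G)\subseteq P_{TC}(G)$ needs proof. We may assume $G$ is connected.

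\textbf{Base cases.} If $G$ has a single block, it is either an edge, which is covered by the tree theorem, or a triangle. For the triangle, $Q(K_3)$ lives in $\R^6$. We check integrality directly: a fractional vertex of $Q(K_3)$ would be determined by six tight inequalities among the three vertex, three edge, one clique and twelve box inequalities. Using the symmetry of $K_3$, a short case analysis (or a computer check) shows that each such vertex is integral.

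\textbf{Inductive step.} Otherwise write $G=G_1\oplus G_2$ at a cut vertex $h$, where $G_2$ is a leaf block (an edge $hu$ with $u$ a leaf, or a triangle $huv$ whose vertices other than $h$ have no other blocks, up to pendant trees handled by the induction on $G_1$). Both pieces are triangular cacti with fewer blocks, so $Q(G_i)=P_{TC}(G_i)$. Fix $z\in Q(G)$ and let $z^1,z^2$ be its projections. Every inequality of $Q(G_i)$ other than the vertex inequality at $h$ is inherited from $Q(G)$; indeed each triangle lies in exactly one $G_i$. Set
\[
s_i=x_h+y(\delta_{G_i}(h)),\qquad d_i=\max(0,1-s_i).
\]
Since $s_1+s_2-x_h\ge 1$, at most one $d_i$ is positive unless $x_h$ is small. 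Define $\tilde z^i$ from $z^i$ by adding $d_i$ to a fictitious ``external cover'' of $h$. Concretely, attach a pendant edge $f_i=hh_i'$ to $G_i$ and put $y_{f_i}=d_i$, $x_{h_i'}=1-d_i$. This is still a triangular cactus with the same number of triangle blocks, and it has fewer blocks than $G$ when $G_{3-i}$ is not a single edge. Otherwise we handle the edge directly. Check $\tilde z^i\in Q(G_i+f_i)$ and decompose it by induction as a convex combination of covers. Covers using $f_i$ are exactly those in which $h$ must be covered from the other side. Their total weight is $d_i\le y(\delta_{G_{3-i}}(h))$, which is the total weight of covers on the other side that cover $h$ by an edge of $\delta_{G_{3-i}}(h)$.

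\textbf{Gluing.} Next we glue the two decompositions. We split $[0,1]$ into intervals and classify the covers on each side into three types according to the status of $h$: (a) $h\in A$; (b) $h\notin A$ but $h$ is covered by an edge of $G_i$; (c) $h$ is uncovered within $G_i$, that is, $f_i$ is used. We must pair the pieces so that type (c) on one side always meets type (b) on the other, type (a) meets type (a), and the coordinate $x_h$ agrees. The measure of type (a) equals $x_h$ on both sides because the projections agree on $h$. Type (b) and type (c) are then matched using $d_i\le y(\delta_{G_{3-i}}(h))-$(mass of type (b) on side $3-i$ that is also of type (a) in $h$-coverage). We fix any mismatch by exchanging $G_2$-parts of two covers with Lemma \ref{lem:swap}, which preserves the convex combination.

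\textbf{Main obstacle.} The delicate step is the inequality needed for this matching when $G_2$ is a triangle $huv$. The mass of covers of $G_2$ whose edges $hu,hv$ are used only because $u$ or $v$ need them may exceed what $G_1$ can absorb. Here the clique inequality $x_h+x_u+x_v+y_{hu}+y_{hv}+y_{uv}\ge2$ is exactly what guarantees enough flexibility. We first try to prove that for the triangle one can choose the decomposition of $\tilde z^2$ so that type (c) covers have weight exactly $d_2$ while type (b) covers have weight at least $y(\delta_{G_2}(h))-$(surplus). We establish this by an explicit enumeration of the minimal covers of a triangle with pendant at $h$, which is where the clique inequality enters. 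This is the step we expect to be the core technical difficulty; the net-graph example shows it cannot be avoided.
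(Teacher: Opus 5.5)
\textbf{There is a genuine gap: the reduction does not shrink the instance in the case that carries the difficulty.}

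With $G_2$ a leaf block, $G_1+f_1$ has $b(G_1)+1=b(G)$ blocks, where $b(\cdot)$ is the number of blocks. So your induction on blocks does not apply to it even when $G_2$ is a triangle; there you would have to induct on $|V|$ instead. Worse, when $G_2$ is a pendant edge $hu$, the graph $G_1+f_1$ is isomorphic to $G$ itself. Invoking the induction hypothesis for $\tilde z^1$ is then circular.

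Your fallback, ``we handle the edge directly'', is therefore the entire difficulty. In the net graph of Figure~\ref{fig:net} every leaf block is a pendant edge. It is exactly the example where both 1-sum pieces are described by \eqref{eq:vertex_edge}--\eqref{eq:box2} alone, yet the triangle inequality is a facet of $P_{TC}(G)$. Likewise, for a triangle leaf block, $G_2+f_2$ is a triangle with a pendant edge, which your base cases do not cover. Finally, your last paragraph is a plan (``we first try to prove\dots'') rather than an argument, so the proposal never actually uses the clique inequality.

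\textbf{The obstacle you flag is not there.} If $d_i>0$, the vertex inequality at $h$ is tight for $\tilde z^i$. Hence every cover in the decomposition contains exactly one element of $\{h,f_i\}\cup\delta_{G_i}(h)$. Covers containing $f_i$ therefore contain all $G_i$-neighbours of $h$ and cover $D(G_i)\setminus\{h\}$. Type (b) then has weight exactly $1-x_h-d_i$. The pairing (a)--(a), (c)--(b), (b)--(c), (b)--(b) only needs $d_1+d_2\le 1-x_h$. This follows from the vertex inequality at $h$ in $G$, exactly as in the paper's Claim~\ref{cl:cactus1}; no clique inequality and no applications of Lemma~\ref{lem:swap} are needed.

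\textbf{How to repair it.} Your pendant-edge device is actually a real simplification of the paper's Claim~\ref{cl:cactus2}, but only under two conditions. First, induct on $|V|$. Second, choose $h$ and a split of the branches at $h$ so that both $G_1$ and $G_2$ have at least three vertices; then $G_i+f_i$ has fewer vertices than $G$. The graphs admitting no such split are trees and the triangle with zero, one, two or three pendant edges at distinct vertices, the last being the net. These must be settled separately, either by computation or by an exchange argument like the paper's. That is where the triangle inequality genuinely enters: the paper raises $x_r$ by the slack $s_1$, decomposes $G_1$ and $G_2$ themselves, and uses the slack $\ge s_1$ of the triangle inequality to rearrange covers via Lemma~\ref{lem:swap}.

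Without handling those base graphs, your proof is incomplete.
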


\begin{proof} For the sake of simplicity we use the term \lq\lq cover'' instead of \lq\lq strong total cover''. 
Recall that the blocks of $G$ are edges and triangles; we write $G=G_1\oplus G_2$ to indicate that $G$ is the 1-sum of triangular cacti $G_1$, $G_2$, i.e. $G$ is obtained from them by identifying a vertex.
Let $z=(x,y)\in \R^{D(G)}$ be a point satisfying \eqref{eq:vertex_edge}-\eqref{eq:box2} and all 3-clique inequalities, we want to show that $z\in P_{TC}(G)$, i.e. that $z$ is a convex combination of incidence vectors of (strong total) covers of $G$: since our relaxation is a rational polyhedron, it suffices to show the claim when $z$ is a rational vector. We will proceed by induction on the number of blocks of $G$. The base case is immediate, since the basic relaxation \eqref{eq:vertex_edge}-\eqref{eq:box2} describes $P_{TC}(G)$ exactly when $G$ is a single edge or a triangle. For the induction step, let $G=G_1\oplus G_2$ be any 1-sum decomposition of $G$. Let $z^{(1)}$ be the restriction of $z$ to the elements of $G_1$. Notice that $z^{(1)}$ satisfies all edge inequalities \eqref{eq:edge_vertex}, box constraints \eqref{eq:box2}, and all vertex inequalities \eqref{eq:vertex_edge} apart from the one relative to the vertex involved in the 1-sum, which we call $r$. Hence, let $s_1=\max(0,1-x_r-y(\delta_{G_1}(r)))$ be the slack of this inequality, and define $\bar z^{(1)}$ from $z^{(1)}$ by replacing the $x_r$ coordinate with $x_r+s_1$. Now, $\bar z^{(1)}$ satisfies \eqref{eq:vertex_edge}-\eqref{eq:box2} and all $3$-clique inequalities for $G_1$, hence by induction it belongs to $P_{TC}(G_1)$, i.e. there exist covers $C_1,\dots, C_k$ of $G_1$ such that $\bar z^{(1)}$ is a convex combination of incidence vectors $\chi^{C_1},\dots, \chi^{C_k}$. By the rationality of $z$, hence of $\bar z^{(1)}$, by multiplying by a sufficiently large integer $N$ we have that there exists a multiset $\calC^{(1)}$ of  covers of $G_1$ such that \[
\bar z^{(1)} = \frac{1}{N} \sum_{C\in \calC^{(1)}} \chi^C.
\]
 Following the same idea, let $z^{(2)}$ be the projection of $z$ on the elements of $G_2$ and let $s_2=\max(0,1-x_r-y(\delta_{G_2}(r)))$, and $\bar z^{(2)}\in P_{TC}(G_2)$ be defined similarly as above, and again there exists a multiset $\calC^{(2)}$ of covers of $G_2$ such that \(
\bar z^{(2)} = \frac{1}{N} \sum_{C\in \calC^{(2)}} \chi^C,
\)
where we use the same $N$ as above since we can assume that $|\calC^{(1)}|=|\calC^{(2)}|=N$. Indeed, if they have different sizes 
$N_1$, $N_2$, we take multiple copies of each set until they have the same size equal to the least common multiple of $N_1, N_2$.
Notice that if $s_1=s_2=0$, i.e. if $z^{(i)}=\bar z^{(i)}\in P_{TC}(G_i)$ for $i=1,2$, then it is easy to match covers in $\calC^{(1)}$ with covers in $\calC^{(2)}$ and construct a convex combination for $z$, concluding the proof. This is indeed a special case of the proof that follows. Hence from now on we can assume that, without loss of generality, $s_1>0$.

The latter implies that the vertex inequality \eqref{eq:vertex_edge} at $r$ is tight at $\bar z^{(1)}$, hence it is tight for any $\chi^C$, that is, any $C$ contains either $r$ or exactly one of the edges of $\delta(r)$.

Let us define the following partition of $\calC^{(1)}$: \begin{itemize}
    \item $\calC^{(1)}_0$: covers not containing $r$;
    \item $\calC^{(1)}_s$: covers containing $r$, and also containing all neighbors of $r$ in $G_1$.
    \item $\calC^{(1)}_r$: all the others, i.e. covers containing $r$ and not every neighbor of $r$ in $G_1$.
\end{itemize}
 
And similarly define $\calC^{(2)}_0$, $\calC^{(2)}_s$, $\calC^{(2)}_r$. 
The following claim states that, if $\calC^{(1)}_s$, $\calC^{(2)}_s$ are large enough, then $z$ is a convex combination of incidence vectors of covers of $G$, concluding the proof of the theorem.
\begin{claim}\label{cl:cactus1} If $|\calC^{(1)}_s|\geq Ns_1$ and $|\calC^{(2)}_s|\geq Ns_2$, then there is a multiset $\calC$ of covers of $G$  with $|\calC|= N$ such that 
 \[z= \frac{1}{N} \sum_{C\in \calC} \chi^C.\]
\end{claim}

\begin{claimproof}
We show that $\mathcal{C}^{1}$ and $\mathcal{C}^{2}$, that have both size $N$, can be matched exactly to form the desired multiset $\mathcal{C}$. Notice that, for any $C_1 \in \mathcal{C}^{1}$ and $C_2 \in \mathcal{C}^{2}$, their union $C_1\cup C_2$ is a cover of $G$. However, to satisfy the equation of the claim we need to remove $r$ from some of such pairs: indeed, $\mathcal{C}^{i}$ contains exactly  $N\bar z^{(i)}_r$ sets containing $r$, for $i=1,2$, whereas $\calC$ needs to have $Nx_r$ of them, and we recall that $x_r=\bar z^{(i)}_r-s_i$ for $i=1,2$.

To ensure that the pairing can be made, let us first show that $s_1 +s_2\leq 1-x_r$. Indeed, if  $s_1=0$ or $s_2=0$, the inequality holds trivially. Otherwise, we must have $x_r+y(\delta_{G_i}(r))<1$ and $s_i=1-x_r-y(\delta_{G_i}(r))$ for $i=1,2$, then \[s_1+s_2 = 1-x_r-y(\delta(r)) +1-x_r \leq 1-x_r.\] 

Hence we have \[|\calC^{(1)}_0|=N(1-\bar z^{(1)}_r)=N(1-x_r-s_1)\geq Ns_2,\] and similarly one obtains $|\calC^{(2)}_0|\geq Ns_1$. We can then define the following pairs, which we group into four families: 

\begin{itemize}
\item $\mathcal{C}_{s_2}$ made by exactly $Ns_2$ sets of form $C_1\cup C_2-r$ with $C_1\in \calC^{(1)}_0$, $C_2\in \calC^{(2)}_s$;
 \item $\mathcal{C}_{s_1}$ made by exactly $Ns_1$ sets of form  $C_1\cup C_2-r$ with  $C_1\in \calC^{(1)}_s$, $C_2\in \calC^{(2)}_0$;
 \item this leaves $|\calC^{(1)}_0|-Ns_2 = N(1-x_r-s_1-s_2)$ sets in $\calC^{(1)}_0$ and the same number of sets, i.e. $|\calC^{(2)}_0|-Ns_1 = N(1-x_r-s_1-s_2)$, in $\calC^{(2)}_0$, hence we can match them exactly in pairs $C_1\cup C_2$, $C_1\in \calC^{(1)}_0$, $C_2 \in \calC^{(2)}_0$ to form $\mathcal{C}_{out}$;
 \item This also leaves $|\calC^{(1)}_s\cup \calC^{(1)}_r|-Ns_1=N x_r$ sets in $\calC^{(1)}$, and the same number in $\calC^{(2)}$, hence we can match them exactly in pairs $C_1\cup C_2$, $C_1\in \calC^{(1)}_s\cup \calC^{(1)}_r$, $C_2 \in \calC^{(2)}_s\cup \calC^{(2)}_r$ to form $\mathcal{C}_{in}$.
 \end{itemize}

Notice that, thanks to the definition of sets in $\calC_s^{(i)}$, the sets defined above are all covers of $G$. Let $\calC=\calC_{s_1}\cup \calC_{s_2} \cup \calC_{out}\cup \calC_{in}$.
It suffices to check that $\calC$ satisfies the claimed equation. For each component of $z$ besides $x_r$, this immediately follows from the definitions of $\calC^{(1)}$, $\calC^{(2)}$, and the fact that each set in each of them appears as part of exactly one pair of $\calC$.   For $x_r$, notice that exactly $Nx_r$ sets in $\calC$ contain $r$, completing the proof.

\end{claimproof}

 We remark that, if we are in the case where $s_2=0$, the hypothesis of the above Claim for $\calC^{(2)}_s$ is trivially satisfied. But, for at least one of $i=1,2$, we might have that $|\calC^{(i)}_s|$ is too small: the following Claim shows that in this case we can rearrange the covers in $\calC^{(i)}$ in order to obtain a large enough $\calC^{(i)}_s$.

\begin{claim}\label{cl:cactus2} The sets $\calC^{(i)}_0, \calC^{(i)}_r, \calC^{(i)}_s$ can always be modified so as to satisfy the assumptions of Claim~\ref{cl:cactus1}.
\end{claim}

\begin{claimproof}
    We fix $i=1$, the case $i=2$ follows the same argument. 
We prove the claim by contradiction.
First, notice that $|\calC^{(1)}_s|<Ns_1$ implies that  $|\calC^{(1)}_r|>Nx_r$ since $|C^{(1)}_{s}|+|C^{(1)}_{r}|=N(x_r+s_1)$ (where $x_r+s_1$ is the $r$-coordinate of $\bar{z}^{(1)}$). 
In particular, $\calC^{(1)}_r  \neq \emptyset$. Moreover, 
we can assume $s_1=1-x_r-y(\delta_{G_1}(r))>0$.

Recall that $\bar z^{(1)}$ can be expressed as a convex combination of covers of $\calC^{(1)}$. However, there may exist multiple admissible choices for $\calC^{(1)}$. In particular, if the elements of two covers of $\calC^{(1)}$ can be rearranged in the manner described in Lemma \ref{lem:swap}, then this configuration yields another valid multiset. 
 We assume by contradiction that, among all possible choices for such a valid multiset of size $N$, every choice gives $|\calC_s^{(1)}|<Ns_1$. Then, we choose $\calC^{(1)}$ that maximizes $|\calC_s^{(1)}|$, and, subject to that, maximizes the largest intersection of a cover of $\calC^{(1)}_r$ with $N_{G_1}(r)$, i.e. $\max_{C\in \calC^{(1)}_r} |C\cap N_{G_1}(r)|$ is as large as possible. Fix $\bar C$ achieving the maximum. We show next that $\calC^{(1)}$ can be modified into another valid multiset, while either increasing  $|\calC^{(1)}_s|$ or leaving it unchanged but increasing the intersection $|\bar C\cap N_{G_1}(r)|$, a contradiction. 

We now proceed with the proof. Notice that $N_{G_1}(r)$ consists of two types of vertices: 
\begin{enumerate}[i)]
    \item vertices of $G_1$ that do not lie on a triangle with $r$ (these are cut-vertices or leaves of $G_1$);
    \item vertices that lie on a triangle with $r$.
\end{enumerate}

 First, assume that there is a vertex $w\in N_{G_1}(r)$ of the type i), such that $w\not \in \bar C$. Notice that, by definition, $\bar z^{(1)}$ satisfies at equality the vertex inequality at $r$. This implies that all the covers $C\in \calC^{(1)}$ satisfy $|C\cap (\{r\}\cup \delta_{G_1}(r))|=1$. On the other hand, the edge inequality for $(r,w)$ is satisfied with slack at least $s_1$ by $\bar z^{(1)}$ (i.e. $\bar{x}^{(1)}_r+\bar{x}^{(1)}_w +\bar{y}^{(1)}_{(r,w)} -1 \geq s_1$). Indeed, the inequality is already satisfied by $z^{1}$ as restriction of $z$, and in $\bar{z}^{(1)}$ the variable $\bar{x}_r$ has been increased by $s_1$.
 This implies that there are sets in $\calC^{(1)}$ with positive slack, i.e. containing at least two elements among $r, w, (r,w)$. However, notice that no set in $\mathcal{C}^{(1)}$ can simultaneously contain both $r$ and $(r,w)$. Hence, any set in $\mathcal{C}^{(1)}$ with positive slack must contain $w$ together with exactly one of the elements $r$ or $(r,w)$: the slack must indeed be 1. This implies that there are at least $Ns_1$ such sets with positive slack, hence there is at least one of them, which we call $C'$, that does not belong to $\calC^{(1)}_s$. Assume first that $r, w\in C'$. 
 Let $B(r,w)$ denote the \emph{branch rooted at $r$ containing $w$}, namely, the subgraph of $G_1$ induced by $r$ and all vertices whose paths to $r$ pass through $w$.\footnote{Since we are going to use the concept of rooted branches multiple times, we stress that, since $G$ is a cactus, a vertex can be connected to $r$ through many paths, but either they all contain $w$ or none of them does. }
 Then, apply Lemma \ref{lem:swap} to $G_1$ with $h=r$, $H_2 =B(r,w)$, $A=\bar C$ and $B=C'$. Let $A', B'$ be the covers obtained by the Lemma. 
 Replacing $\bar C$ and $C'$ in $\calC^{(1)}$ with $A'$ and $B'$ yields another valid multiset of covers. Since $|A'\cap N_{G_1}(r)|> |\bar C \cap N_{G_1}(r)|$, this contradicts 
 the choice of $\calC^{(1)}$, as it produces either a larger value of $|\bar C\cap N_{G_1}(r)|$, or a larger value of $|\calC^{(1)}_s|$.
 Hence, we must have that $(r,w), w\in C'$. Let $\overline{B(w,r)}$ be the \textit{branch rooted at $w$ not containing $r$}, namely, the subgraph of $G_1$ induced by $w$ and all the vertices separated from $r$ by $w$. Then, apply again Lemma \ref{lem:swap} to $G_1$ with $h=w$, $H_2 = \overline{B(w,r)}$, 
 $A=C'$, and $B=\bar C$. We again reach a contradiction.

Hence, we must have that any vertex in $N_{G_1}(r)\setminus \bar C$ is of type ii). 
Since $\bar C\not\in \calC^{(1)}_s$, there must exist such a vertex. Thus, there is a triangle $T= \{r, u, v\}$ such that $|\{u,v\}\cap \bar C|<2$. Suppose that there exists another cover $C\in \calC^{(1)}_r$ with $u, v\in C$. 
Let $B(r,\{u,v\})$ denote the \textit{branch rooted at $r$ containing $u$ and $v$}, namely, the subgraph of $G_1$
induced by $r$ and all vertices whose paths to $r$ pass through either $u$ or $v$.
Applying Lemma \ref{lem:swap} to $G_1$ with $h=r$, $H_2 = B(r,\{u,v\})$, $A=\bar C$, and $B=C$, we obtain new covers $A'$ and $B'$. Replacing $C$ and $\bar{C}$ by $A'$ and $B'$ we obtain again a contradiction, since this choice yields another valid multiset in which $|A' \cap N_{G_1}(r)|>|\bar{C} \cap N_{G_1}(r)|.$ 
Therefore, we have $|\{u,v\}\cap C|<2$ for all $C\in \calC^{(1)}_r$. Henceforth, we may assume without loss of generality that $u\not\in \bar C$. 

Now, arguing as above, the edge inequality associated with $(r, u)$ is satisfied by $\bar z^{(1)}$ with slack at least $s_1$. Moreover, since every cover $C'\in \calC^{(1)}$ satisfies $|C'\cap \{r,(r,u)\}|\leq 1$, any cover $C'$ with positive slack for this inequality must contain $u$ together with exactly one of $r$ and $(r,u)$ with slack $1$. 
Hence, there are at least $Ns_1$ such covers. Since, by assumption $|\calC^{(1)}_s|<Ns_1$, there exists one such cover, say $C_u$, that does not belong to $\calC^{(1)}_s$. Applying the same argument to the edge inequality associated with $(r,v)$, we obtain a cover $C_v \notin \calC^{(1)}_s$ and such that $v\in C_v$, and $|C_v\cap \{r, (r,v)\}|=1$.

Suppose that $(r,u)\in C_u$. Since $C_u$ must cover the edge $(r,v)$, and neither $r$ nor $(r,v)$ belong to $C_u$, it follows that $v\in C_u$. Then, apply Lemma \ref{lem:swap} to $G_1$ with $h=u$, $H_2=\overline{B(u,r)}$
, $A=C_u$, and $B=\bar C$. Obtain a new cover $B'$ in $\calC^{(1)}_s$ or in $\calC^{(1)}_r$ with a larger intersection with $N_{G_1}(r)$, a contradiction in both cases. The same argument applies when $r\in C_u$ and $(u,v)\in C_u$. Therefore, we may assume that $r\in C_u$ and $(r,u), (u,v), v\not\in C_u$. 
From the previous argument we can also derive the following consequences:
 \begin{itemize}
     \item Suppose that there exists a cover $C'\in \calC^{(1)}$ such that $(u, v)\in C'$ and $C'\setminus \{(u,v)\}$ is still a cover (we say that $C'$ \emph{can afford} to lose the edge). Then, we may first move the edge $(u,v)$ from $C'$ to $C_u$, after which we obtain the same contradiction. Hence from now on we also assume that no cover $C'\in \calC^{(1)}$ can afford to lose edge $(u,v)$.
     \item We must have $v\in\bar C$. Indeed, otherwise
     $\bar C\cap\{u,v\}=\emptyset$, and applying
      Lemma~\ref{lem:swap} to $G_1$ with $h=r$, $H_2=B(r,\{u,v\})$, 
       $A=C_u$, and $B=\bar C$ would produce a cover with
      a larger intersection with $N_{G_1}(r)$, contradicting the choice of
      $\bar C$.
 \end{itemize}

Now, consider the triangle inequality associated with $T$. Since $\bar z^{(1)}$ satisfies it with slack at least $s_1$, there is a multiset $\calC_T\subset\calC^{(1)}$ such that for every $C\in \calC_T$, $|C\cap D(T)|\geq 3$. 
Moreover, the total sum of the slack is at least $Ns_1$, since
\[
\sum_{C\in \calC_T} (|C\cap D(T)|-2) \geq Ns_1.
\]
Suppose that one such set $C\in \calC_T \cap \calC^{(1)}_r$.  Hence, $r\in C$, $(r,u), (r,v)\not\in C$. Notice that, we can assume that $C= \bar C$, hence $u \notin C$. Otherwise apply Lemma \ref{lem:swap} with $h=r$, $H_ 2= B(r,\{u,v\})$, $A = C$ and $B = \bar C$. In particular, since $C \in \calC_{T}$ we must have $(u,v), v\in C$. 
Now, apply Lemma~\ref{lem:swap} once again to $G_1$ with $h=u$ and  $H_2= \overline{B(u,r)}$. 
However, the hypotheses of the lemma may fail to hold for $A=C_u$ and $B=C= \bar C$, since $u\notin B$ and $(u,v) \notin A$. In this case we also move edge $(u,v)$ from $B$ to $A$. Let $A'=A\setminus D(H_2)\cup (B\cap D(H_2))\cup\{(u,v)\}$, and $B'=B\setminus (D(H_2)\cup \{(u,v)\})\cup (A\cap D(H_2))$, then one easily checks that the thesis of Lemma \ref{lem:swap} still holds.  
 Hence, we can assume that $\calC_T\cap \calC^{(1)}_r=\emptyset$. Moreover, we claim that  $(u,v)\not\in C$. Otherwise we have the following cases:

\begin{itemize}
    \item $u, v\in C$, in which case $C$ can afford to lose edge $(u,v)$, a contradiction since no cover in $\calC^{(1)}$, in particular in $\calC_T$, can afford to lose edge $(u,v)$;
    \item $u\not\in C$, hence $(r, u)\in C$ (since $C$ covers the edge and $r\not\in C$) and $v\in C$ (since $C\in \calC_T$), in which case $C$ can afford to lose edge $(u,v)$, again a contradiction;
    \item $v\not\in C$, and we can reach a contradiction analogously.
\end{itemize} 
Hence for any $C\in \calC_T$, $(u,v)\not\in C$, implying that $u, v\in C$. Notice that this also implies that the slack associated with the triangle inequality is 1 for each such $C$, hence $|\calC_T|\geq Ns_1$ and there must be one such $C$ with $C\not\in \calC^{(1)}_s$, i.e. $C\in \calC^{(1)}_0$.
Finally, we are left with two cases:
\begin{itemize}
    \item $C\cap D(T)= \{(r,u), u,v\}$. In this case, we apply Lemma \ref{lem:swap} to $G_1$ with $h=u$, $H_2 = \overline{B(u,r)}$, 
    $A=C$ and $B=\bar C$ and obtain a contradiction similarly as above.
    \item $C\cap D(T)= \{(r,v), u,v\}$. Recall that $C_u$ satisfies $u\in C_u$, $v\not\in C_u$. Then we can apply Lemma \ref{lem:swap} twice. First, with $h=r$, $H_2 = B(r,\{u,v\})$, 
    $A=C_u$, $B=\bar C$ (so that now $\bar C$ misses $v$ instead of $u$). Then, we apply the lemma with $h=v$, $H_2 = \overline{B(v,r)}$, 
    $A=C$, $B=\bar C$. This gives again a contradiction. 
    \end{itemize}
    
\end{claimproof}

The two claims together show that $z$ is in $P_{TC}(G)$. This concludes the proof.
\end{proof}

The previous proof may contain redundant inequalities in the linear description. In view of Propositions \ref{vertex_constr}--\ref{prop:edge_facet} and Theorem \ref{clique_ineq}, we obtain a stronger result by providing a minimal linear description for this latter class of graphs, given by a set of facet-defining inequalities (which are unique since $P_{TC}(G)$ is full-dimensional).
\begin{cor}
Let $G=(V,E)$ be a triangular cactus graph with $|V| \geq 2$. Let $V_2:=\{v \in V \mid \deg(v) \geq 2\}$, $E_{2}:=\{e=(v,w) \in E \mid \deg(v),\deg(w) \geq 2\}$ and $\mathcal{T}_3:=\{T \ \text{triangle of $G$} \mid \deg(v) \geq 3, \forall v \in V(T)\}$. Then, a \textbf{minimal complete} linear description is given by:
\begin{align*}
P_{TC}(G):= \bigg\{(x,y): \quad &x_v+\sum\limits_{e \in \delta(v)}y_e \geq 1, \qquad & \forall v \in V, \\
&x_v + x_w+y_e \geq 1, \qquad & \forall e=(v,w) \in E_2, \\
& \sum\limits_{v \in V(T)}x_v + \sum\limits_{e \in E(T)}y_e \geq 2, \qquad & \forall T \in \mathcal{T}_3, \\
&z_a \geq 0, \qquad & \forall a \in V_2 \cup E_2 ,\\
&z_a \leq 1, \qquad & \forall a \in V \cup E \bigg \}.
\end{align*}
\end{cor}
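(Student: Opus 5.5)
The plan is to derive the corollary from the preceding theorem by discarding redundant inequalities, and then to use the facet results of this section to show that every inequality that remains is needed. Since $P_{TC}(G)$ is full-dimensional when $G$ is connected with $|V|\ge 2$ (Proposition~\ref{vertex_constr}), its minimal linear description is unique and consists of exactly one inequality per facet. It therefore suffices to prove two things: (a) every inequality listed in the theorem but omitted from the corollary is implied by the listed ones; and (b) every listed inequality defines a facet, with distinct inequalities defining distinct facets. If $G$ is disconnected, I will apply the argument componentwise using the product structure, assuming components have at least two vertices.

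For (a), Proposition~\ref{prop:edge_facet} already supplies the derivations for edge and box constraints. If $e=(v,w)$ has a leaf endpoint $v$, the edge inequality is the sum of the vertex inequality at $v$ and $x_w\ge 0$. If $\deg(w)=1$ as well, I instead use $x_w\ge0$ written as the vertex inequality at $w$ plus $-y_e\ge-1$. The bounds $y_e\ge0$ and $x_u\ge0$ at leaves are derived similarly from vertex inequalities and upper bounds.

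The new part concerns triangles $T=\{u,v,w\}$ with some vertex, say $u$, of degree $2$. In that case $\delta(u)=\{(u,v),(u,w)\}\subseteq E(T)$. The vertex inequality at $u$ gives $x_u+y_{uv}+y_{uw}\ge 1$, and the edge inequality at $(v,w)$ gives $x_v+x_w+y_{vw}\ge1$. Summing these yields exactly the $3$-clique inequality for $T$. The edge $(v,w)$ is in $E_2$ because $v,w$ lie on a triangle, so its edge inequality is present in the system.

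For (b), I will use the following results:
\begin{itemize}
\item Proposition~\ref{edge_constr} covers the vertex inequalities and all bounds $z_a\le1$.
\item Proposition~\ref{prop:edge_facet} covers the edge inequalities for $e\in E_2$ and $z_a\ge0$ for $a\in V_2\cup E_2$.
\item Theorem~\ref{clique_ineq} covers triangles in $\mathcal T_3$. In a triangular cactus, a triangle vertex of degree at least $3$ has a neighbour outside the triangle, since the triangle is induced and its degree within it is only $2$.
\end{itemize}

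The main step needing care is checking that distinct listed inequalities are not positive multiples of one another, so that they give distinct facets. Their supports differ in all cases except possibly the small graphs $K_2$ and $K_3$. For $K_2$, $E_2$ is empty and the vertex inequalities have distinct supports. For $K_3$, $\mathcal T_3$ is empty. So minimality follows, and together with the preceding theorem this gives a complete and minimal description.
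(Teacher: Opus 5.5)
Your proposal is correct and follows the same route as the paper. The paper likewise discards the redundant inequalities from the preceding theorem, invokes Propositions~\ref{vertex_constr}--\ref{prop:edge_facet} and Theorem~\ref{clique_ineq} for facetness, and relies on uniqueness of the facet description of a full-dimensional polytope; you additionally make explicit the redundancy derivations it leaves implicit, notably the triangle inequality for a triangle with a degree-$2$ vertex as the sum of that vertex's inequality and the opposite edge inequality. One small correction: $z_a\ge 0$ and $z_a\le 1$ do share a support (they are not positive multiples only because of opposite signs), while every other pair of listed inequalities has different supports in every graph, so your special treatment of $K_2$ and $K_3$ is unnecessary.
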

We stress that triangular cactus graphs are the only class of cactus graphs for which we obtain a complete polyhedral characterization where all the facets have been fully identified. This result cannot be extended to arbitrary cactus graphs. To illustrate this limitation, consider the one-sum of two cycles of length $5$.
\begin{figure}[h]
\begin{center}
\begin{tikzpicture}[
    scale=1.0,
    vertex/.style={
        circle,
        draw,
        fill=black,
        minimum size=3mm,
        inner sep=0pt
    }
]

\foreach \i/\ang in {1/0,2/72,3/144,4/216,5/288}
{\node[vertex,label=\ang:$\i$]
      (L\i) at ({-2+2*cos(\ang)},{2*sin(\ang)}) {};
}
\draw (L1)--(L2)--(L3)--(L4)--(L5)--(L1);

\foreach \i/\ang in {6/72,7/144,8/216,9/288}
{
    \node[vertex,label=\ang:$\i$]
      (R\i) at ({2+2*cos(180-\ang)},{2*sin(180-\ang)}) {};
}
\draw (L1)--(R6)--(R7)--(R8)--(R9)--(L1);

\end{tikzpicture}
\end{center}
\caption{A cactus graph consisting of two cycles of length $5$.}\label{cactus5}
\end{figure}
The cactus graph depicted in Figure \ref{cactus5} exhibits a structure that differs from the types of inequalities considered previously. In particular, one can verify that the following inequality is facet-defining for the associated polytope:
\[
x_1 + 2x_2 + x_3 + x_4 + x_5 + y_{12} + y_{15} + y_{23} + y_{34} + y_{45} \geq 4.
\]
Observe that none of the previously identified facets provides a structural or geometric characterization of this inequality. This indicates that the situation becomes more intricate when dealing with cactus graphs of varying sizes.

\medskip

\section{Conclusion}
In this paper, we introduced the STCP and investigated its structural, algorithmic, and polyhedral properties. For the unweighted problem, we related strong total covers to dominating sets
of incidence graphs and proved a Gallai-type identity with maximum induced
matchings. These equivalences yield NP-hardness and identify several graph
classes on which the STCP is polynomial-time solvable.
For the weighted version, we showed that the problem is polynomially solvable on complete bipartite graphs, while it remains NP-hard on complete graphs, even if one restricts the weights to be binary. Moreover, we obtained a primal-dual \(2\)-approximation algorithm, whose analysis
also shows that the basic covering relaxation \eqref{lp_tc} has integrality gap equal to \(2\). Finally, we initiated the polyhedral study of STCP by characterizing some valid and facet-defining inequalities and by providing a complete linear description for trees, cycles and triangular cactus graphs.

Several open questions arise from our results. On the algorithmic side, the duality relationship with the (weighted) total matching problem naturally leads to a demand version of the STCP, where each element \(a\in D\) has a nonnegative integer demand \(d_a\) and must be covered at least \(d_a\) times. How hard is this problem? Although the primal-dual algorithm developed here does not seem to extend to this setting, we believe that related LP-based techniques should attain a constant-factor approximation, possibly with factor 2.

On the polyhedral side, the next natural step would be to understand cactus graphs with longer cycle blocks, where the example above already suggests the appearance of inequalities with larger coefficients. A more structural question is:
how do facets behave under \(1\)-sums? Answering this might also contribute to the study of dominating set polytopes on cactus graphs. At the other extreme, the strong total cover polytope of complete graphs is not yet understood: while our results show that all the subcliques give facet-defining inequalities, computational results show that we are far from obtaining a complete description. Finding further classes of facets, and more generally understanding when inequalities supported on subgraphs remain valid or facet-defining in the ambient graph, are natural next questions.


%


\smallskip 

\noindent  {\bf Acknowledgments.} This research benefited from the support of the FMJH Program Gaspard Monge for optimization and operations research and their interactions with data science, and from the support of the SID project ``Network optimization problems under extra constraints'' under the BIRD 2024 program sponsored by the University of Padua. 

\bibliographystyle{apalike}
\bibliography{references}

\end{document}